\newtheorem{theorem}{Theorem}[section]
\newtheorem*{thmintro}{Theorem}
\newtheorem*{corintro}{Corollary}
\newtheorem{lemma}[theorem]{Lemma}
\newtheorem{proposition}[theorem]{Proposition}
\newtheorem{corollary}[theorem]{Corollary}
\theoremstyle{definition}\newtheorem{definition}{Definition}
\newtheorem*{example}{Example}
\theoremstyle{remark}
\newenvironment{romanlist}
        {\begin{enumerate}
        }
        {\end{enumerate}}
\newcounter{ticklistc}
\newcommand{\Z}{\mathbb Z}
\newcommand{\R}{\mathbb R}
\newcommand{\C}{\mathbb C}
\newcommand{\E}{\mathcal E}
\newcommand{\EE}{\mathbb E}
\newcommand{\Di}{\slashed{D}}
\renewcommand{\L}{\mathcal L}
\newcommand{\Hom}{\mathrm{Hom}}
\newcommand{\G}{\Gamma}
\newcommand{\SI}{\Sigma}
\newcommand{\Arf}{\mathrm{Arf}}
\newcommand{\KW}{\mathit{KW}}
\renewcommand{\Pr}{\mathit{Pr}}
\renewcommand{\Im}{\mathrm{Im}}
\renewcommand{\Re}{\mathrm{Re}}
\newcommand{\rot}{\mathit{rot}}
\begin{document}

\title{Kac-Ward operators, Kasteleyn operators, and s-holomorphicity on arbitrary surface graphs}

\author{David Cimasoni}
\address{Section de math\'ematiques, 2-4 rue du Li\`evre, 1211 Gen\`eve 4, Switzerland}
\email{David.Cimasoni@unige.ch}
\thanks{This research was partially supported by the Swiss NSF}

\subjclass[2000]{82B20, 57M15}  
\keywords{Kac-Ward operator, Kasteleyn operator, s-holomorphic functions, Ising model}

\begin{abstract}
The conformal invariance and universality results of Chelkak-Smirnov on the two-dimensional Ising model hold for isoradial planar graphs with critical
weights. Motivated by the problem of extending these results to a wider class of graphs, we define a generalized notion of s-holomorphicity for functions
on arbitrary weighted surface graphs. We then give three criteria for s-holomorphicity involving the Kac-Ward, Kasteleyn, and
discrete Dirac operators, respectively. Also, we show that some crucial results known to hold in the planar isoradial case extend
to this general setting: in particular, spin-Ising fermionic observables are s-holomorphic, and it is possible to define a discrete version of the integral
of the square of an s-holomorphic function. Along the way, we obtain a duality result for Kac-Ward determinants on arbitrary weighted surface graphs.
\end{abstract}

\maketitle


\pagestyle{myheadings}
\markboth{D. Cimasoni}{Kac-Ward operators, Kasteleyn operators, and s-holomorphicity on arbitrary surface graphs}


\section{Introduction}

\subsection{Motivation}

The Ising model is certainly one of the most famous models in statistical physics. It was introduced by Lenz in 1920~\cite{Lenz} as an attempt to understand Curie's temperature
for ferromagnets. It can be defined as follows.
Let $\Gamma$ be a finite graph with vertex set $V(\Gamma)$ and edge set $E(\Gamma)$. A {\em spin configuration} on $\Gamma$ is an element $\sigma$ of $\{-1,+1\}^{V(\Gamma)}$.
Given a positive edge weight system $J=(J_e)_{e\in E(\Gamma)}$ on $\Gamma$, the {\em energy} of such a spin configuration $\sigma$ is defined by
\[
\mathcal{H}(\sigma)=-\sum_{e=\{u,v\}\in E(\Gamma)}J_e\sigma_u\sigma_v.
\]
Fixing an {\em inverse temperature} $\beta\ge 0$ determines a probability measure on the set $\Omega(\Gamma)$ of spin configurations by
\[
\mu_{\Gamma,\beta}(\sigma)=\frac{e^{-\beta\mathcal{H}(\sigma)}}{Z_\beta^J(\Gamma)},
\]
where the normalization constant
\[
Z_\beta^J(\Gamma)=\sum_{\sigma\in\Omega(\Gamma)}e^{-\beta\mathcal{H}(\sigma)}
\]
is called the {\em partition function\/} of the {\em Ising model on $\Gamma$ with coupling constants $J$\/}. The name of the model comes from a student of Lenz, Ernst Ising, who proved in his PhD Thesis the absence of phase transition in dimension one, i.e. in the case of $\Gamma=\mathbb{Z}$ and $J_e$ independent of $e$. He also conjectured the same behavior in higher dimensions, a conjecture later disproved by Peierls~\cite{Pei}. Using their celebrated duality argument, Kramers and Wannier~\cite{K-W} then gave a heuristic
derivation of the critical temperature in the case of the square lattice $\Gamma=\mathbb{Z}^2$. 

The rigorous proof of this later result by Onsager in 1944 together with his exact computation of the partition function~\cite{Ons} led to an explosion of activity in the field.
In particular, Kac and Ward~\cite{KW} tried to find a more combinatorial approach to the results of Onsager. They defined a matrix $\KW(\G)$ with rows and columns indexed by the set
$\EE$ of oriented edges of the square lattice $\G$, whose determinant coincides with the square of $Z^\mathit{Ising}(\G)$,
the high temperature expansion of the Ising partition function on $\G$. However, several arguments in~\cite{KW} are of heuristic nature, and some key
topological statement turned out not to hold~\cite{She}. Since the technicalities raised by a rigorous proof of this equality seemed formidable, the focus shifted to finding
combinatorial methods not involving directly this {\em Kac-Ward matrix\/}. This was achieved independently by Hurst-Green~\cite{H-G}, Kasteleyn~\cite{Ka2} and Fisher~\cite{Fi2}:
they related $Z^\mathit{Ising}(\G)$ with the dimer partition function $Z^\mathit{dimer}(F_\G)$ on an associated {\em Fisher graph\/} $F_\G$, and found a skew-symmetric adjacency matrix $K(F_\G)$
-- the associated {\em Kasteleyn matrix\/} -- whose Pfaffian gives $Z^\mathit{dimer}(F_\G)$. This method was later extended by Kasteleyn~\cite{Ka3} to any planar graph,
and by various authors to the general case of surface graphs~\cite{Tes,G-L,C-RI}. Note that the first direct combinatorial proof of the Kac-Ward formula for any planar graph was only obtained in
1999 by Dolbilin {\em et al.\/}~\cite{DZMSS}, and recently extended by the author to surface graphs~\cite{Cim2}.

There are several issues with this {\em Fisher correspondence\/} $\G\mapsto F_\G$, the main one being that it does not preserve crucial geometric and combinatorial properties
of the graph. For example, if $\G$ is an {\em isoradial\/} graph, i.e. if each face is inscribed into a circle of fixed radius, then $F_\G$ will not be isoradial in general. Also, even if $\G$ is {\em bipartite\/},
i.e. if its vertices split up into two sets $B(\G)\cup W(\G)$ such that the edges never link vertices in the same set, $F_\G$ will not be bipartite.
This is a problem, since many remarkable statements about the dimer model are known to hold only for bipartite~\cite{KOS} or isoradial graphs~\cite{Ken}... and yet, they did
seem to transfer to the Ising model~\cite{BdT1,BdT2}. The explanation to this mystery came in the recent paper of Dub\'edat~\cite{Dub}: he (re)discovered another mapping $\G\mapsto C_\G$~\cite{W-L},
where this time, $C_\G$ is always bipartite, and always isoradial if $\G$ is. This new mapping therefore permits the transfer of the whole power of the dimer technology to the Ising model.

As apparent from its definition, the Ising model can be studied on an arbitrary {\em abstract\/} weighted graph $(\G,J)$. However, as explained above, the effective computation of
its partition function requires the choice of an embedding of $\G$ in a surface. Furthermore, at criticality, the coupling constants $J$ are expected to correspond to a natural embedding of $\G$ such that the model exhibits properties of conformal invariance and universality at the scaling limit. This can be stated in a precise way in (at least) two different settings: the isoradial and biperiodic cases.
If the graph is isoradially embedded in the plane, then there are natural coupling constants such that the model is
critical at inverse temperature $\beta=1$. (The first rigorous proof of this fact was obtained recently by Lis~\cite{Lis2} using Kac-Ward matrices.) Note that graphs can be isoradially embedded in flat surfaces of arbitrary genus (with cone-type singularities), and that such a flat metric defines a conformal structure on the surface
(see~\cite{Tro}). Therefore, it does make sense to talk about critical embeddings of graphs in surfaces of arbitrary topology (a situation first considered by Mercat~\cite{Mer}),
and to conjecture conformal invariance and universality of the model at the scaling limit. For example, it is believed that at the critical temperature and in the scaling limit,
the Ising partition function satisfies
\[
\log Z_{\beta_c}^J(\Gamma)\simeq f|V(\G)| + h\log|V(\G)| +\mathsf{fsc},
\]
where $f$ is the free energy, $h$ depends on the topology and on the singularities of the metric on $\Sigma$, and $\mathsf{fsc}$ is a universal term which only depends on the conformal
structure on $\Sigma$. (We refer to~\cite{CSM2} for numerical evidence in the case of triangular lattices embedded in a genus 2 surface.)
If the graph is planar and biperiodic, i.e. invariant under the action of a lattice $L\simeq\mathbb{Z}^2$, then the critical temperature can be determined~\cite{Li2,C-D}.
(Note that the second reference uses Kac-Ward matrices once again, as well as the correspondence $\G\mapsto C_\G$ mentioned above.) Also, the coupling constants determine a natural
conformal parameter for the corresponding torus $\C/L$, and once again, it is natural to conjecture conformal invariance and universality at the scaling limit.
(See Corollary~\ref{cor:tau} below where we use recent results on the dimer model~\cite{KSW} to check that $\log Z_\beta^J(\Gamma)$ behaves in the expected way in this setting.)

In the planar isoradial case, a milestone was reached by Smirnov and coauthors in a recent series of papers (see the review~\cite{DS11} and references therein).
In particular, Chelkak and Smirnov~\cite{CS09} introduced {\em fermionic observables\/} for the Ising model on any planar critical isoradial graph, and showed that on bounded domains
with appropriate boundary conditions, they converge to universal and conformally invariant limits.
But why do these authors restrict themselves to planar isoradial graphs, while the universality and conformal invariance are expected to hold at criticality on more general graphs?
The problem is that one of the crucial ingredients is a discrete theory of holomorphic functions, and isoradial graphs form
the widest class of graphs on which such a theory works sufficiently well~\cite{CS11}. For example, a reasonable discrete version of the $\bar\partial$-operator,
whose kernel consists of so-called {\em discrete holomorphic functions\/}, does not seem to exist on non-isoradial graphs. Moreover, this notion of
discrete holomorphicity does not suffice for the Ising model: Chelkak and Smirnov therefore introduced a stronger notion, known as {\em spin\/},
{\em strong\/}, or simply {\em s-holomorphicity\/}. Among the numerous technical results obtained in~\cite{CS09}, we quote the following:
\begin{romanlist}
\item{the fact that the fermionic observables are s-holomorphic;}
\item{the possibility to define a discrete version of $h(z)=\Im\int f(z)^2\mathit{dz}$ for s-holomorphic functions;}
\item{the subharmonicity and superharmonicity of $h$ on the original graph and on its dual.}
\end{romanlist}

In the present paper, we show that, unlike the original notion of discrete holomorphicity, the one of s-holomorphicity does extend to the most general case, that is, to arbitrary weighted graphs embedded in orientable surfaces. Obviously, the whole theory does not extend, but a surprisingly big amount does.
Let us also point out that our results do not simply consist in known facts extended from the isoradial to the general case: several statements where previously
unknown even in the isoradial case. Therefore, it is our hope that this article will be of interest even to the reader that merely wishes to understand
why, in the isoradial case, s-holomorphic functions and fermionic observables are natural objects.
Furthermore, our results are particularly appealing not only in the planar isoradial case, but also in the critical biperiodic case and for graphs isoradially embedded
in arbitrary surfaces. Therefore, we hope to be able to use them to eventually prove conformal invariance and universality results in these more general settings.

\subsection{Statement of the results}

Our results hold for an arbitrary graph $\G$ with edge weights $x\in[0,1]^E$ embedded in an orientable surface $\SI$.
At this level of generality, a geometric tool known as a {\em spin structure\/} is needed, which basically consists in a vector field $\lambda$ on $\SI$ with zeroes of even index in $\SI\setminus\G$.
This allows us to associate to each oriented edge $e\in\EE$ an ``argument'' $a_e$ measured with respect to $\lambda$.
The cases to keep in mind are the planar and toric ones, where $\lambda$ can be chosen constant;
the argument $a_e$ then simply gives the direction of $e$, in the sense that the oriented edge $e$ points in the direction of $\exp(ia_e)$ with respect to $\lambda$.

Let us parametrize the edge weight $x_e\in [0,1]$ by $x_e=\tan(\theta_e/2)$. Then, we say that a function $F$ defined on the set
$\diamondsuit=\{z_e\}_{e\in E}$ of middle points of the edges of $\G$ is {\em s-holomorphic\/} around the vertex $v$ of $\G$ if, for any oriented edge
$e\in\EE$ originating at $v$,
\[
\Pr\left(F(z_e);\left[i\exp(i(a_e+\theta_e))\right]^{-\frac{1}{2}}\right)=
\Pr\left(F(z_{e'});\left[i\exp(i(a_{e'}-\theta_{e'}))\right]^{-\frac{1}{2}}\right)\exp\left(\textstyle{\frac{i}{2}(\beta_e-\theta_e-\theta_{e'}})\right),
\]
where $\Pr(-;u)$ denotes the orthogonal projection onto $u\cdot\R$, $e'\in\EE$ is the edge following $e$ in counterclockwise order around $v$, and
$\beta_e$ is the oriented angle from $e$ to $e'$. (See Definition~\ref{def:s} for details.) Note that in the critical isoradial case, $\beta_e-\theta_e-\theta_{e'}$ vanishes and the
equality above defines the usual notion of s-holomorphicity.

Here is one of our main results. (See Theorem~\ref{thm:s} and Corollary~\ref{cor:s} for the full statement.)

\begin{thmintro}
There are explicit $\R$-linear injective maps $S\colon\C^\diamondsuit\to\C^\EE$ and $T\colon\C^\diamondsuit\to\C^{B(C_\G)}$ such that,
given $F\in\C^\diamondsuit$, the following are equivalent:
\begin{romanlist}
\item{$\exp\left(i\textstyle{\frac{\pi}{4}}\right)F$ is s-holomorphic.}
\item{$S(F)$ lies in the kernel of the Kac-Ward operator on $\G$.}
\item{$T(F)$ lies in the kernel of the Kasteleyn operator on $C_\G$.}
\end{romanlist}
Furthermore, if $\G$ is isoradially embedded with critical weights, then there is an explicit $\R$-linear injective map $T'\colon\C^\diamondsuit\to\C^{B(C_\G)}$
such that these conditions are equivalent to:
\begin{romanlist}{\setcounter{enumi}{3}}
\item{$T'(F)$ lies in the kernel of the discrete $\overline{\partial}$-operator on $C_\G$.}
\end{romanlist}
\end{thmintro}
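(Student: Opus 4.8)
The plan is to read the three (resp.\ four) conditions as incarnations of one and the same propagation rule around the vertices of $\G$, and to construct $S$, $T$, $T'$ so that each transports one kernel equation into the next. The heart of the matter is the equivalence $\i\Leftrightarrow\ii$; once this is in place, reaching the Kasteleyn operator is a question of transporting the Kac--Ward equation through the correspondence $\G\mapsto C_\G$, and reaching $\overline{\partial}$ is, in the critical isoradial case, Kenyon's identification of the critical Kasteleyn weights with a discrete Cauchy--Riemann operator.

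For $\i\Leftrightarrow\ii$ I would first linearize the projection: for a unit vector $u$ one has $\Pr(w;u)=\tfrac{1}{2}(w+u^{2}\bar w)$, so the defining identity of s-holomorphicity becomes, for each oriented edge $e$ out of $v$ and its counterclockwise successor $e'$, a single $\R$-linear relation among $F(z_e)$ and $F(z_{e'})$ carrying the phases $a_e,\theta_e,\beta_e$ through the square root $\eta_e=[i\exp(i(a_e+\theta_e))]^{-1/2}$; write $\ell(e)=\eta_e\cdot\R$ for the associated line. A key elementary observation is that reversing an edge multiplies $\eta_e$ by $\pm i$, so $\ell(\bar e)$ is orthogonal to $\ell(e)$ and the two projections $\Pr(F(z_e);\eta_e)$ and $\Pr(F(z_e);\eta_{\bar e})$ sum to $F(z_e)$. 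I would then \emph{define} $S(F)(e)=\Pr(F(z_e);\eta_e)\in\ell(e)\subset\C$; injectivity of $S$ is immediate since $F(z_e)=S(F)(e)+S(F)(\bar e)$ recovers $F$. The substance is then a local computation at each vertex: using this orthogonal splitting, the cyclic family of s-holomorphic identities at $v$ is equivalent to the Kac--Ward kernel equations $S(F)=\Lambda\,S(F)$, where $\KW=\mathrm{Id}-\Lambda$ and the transition coefficient of $\Lambda$---an angle factor together with the weights via $x_e=\tan(\theta_e/2)$---reproduces the phase $\exp(\tfrac{i}{2}(\beta_e-\theta_e-\theta_{e'}))$. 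The global factor $\exp(i\tfrac{\pi}{4})$ in $\i$ is precisely what absorbs the $i=\exp(i\tfrac{\pi}{2})$ sitting inside the brackets.

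For $\ii\Leftrightarrow\iii$ I would feed the Kac--Ward equation through the correspondence $\G\mapsto C_\G$. Under it the black vertices $B(C_\G)$ are in explicit bijection with the oriented edges $\EE$ of $\G$, and the Kasteleyn operator on the bipartite graph $C_\G$ is the block $K\colon\C^{B(C_\G)}\to\C^{W(C_\G)}$ of its Kasteleyn matrix. The task is to produce an invertible diagonal gauge $D$ and an identification of index sets carrying $\ker\KW$ isomorphically onto $\ker K$; this operator-level comparison is the mechanism underlying the duality for Kac--Ward determinants obtained earlier, which fixes the angle and sign data on both sides. Setting $T=D\circ S$ (identifying $\C^\EE$ with $\C^{B(C_\G)}$ via the bijection) then makes $T(F)\in\ker K$ equivalent to $S(F)\in\ker\KW$, and $T$ inherits injectivity from $S$ since $D$ is invertible.

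The remaining equivalence, valid when $\G$ is isoradially embedded with critical weights, is a pure change of gauge: the Kasteleyn weights induced on $C_\G$ are then Kenyon's critical weights, for which $K$ coincides with the discrete $\overline{\partial}$-operator up to multiplication by a nonvanishing diagonal; composing $T$ with that diagonal yields $T'$ with the same kernel, giving (iv). I expect the main obstacle to lie in $\i\Leftrightarrow\ii$, specifically in fixing the conventions for the spin structure $\lambda$, the arguments $a_e$, and the oriented angles $\beta_e$ so that the square-root trivializations $\eta_e$ are globally coherent and the $\R$-linear projection identities assemble \emph{exactly} into the $\C$-linear Kac--Ward system with no residual sign or factor of $\sqrt{-1}$; making these phases close up around every vertex, and then survive transport to $C_\G$, is the delicate part.
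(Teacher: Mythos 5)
Your overall architecture coincides with the paper's: (i)$\Leftrightarrow$(ii) is proved as a local statement at each vertex (Proposition~\ref{prop:ker}), (ii)$\Leftrightarrow$(iii) is obtained by transporting through the Kac--Ward/Kasteleyn correspondence (Theorem~\ref{thm:corr}), and (iii)$\Leftrightarrow$(iv) is the diagonal gauge of Proposition~\ref{prop:Kast-D}. However, the map on which you build everything is wrong, and the local computation you defer would not close. The paper's map is $(SF)(e)=\sin(\theta_e/2)\Pr\left(F(z_e);\ell(e)\right)$ with the $\theta$-\emph{independent} line $\ell(e)=\exp\left(-\frac{i}{2}a_e\right)\cdot\R$, whereas yours projects onto the $\theta$-dependent line $\eta_e\cdot\R$, $\eta_e=[i\exp(i(a_e+\theta_e))]^{-1/2}$, with no weight; both discrepancies are fatal. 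Indeed, for any $f\in\C^\EE$, applying the vertex-local invertible operator $(I-qR)x^{-1}$ shows that $(\KW f)(\bar e)=0$ for all $e\in\EE_v$ is equivalent to the system
\[
x_e^{-1}f(\bar e)+if(e)=\exp\left(\tfrac{i}{2}\beta_e\right)\left(x_{e'}^{-1}f(\overline{e'})-if(e')\right),\qquad e\in\EE_v,\quad e'=R(e).
\]
With the paper's $S$, the left-hand side lies on the line $i\ell(e)$ and the right-hand side on $\exp(\frac{i}{2}\beta_e)\,i\ell(e')$, and these lines \emph{coincide} because $a_{e'}=a_e+\beta_e$; each complex equation therefore collapses to a single real constraint, and the weight $\sin(\theta_e/2)$ is exactly what turns the left-hand side into the genuine projection $\Pr\left(e^{i\theta_e/2}F(z_e);\ell(\bar e)\right)$, so that this real constraint is the s-holomorphicity identity. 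With your $\eta_e$, writing $F(z_e)=s_e\eta_e+t_e\,i\eta_e$ with $s_e,t_e\in\R$, the two sides become $(s_e+x_e^{-1}t_e)\,i\eta_e$ and $\exp(\frac{i}{2}\beta_e)(x_{e'}^{-1}t_{e'}-s_{e'})\,i\eta_{e'}$, and the two lines now differ by a rotation of angle $\frac{1}{2}(\theta_e-\theta_{e'})$. Whenever $\theta_e\neq\theta_{e'}$ they are distinct, so the equation forces \emph{both} sides to vanish; chaining these constraints around a vertex whose adjacent angles are not all equal forces $F$ itself to vanish there. Thus with your $S$, condition (ii) is strictly stronger than s-holomorphicity (which admits the nonzero observables of Corollary~\ref{cor:s-holo}), and the equivalence (i)$\Leftrightarrow$(ii) fails except when all the $\theta_e$ are equal.

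A second gap: $T$ cannot be a diagonal gauge composed with $S$. In Theorem~\ref{thm:corr} the map identifying $\C^B$ with the relevant subspace of $\C^\EE$ is $(I-qR)\circ D^{-1/2}\circ\psi_B$, and $I-qR$ is vertex-local but \emph{not} diagonal; its inverse, $2(I-qR)_v^{-1}=I+qR+\dots+(qR)^{\deg(v)-1}$, mixes all the edges around a vertex. This is precisely why the paper's explicit $T$ is the sum $(TF)(b)=\sum_{e\in\EE_v}\Re\left(\varepsilon_b(e)D_e^{1/2}\sin(\theta_e/2)F(z_e)\right)$ over all edges at the vertex, not a pointwise rescaling of $S(F)$. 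Your last step is sound in spirit -- Proposition~\ref{prop:Kast-D} is the diagonal-gauge statement you invoke -- but note that the gauge there produces a discrete spin structure $\varphi_\omega$, and that all the kernel conditions must be read inside the real subspaces ($\L$, $\R^B$, $\L_B=\exp(\frac{i}{2}\theta_B)(\R^B)$) which the operators preserve; this real/complex bookkeeping, which you gloss over, is what allows the merely $\R$-linear maps $S,T,T'$ to interact correctly with the $\C$-linear operators.
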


Note that in the planar isoradial case, the equivalence of $(i)$ and $(ii)$ is due to Lis~\cite{Lis}.
Also, the equivalence of $(i)$ and $(iii)$ can be understood as a generalization of the ``propagation equation'' of~\cite[Section 3.2]{CS09}
(see also~\cite[Section 4.3]{Mer} and~\cite[Section 4.2]{Dub}).

Our proof of the equivalence of $(ii)$, $(iii)$ and $(iv)$ relies on explicit relations between the corresponding operators (Section~\ref{sec:rel}).
We believe that some of these results are of independent interest. In particular, Theorem~\ref{thm:corr}, which relates the Kac-Ward operator on $\G$
with the Kasteleyn operator on $C_\G$, also immediately implies the following generalized Kramers-Wannier duality.
(The original Kramers-Wannier duality corresponds to the planar case.)

\begin{corintro}
For any weighted graph $(\G,x)\subset\SI$ and any character $\varphi\colon\pi_1(\SI)\to\C^*$,
\[
2^{|V(\G)|}\prod_{e\in E(\G)}(1+x_e)^{-1} \det(\KW^\varphi(\G,x))=2^{|V(\G^*)|}\prod_{e\in E(\G)}(1+x^*_e)^{-1} \det(\KW^\varphi(\G^*,x^*)),
\]
where the weights $x$ and $x^*$ are related by $x+x^*+xx^*=1$.
\end{corintro}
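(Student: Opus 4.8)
The plan is to deduce the duality directly from Theorem~\ref{thm:corr}, which identifies the Kac-Ward operator on a weighted graph with the Kasteleyn operator on the associated bipartite graph $C_\G$. Applying that theorem to $(\G,x)$ and taking determinants, the explicit scalar factors relating the two operators combine to show that the left-hand side of the corollary is equal, up to sign, to $\det(K^\varphi(C_\G))$, the determinant of the $\varphi$-twisted Kasteleyn operator on $C_\G$. Applying the same theorem to the dual weighted graph $(\G^*,x^*)\subset\SI$ shows that the right-hand side equals, up to sign, $\det(K^\varphi(C_{\G^*}))$. Since $\G$ and $\G^*$ share the same edge set, both products in the statement run over the same index set, and the corollary is reduced to the single identity $\det(K^\varphi(C_\G))=\pm\det(K^\varphi(C_{\G^*}))$.

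The heart of the matter is then the \emph{self-duality} of the construction $\G\mapsto C_\G$. I would first rewrite the weight relation $x+x^*+xx^*=1$ in the equivalent form $(1+x_e)(1+x_e^*)=2$, valid for every edge $e$, which is the shape in which it governs the induced Kasteleyn weights. The claim is that $C_\G$ and $C_{\G^*}$ coincide as graphs embedded in $\SI$, possibly after interchanging the two colour classes of the bipartition, and that this identification is compatible with the vector field $\lambda$ and the arguments $a_e$, and with the twist by $\varphi$. Granting this, the operators $K^\varphi(C_\G)$ and $K^\varphi(C_{\G^*})$ are built on the same embedded graph with the same character, so only the edge weights remain to be compared: the identity $(1+x_e)(1+x_e^*)=2$ is exactly what forces the Kasteleyn weight assigned to each edge of $C_\G$ from $(\G,x)$ to agree with the one assigned from $(\G^*,x^*)$. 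Because $C_\G$ is bipartite, interchanging black and white vertices amounts to transposing the relevant block of the Kasteleyn operator, which leaves its determinant unchanged up to a sign that one tracks against the explicit prefactors; this yields $\det(K^\varphi(C_\G))=\pm\det(K^\varphi(C_{\G^*}))$ and hence the corollary.

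The step I expect to be the main obstacle is establishing this self-duality with all the decorations intact. It is not enough that $C_\G$ and $C_{\G^*}$ be abstractly isomorphic: the isomorphism must respect the embedding in $\SI$, carry the spin-structure data (the vector field $\lambda$, and thus the arguments $a_e$) of one construction to that of the other, intertwine the characters $\varphi\in\Hom(\pi_1(\SI),\C^*)$, and match the bipartition classes in such a way that the sign produced by the colour swap can be pinned down, since the corollary asserts an equality with no sign ambiguity. Once this structural compatibility is secured, the weight identity and the bookkeeping of the explicit prefactors are routine.
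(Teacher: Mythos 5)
Your strategy is in fact the paper's own: apply Theorem~\ref{thm:corr} (in its determinant form, Corollary~\ref{cor:det}) to $(\G,x)$ and to $(\G^*,x^*)$, and use the observation, already made in subsection~\ref{sub:graph}, that $(C_{\G^*},y(x^*))$ and $(C_\G,y(x))$ coincide as weighted embedded graphs; your rewriting $(1+x_e)(1+x_e^*)=2$ is indeed equivalent to $\theta_e+\theta_{e^*}=\pi/2$, which is what makes the induced weights agree. The genuine error is in your very first reduction. The scalars produced by Theorem~\ref{thm:corr} are $\det(I-i\varphi xJ)=\prod_e(1+x_e^2)$ and $\det(I-qR)=2^{|V(\G)|}$, so combining them with the normalization in the statement gives
\[
2^{|V(\G)|}\prod_{e\in E(\G)}(1+x_e)^{-1}\det(\KW^\varphi(\G,x))=\prod_{e\in E(\G)}\frac{1+x_e^2}{1+x_e}\,\det(K^{\varphi_C}(C_\G,y)),
\]
and similarly on the dual side with prefactor $\prod_e\frac{1+(x_e^*)^2}{1+x_e^*}$. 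These leftover prefactors are not $\pm1$ (for $x_e=1/2$ the factor is $5/6$), so the left-hand side of the corollary is \emph{not} equal up to sign to $\det(K^{\varphi_C}(C_\G,y))$, and the corollary does not reduce to $\det(K^{\varphi_C})=\pm\det(K^{(\varphi_*)_C})$ alone: one must also prove $\frac{1+x_e^2}{1+x_e}=\frac{1+(x_e^*)^2}{1+x_e^*}$, which (when $x_e\neq x_e^*$) is precisely equivalent to $x_e+x_e^*+x_ex_e^*=1$. This one-line identity is the algebraic heart of the paper's proof; your proposal omits it because of the erroneous cancellation claim, and it is the second place -- besides the matching of the weights of $C_\G$ and $C_{\G^*}$ -- where the duality relation must be used.

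The compatibilities you defer also need more than bookkeeping, and your proposed treatment of them would not close the argument. The two cocycles induced on the common graph $C_\G=C_{\G^*}$, namely $\varphi_C$ (nontrivial only on edges parallel to edges of $\G$) and $(\varphi_*)_C$ (nontrivial only on edges parallel to edges of $\G^*$, see Figure~\ref{fig:cocycles}), are different. They are cohomologous, but a coboundary change multiplies $\det K$ by $\prod_w c_w/\prod_b c_b$, which is not $1$ for an arbitrary coboundary; the paper arranges for it to be $1$ by representing the class by a cocycle $\varphi_D$ on the double $D_\G$ satisfying $\varphi_1\varphi_3^{-1}=\varphi_2\varphi_4^{-1}$. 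As for signs: interchanging the colour classes transposes the (square) Kasteleyn matrix and hence changes its determinant not at all, while the genuine residual ambiguity -- the two applications of Corollary~\ref{cor:det} involve choices of Kasteleyn orientations, and flipping an orientation at a single vertex negates the determinant -- cannot be absorbed into the explicit prefactors, which are positive. It is pinned instead by positivity: for $\{\pm1\}$-valued $\varphi$ both Kac-Ward determinants are squares of real polynomials in the weights, so both sides of the asserted identity are nonnegative, forcing the plus sign.
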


In the genus one case, this same Theorem~\ref{thm:corr} allows us to transfer the recent results of Kenyon-Sun-Wilson~\cite{KSW} from the dimer to the Ising model.
(See Corollary~\ref{cor:tau} below for the full statement.)

\begin{corintro}
Let $(\G,J)$ be a weighted finite toric graph, interpreted as the quotient of a biperiodic planar graph $\mathcal{G}$ by a lattice $L\simeq\Z^2$, and set
$\G_n=\mathcal{G}/n L$. At the critical inverse temperature $\beta=\beta_c$, the Ising partition function on $\G_n$ satisfies
\[
\log Z^J_{\beta_c}(\G_n)= n^2 f(\beta_c) + \mathsf{fsc(\tau)} +o(1),
\]
where $f$ is the free energy per fundamental domain, $\tau\in\mathbb{H}$ can be explicitely computed from $\det(\KW^\varphi(\G_1,x))$, and $\mathsf{fsc}(\tau)$
is a universal function of the modular parameter $\tau$ which is invariant under modular transformations.
\end{corintro}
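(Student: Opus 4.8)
The plan is to express the toric Ising partition function as a signed combination of Kac-Ward determinants, convert each of these into a dimer partition function on $C_{\G_n}$ via Theorem~\ref{thm:corr}, and then feed the resulting dimer quantities into the finite-size asymptotics of Kenyon-Sun-Wilson~\cite{KSW}. For the first step, since $\G_n$ is embedded in the torus $\SI=\C/nL$ (genus one), the high-temperature expansion of the partition function admits a Pfaffian expression as a signed sum over the four real characters $\varphi\in\Hom(\pi_1(\SI),\{\pm 1\})\cong(\Z/2)^2$. Setting $x_e=\tanh(\beta_c J_e)$ and absorbing the local normalization $2^{|V|}\prod_e\cosh(\beta_c J_e)$ into the bulk term, I would write
\[
Z^J_{\beta_c}(\G_n)=\mathrm{(local)}\cdot\tfrac12\sum_{\varphi}(-1)^{\Arf(\varphi)}\sqrt{\det\KW^\varphi(\G_n,x)},
\]
the signs being the Arf invariant of the quadratic form associated to each spin structure (three even, one odd).

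Next I would apply Theorem~\ref{thm:corr} termwise, rewriting each $\det\KW^\varphi(\G_n,x)$ as the square of the dimer partition function $Z^{\mathrm{dimer}}_\varphi(C_{\G_n})$ up to an explicit prefactor of the form $2^{|V|}\prod_e(1+x_e)^{\pm1}$. Since this prefactor is a product of purely local contributions, it scales as $n^2$ times a constant and feeds only the bulk free energy, leaving the finite-size correction to come entirely from the dimer determinants. I would then invoke~\cite{KSW}: writing $\G_n=\mathcal{G}/nL$, each dimer partition function on the doubled-period graph $C_{\G_n}$ satisfies
\[
\log Z^{\mathrm{dimer}}_\varphi(C_{\G_n})=n^2 f_0+c_\varphi(\tau)+o(1),
\]
where the bulk term $f_0$ is independent of $\varphi$ and the finite-size correction $c_\varphi(\tau)$ is a universal function of the conformal modulus $\tau\in\mathbb{H}$ of the spectral curve at its real node. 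The modulus $\tau$ is read off from the periodic characteristic polynomial, that is, from $\det\KW^\varphi(\G_1,x)$ regarded as a function of the two Bloch-Floquet phases; this is precisely the quantity computed from $\det(\KW^\varphi(\G_1,x))$ in the statement.

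Combining these, I would factor out the common $\exp(\tfrac12 n^2 f_0)$ from the four square roots and take the logarithm of the remaining finite signed sum, obtaining
\[
\log Z^J_{\beta_c}(\G_n)=n^2 f(\beta_c)+\log\!\Bigl(\tfrac12\sum_{\varphi}(-1)^{\Arf(\varphi)}\,e^{\frac12 c_\varphi(\tau)}\Bigr)+o(1),
\]
so that $f(\beta_c)$ collects the bulk term $\tfrac12 f_0$ together with the local normalizations, and $\mathsf{fsc}(\tau)$ is the bracketed logarithm, a finite universal expression in $\tau$.

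I expect two main obstacles. First, I must verify that at $\beta=\beta_c$ the spectral curve of the periodic Kac-Ward operator acquires exactly the single real node required by the hypotheses of~\cite{KSW}; this is the analytic heart of the argument and should follow from criticality, but it needs Theorem~\ref{thm:corr} to transport the node condition faithfully between the Ising and dimer spectral curves. Second, the modular invariance of $\mathsf{fsc}(\tau)$ is not automatic: \cite{KSW} only records how each individual $c_\varphi$ transforms, the modular group permuting the three even characters while fixing the odd one. I would establish invariance by identifying the signed Ising combination with the $c=\tfrac12$ conformal-field-theory partition function on the torus, i.e.\ a modular-invariant combination of Jacobi theta functions divided by the Dedekind eta function, and by checking that the Arf signs $(-1)^{\Arf(\varphi)}$ are exactly those that render the sum invariant. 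Tracking these signs consistently across the four spin structures, together with this theta-function identification, is the delicate bookkeeping that I anticipate to be the hardest part of the proof.
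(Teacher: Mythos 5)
Your proposal follows essentially the same route as the paper: high-temperature expansion plus the genus-one Kac-Ward formula, conversion to Kasteleyn determinants on the bipartite graph $C_{\G_n}$ via Theorem~\ref{thm:corr} (in the form of Corollary~\ref{cor:det}), and then the torus asymptotics of~\cite{KSW}. The two obstacles you flag are resolved in the paper precisely where you expect them: since Corollary~\ref{cor:det} shows that the Ising and dimer spectral curves coincide, the bipartiteness of $C_\G$ allows one to invoke Kenyon-Okounkov-Sheffield~\cite{KOS} to conclude that the curve is a special Harnack curve, whence (by~\cite{C-D}) $P(z,w)$ is strictly positive on $\mathbb{T}^2\setminus\{(1,1)\}$ and vanishes at $(1,1)$ exactly at criticality---this is the node condition; and the modular invariance requires no separate theta-function or CFT argument, since~\cite[Theorem 2 (a)]{KSW} already gives the universal finite-size correction of the signed four-term combination together with its invariance under modular transformations. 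One small caveat: your termwise expansion cannot hold literally for the trivial character, because at $\beta_c$ one has $P_n(1,1)=\prod_{u^n=v^n=1}P(u,v)=0$ exactly, so that term simply drops out of the signed sum (consistently with interpreting its finite-size factor as zero), which is how~\cite{KSW} and the paper treat it.
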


Coming back to s-holomorphicity, the theorem stated above should convince the reader that this notion is ``natural'', as it is linked to the operators
appearing in the theory. But is there any non-trivial example of s-holomorphic functions? More precisely, does there exist some generalized fermionic
observables that are s-holomorphic? This is the case, and these observables are basically given by the cofactors of the Kac-Ward
matrix (a fact obtained independently by Lis~\cite{Lis} in the planar case). For simplicity, we shall only state a corollary and
not be very precise with the definitions, referring to Theorem~\ref{thm:inv} for the full statement of the main result, and to subsection~\ref{sub:inv}
for more details.

\begin{corintro}
Given any fixed oriented edge $e_0\in\EE$, consider the function $F_{e_0}\in\C^{\diamondsuit}$ defined by
\[
F_{e_0}(z)=\frac{\exp\left(i\textstyle{\frac{\pi}{4}}\right)}{\cos(\theta_e/2)}\sum_{\xi\in\E(e_0,z)}(-1)^{q_\lambda(\gamma^0_\xi)}\exp\left(-\textstyle{\frac{i}{2}\rot_\lambda(\gamma_\xi)}\right)\textstyle{\prod_{e'\in\xi}}x_{e'},
\]
where $z=z_e\neq z_{e_0}$, $\E(e_0,z)$ denotes the set of subgraphs $\xi$ consisting in a collection $\gamma^0_\xi$ of loops together with one path $\gamma_\xi$
from $e_0$ to $z$, $\rot_\lambda(\gamma_\xi)$ is the total rotation angle of the path $\gamma_\xi$ (measured with respect to the fixed vector field $\lambda$),
and $q_\lambda$ is the quadratic form associated with $\lambda$. Then, $F_{e_0}$ is s-holomorphic around every vertex of $\G$ not adjacent to $e_0$.
\end{corintro}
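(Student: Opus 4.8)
The plan is to derive the corollary from the equivalence of conditions $(i)$ and $(ii)$ in the theorem above, which converts the analytic assertion of s-holomorphicity into an algebraic statement about the kernel of the Kac-Ward operator. Writing $F_{e_0}=\exp(i\frac{\pi}{4})\,G_{e_0}$, where $G_{e_0}\in\C^\diamondsuit$ carries the normalization $1/\cos(\theta_e/2)$ but not the phase, the theorem says that $F_{e_0}$ is s-holomorphic around a vertex $v$ exactly when the local Kac-Ward equations at the oriented edges incident to $v$ are satisfied by $S(G_{e_0})\in\C^\EE$. Hence it suffices to show that $S(G_{e_0})$ is annihilated by every row of $\KW(\G,x)$ indexed by an oriented edge other than $e_0$ and its reversal $\bar{e}_0$.

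The key step is to identify $S(G_{e_0})$ with the column of the inverse Kac-Ward operator indexed by $e_0$, that is, to establish the identity $\KW\,S(G_{e_0})=\delta_{e_0}$. Heuristically this is the expansion of $(I-\Lambda)^{-1}$ as a geometric series in the transition operator $\Lambda$ associated to $\KW$: each power $\Lambda^n$ sums over non-backtracking walks from $e_0$ weighted by $\prod x_{e'}$ and the half-turning phases, and reorganizing such walks into a single path $\gamma_\xi$ decorated by a family of loops $\gamma^0_\xi$ reproduces the sum defining $G_{e_0}(z_e)$, the rotation factor $\exp(-\frac{i}{2}\rot_\lambda(\gamma_\xi))$ coming from the accumulated turning of the path and the sign $(-1)^{q_\lambda(\gamma^0_\xi)}$ being the spin-structure weight of the loop part. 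To avoid any convergence issue when $\det(\KW)$ is close to zero, I would instead prove the identity $\KW\,S(G_{e_0})=\delta_{e_0}$ directly, by a combinatorial cancellation over the configurations in $\E(e_0,z)$ in the manner of \cite{Lis,Cim2}; here one must also check that the normalization encoded in $S$ together with the factor $1/\cos(\theta_e/2)$ correctly matches the passage from the two oriented edges lying over $z_e$ to the single midpoint $z_e\in\diamondsuit$.

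Granting this identity, the conclusion is immediate. From $\KW\,S(G_{e_0})=\delta_{e_0}$ we read off $(\KW\,S(G_{e_0}))(f)=0$ for every oriented edge $f\neq e_0$, and in particular for every $f$ incident to a vertex $v$ that is not an endpoint of $e_0$, since such $f$ is neither $e_0$ nor $\bar{e}_0$. As the s-holomorphicity equations around $v$ are exactly the Kac-Ward equations indexed by the oriented edges incident to $v$, they all hold at every vertex $v$ not adjacent to $e_0$, which is precisely the assertion.

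I expect the main obstacle to lie in the key step: recasting the walk expansion of $(I-\Lambda)^{-1}$ as the weighted sum over the loop-plus-path configurations $\E(e_0,z)$, with all topological conventions correct. The delicate points are to verify that the backtracking and self-touching contributions organize so that one genuinely obtains a single path together with a family of loops, and that the phases $\exp(-\frac{i}{2}\rot_\lambda(\gamma_\xi))$ and the signs $(-1)^{q_\lambda(\gamma^0_\xi)}$ emerge with the right sign conventions from the half-angle weights of $\Lambda$ and from the spin structure $\lambda$ --- this being where the topology of $\SI$ genuinely enters, through $q_\lambda$ and through the measurement of $\rot_\lambda$ relative to the vector field $\lambda$.
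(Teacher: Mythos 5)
Your proposal is correct and takes essentially the same route as the paper: there too, s-holomorphicity around $v$ is converted into the local Kac-Ward kernel equations (Proposition~\ref{prop:ker}, i.e.\ the equivalence $\i\Leftrightarrow\ii$), the function $F_{e_0}$ is identified---via the reversal bijection $\E(e_0,z)\leftrightarrow\E(z,\bar{e}_0)$ and the map $S$---with $\exp\left(i\frac{\pi}{4}\right)S^{-1}$ of the $\bar{e}_0$-column of the operator $F^\lambda$, and the vanishing of all rows away from $e_0,\bar{e}_0$ follows from $\KW^\lambda\circ F^\lambda=\det\big(\KW^\lambda\big)^{1/2}\cdot\mathit{Id}$. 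The combinatorial identity you single out as the main obstacle (your $\KW\,S(G_{e_0})=\delta_{e_0}$, whose correct normalization is $\det\big(\KW\big)^{1/2}\delta_{\bar{e}_0}$) is precisely Theorem~\ref{thm:inv} of the paper, which its proof of the corollary cites rather than re-derives.
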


This result also has a more direct link to the Ising model. Indeed, consider a biperiodic planar graph $\mathcal{G}$
with edge weights $J=(J_e)_e\in(0,\infty)^{E(\mathcal{G})}$. From Theorem~\ref{thm:inv} and~\cite{C-D}, we immediately get:

\begin{corintro}
There exists a non-trivial biperiodic s-holomorphic function on the weighted graph $(\mathcal{G},x)$, where $x_e=\tanh(\beta J_e)$, if and only if $\beta$ is
the critical inverse temperature for the Ising model on $(\mathcal{G},J)$.
\end{corintro}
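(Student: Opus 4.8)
The plan is to read the statement through the main equivalence of Corollary~\ref{cor:s}, reduce the existence of a biperiodic s-holomorphic function to a determinant condition on the quotient torus by a Floquet-type argument, and then identify that determinant with the criticality criterion of~\cite{C-D}; Theorem~\ref{thm:inv} will supply the explicit function needed for one of the two implications. Throughout I would write $\G_1=\mathcal{G}/L$ for the toric graph and choose the vector field $\lambda$ to be constant, hence $L$-invariant, so that all the edge data ($a_e$, $\theta_e$, $q_\lambda$) entering the construction is biperiodic and descends to $\G_1$.

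First I would set up the dictionary. By Corollary~\ref{cor:s}, a function $F\in\C^\diamondsuit$ is, up to the fixed phase, s-holomorphic if and only if $S(F)$ lies in the kernel of the Kac-Ward operator on $\mathcal{G}$. Because $\lambda$ is biperiodic, $S$ is built from $L$-periodic data and commutes with the $L$-action; a biperiodic $F$ therefore descends to $\G_1$, and $S(F)$ descends to an element of $\ker\KW^{\varphi_0}(\G_1,x)$, where $\varphi_0$ is the character of $\pi_1(\SI)\cong\Z^2$ recording the monodromy that $\lambda$ assigns to the generators of $L$. Conversely every element of $\ker\KW^{\varphi_0}(\G_1,x)$ lifts to a biperiodic kernel element on $\mathcal{G}$. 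Hence a non-trivial biperiodic s-holomorphic function exists precisely when $\ker\KW^{\varphi_0}(\G_1,x)$ meets the image of $S$ non-trivially; in particular its existence forces $\det\KW^{\varphi_0}(\G_1,x)=0$, which by the main theorem of~\cite{C-D} holds exactly when $x_e=\tanh(\beta J_e)$ sits at the critical inverse temperature $\beta=\beta_c$. This already yields the implication that existence forces criticality.

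For the converse I would use Theorem~\ref{thm:inv} to produce the function explicitly, rather than extract an abstract kernel vector: since $S$ is far from surjective and only $\R$-linear, a generic element of $\ker\KW^{\varphi_0}$ need not lie in its image. At $\beta=\beta_c$ the matrix $\KW^{\varphi_0}(\G_1,x)$ is singular, and, granting that the corank is one (the non-degeneracy of the spectral curve at the critical point established in~\cite{C-D}), its adjugate is non-zero, and from $\KW^{\varphi_0}\cdot\mathrm{adj}(\KW^{\varphi_0})=\det(\KW^{\varphi_0})\,\mathrm{Id}=0$ every column of the adjugate lies in the kernel. Theorem~\ref{thm:inv} identifies these cofactors with the fermionic observables $F_{e_0}\in\C^\diamondsuit$: away from criticality $F_{e_0}$ carries a source at $e_0$ proportional to $\det\KW^{\varphi_0}$, but once this determinant vanishes the source disappears and $F_{e_0}$ becomes s-holomorphic around every vertex, including those adjacent to $e_0$. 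Being a cofactor of the $\varphi_0$-twisted operator it is biperiodic and non-trivial, which is the required example.

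The main obstacle I expect is the spin-structure bookkeeping: one must verify that the character $\varphi_0$ produced by biperiodicity of functions on $\mathcal{G}$ is exactly the character for which~\cite{C-D} phrases the criticality criterion. The quadratic form $q_\lambda$ and the half-integer phases $\bigl[i\exp(i(a_e\pm\theta_e))\bigr]^{-1/2}$ in the definition of s-holomorphicity contribute signs that have to be tracked through the quotient $\mathcal{G}\to\G_1$, and it is their consistency that guarantees the abstract equivalence lands on the very determinant controlled in~\cite{C-D}. Once this matching of characters is pinned down, the Floquet reduction and the linear algebra of the adjugate are routine.
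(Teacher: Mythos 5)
Your overall route is the same as the paper's: pass to the torus graph $\G_1=\mathcal{G}/L$ with a constant (hence $L$-invariant) vector field, so that biperiodic s-holomorphic functions correspond via Theorem~\ref{thm:s} to the kernel of the untwisted operator $\KW(\G_1,x)$, whose vanishing determinant is exactly the criticality criterion of~\cite{C-D}; this gives ``existence $\Rightarrow$ criticality'' as in the paper, and your worry about matching characters does resolve to the trivial one. The genuine gap is in your converse. You build the function from the \emph{adjugate} of $\KW(\G_1,x)$, ``granting that the corank is one'' and attributing this to a non-degeneracy statement in~\cite{C-D}. That attribution is backwards: criticality is precisely the point where the spectral curve $P(z,w)=\det\KW^\varphi(\G_1,x)$ (with $\varphi$ representing $(z,w)$) acquires a real \emph{node}, i.e.\ a singular point, at $(1,1)$ --- this is what makes the Hessian of $P$ at $(1,1)$ in Corollary~\ref{cor:tau} meaningful --- and the corank there is $2$, not $1$. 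Concretely, for the square lattice with one vertex per fundamental domain, $\KW(\G_1,x)=I-T$ where, in the cyclic basis of the four oriented edges, $T$ is the circulant matrix with first row $x\bigl(1,e^{i\pi/4},0,e^{-i\pi/4}\bigr)$; its eigenvalues are $x(1+\sqrt2)$ and $x(1-\sqrt2)$, each with multiplicity two, so at $x_c=\sqrt2-1$ the kernel is two-dimensional. (This is also what Theorem~\ref{thm:s} predicts: the complex corank of $\KW$ equals the real dimension of the space of s-holomorphic functions, which at criticality should match the $2$-real-dimensional space of holomorphic spinors for the odd spin structure on the torus.) Since the adjugate of a matrix of corank at least $2$ vanishes identically, your ``required example'' is the zero function.

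The repair is to use what Theorem~\ref{thm:inv} actually provides, which is \emph{not} the adjugate: the operator $F^\lambda$ of Definition~\ref{def:F} satisfies $\KW\circ F^\lambda=\det(\KW)^{1/2}\,\mathrm{Id}$, a source proportional to $\det^{1/2}$ rather than $\det$, so $F^\lambda=\mathrm{adj}(\KW)/\det(\KW)^{1/2}$ is a matrix of polynomials in the weights which does \emph{not} die at criticality (in the square-lattice example $F^\lambda(e,e)=1-x\neq0$ at $x_c$, even though $\mathrm{adj}(\KW)=\det(\KW)^{1/2}F^\lambda=0$ there); at $\beta_c$ its columns lie in $\ker\KW$, and the argument of Corollary~\ref{cor:s-holo} converts them into functions s-holomorphic around \emph{every} vertex. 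If you want to avoid checking non-vanishing of $F^\lambda$ altogether, the cleanest converse uses criterion $(iii)$ of Theorem~\ref{thm:s} together with Corollary~\ref{cor:det}: at $\beta_c$ one has $\det\KW(\G_1,x)=0$, hence $\det\mathrm{K}^\omega(C_{\G_1},y)=0$ for the \emph{real} Kasteleyn matrix, a real square matrix, which therefore has a non-trivial real kernel; applying the $\R$-linear isomorphism $T^{-1}$ produces a non-trivial s-holomorphic function on $\G_1$, which lifts to the desired biperiodic function on $\mathcal{G}$. This real-kernel argument is also the precise answer to your (correct) observation that a generic element of $\ker\KW$ need not lie in the image of $S$.
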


So, many natural functions are s-holomorphic (except possibly at a couple of vertices), and the existence of functions that are s-holomorphic everywhere
is related, at least in the case of biperiodic graphs, to criticality.

Finally, we show that the possibility to define a discrete version of $h(z)=\Im\int f(z)^2\mathit{dz}$ for s-holomorphic functions does extend to our setting (Proposition~\ref{prop:F2}). However, in general, this $h$ is not subharmonic on $\G$ as it is in the critical isoradial case. So this is the point where the extension of the proof of Chelkak-Smirnov
breaks down, and where further original ideas are required. We do believe however that the theory of generalized s-holomorphic functions initiated in the present
article is a first step in this direction. 

\subsection{Organisation of the article}

In Section~\ref{sec:def}, we present the main objects involved in the article and settle many notations: weighted surface
graphs, isoradial graphs, discrete Laplacians, Kac-Ward operators, Kasteleyn operators, and Dirac operators are introduced in this order in separate paragraphs.

In Section~\ref{sec:rel}, we exhibit natural relations between these operators. The most technical such result is Theorem~\ref{thm:corr}, which relates Kac-Ward
and Kasteleyn operators, implies duality statements (Corollaries~\ref{cor:KW1} and~\ref{cor:KW2}), and allows us to transfer the full dimer technology to the Ising model
(Corollary~\ref{cor:tau}). The remaining subsections, which only make sense
in the isoradial case, relate the Kasteleyn and discrete Dirac operators (Proposition~\ref{prop:Kast-D}), the discrete Dirac and Laplace operators (Proposition~\ref{prop:D-L2}), and the discrete Dirac operators on $C_\G$ and on the double of $\G$ (Proposition~\ref{prop:d-d}).

Finally, Section~\ref{sec:s-holo}
contains our main results. In subsection~\ref{sub:ker}, we analyse the kernel of the Kac-Ward operator (Proposition~\ref{prop:ker}). In subsection~\ref{sub:3}, we give the definition of s-holomorphicity, and building on the results of Section~\ref{sec:rel}, we prove our three criteria for s-holomorphicity (Theorem~\ref{thm:s}). In subsection~\ref{sub:inv}, we analyse the cofactors of the Kac-Ward matrix (Theorem~\ref{thm:inv}), thus obtaining the s-holomorphicity of generalized fermionic observables (Corollary~\ref{cor:s-holo}). Finally, in subsection~\ref{sub:F2}, we show that a discrete version of
the integral of the square of an s-holomorphic function can be defined (Proposition~\ref{prop:F2}).


\section{The graphs and operators involved}
\label{sec:def}

The aim of this section is to present the main objects involved in this article, and for some of them, to recall their role in statistical mechanics.
This section is therefore purely expository, and does not contain any new result. It is organised as follows. We first introduce the general setup of weighted surface graphs, presenting in particular
the construction of the bipartite graph $C_\G$ associated to any surface graph $\G$, which will play a crucial role in this article (subsection~\ref{sub:graph}). Then, we recall the setup of isoradial
graphs embedded in flat surfaces, a particularly interesting and well-studied class of surface graphs. Finally, in subsections~\ref{sub:Lapl} to~\ref{sub:Dirac}, we define the operators that will be
studied in the rest of the article: the Laplace, Kac-Ward, and Kasteleyn operators -- defined on any weighted surface graph -- and the discrete Dirac operator, defined only on isoradial graphs.

\subsection{Weighted surface graphs}
\label{sub:graph}

Let us start by setting up our notations for graphs. The set of vertices (resp. edges) of a graph $\G$ will be denoted by $V(\G)$ (resp. $E(\G)$). We shall write
$\EE(\G)$ for the set of oriented edges of $\G$. (Each element of $E(\G)$ thus corresponds to two elements of $\EE(\G)$.)
Following~\cite{Ser}, we shall denote by $o(e)$ the origin of an oriented edge $e\in\EE(\G)$, by $t(e)$ its terminus, and by $\bar{e}$ the same edge with the opposite orientation.
By abuse of notation, we shall also write $e\in E(\G)$ for the unoriented edge of $\G$ corresponding to the oriented edges $e,\bar{e}\in\EE(\G)$.
Finally, we shall write $\EE(\G)_v$ for the set of oriented edges of $\G$ with origin a fixed vertex $v$. 

By a {\em weighted surface graph\/}, we mean a finite graph $\Gamma$ endowed with edge weights $x=(x_e)_{e\in E(\G)}\in[0,1]^{E(\G)}$, embedded in a compact connected orientable surface $\Sigma$
so that the complement of $\G$ in $\SI$ is the disjoint union of topological discs, which we call {\em faces\/}.

The {\em dual\/} of a weighted surface graph $(\G,x)\subset\SI$ is the weighted surface graph $(\G^*,x^*)\subset\SI$ obtained as follows: each face of $\G\subset\SI$ defines a vertex of $\G^*$,
and each edge of $\G$ bounding two faces of $\G\subset\SI$ defines an edge between the two corresponding vertices of $\G^*$. More precisely, an oriented edge $e\in\EE(\G)$ defines $e^*\in\EE(\G^*)$,
the dual oriented edge obtained by turning the oriented edge $e$ counterclockwise. (Since $\SI$ is orientable, this can be done in a consistent way.) Note that $(\bar{e})^*=\overline{e^*}$.
As for the {\em dual weights} $x^*\in[0,1]^{E(\G^*)}$, they are defined via the equality $x+x^*+xx^*=1$. If we use the parametrization $x=\tan(\theta/2)$ with $\theta\in[0,\pi/2]$,
then $\theta$ and $\theta^*$ are simply related by $\theta+\theta^*=\pi/2$. Note that $((\G^*)^*,(x^*)^*)=(\G,x)$.

\begin{figure}[Htb]
\labellist\small\hair 2.5pt
\pinlabel {$\G$} at 280 370
\pinlabel {$D_\G$} at 1100 370
\pinlabel {$z_e$} at 1190 150
\pinlabel {$\tan(\theta_e/2)$} at 300 215
\pinlabel {$\sin(\theta_e)$} at 1110 210
\pinlabel {$\sin(\theta_e)$} at 1335 210
\pinlabel {$\cos(\theta_e)$} at 1290 260
\pinlabel {$\cos(\theta_e)$} at 1290 110
\endlabellist
\begin{center}
\includegraphics[width=\textwidth]{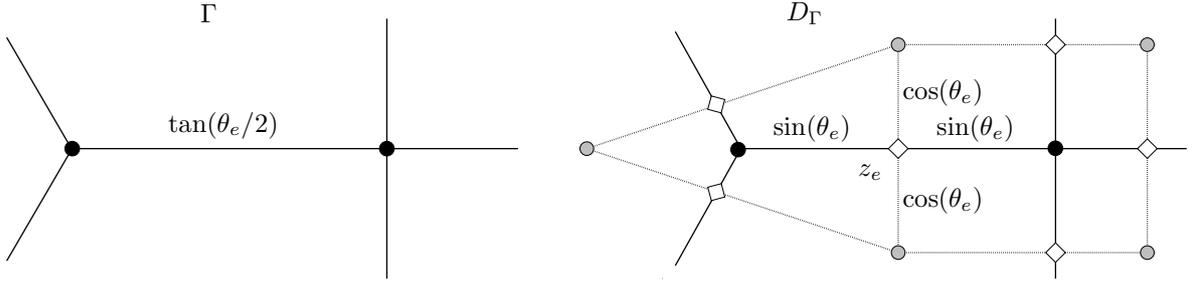}
\caption{The weighted bipartite graph $D_\G$ associated to the weighted graph $\G$.}
\label{fig:double}
\end{center}
\end{figure}

The {\em double\/} of $(\G,x)\subset\SI$ is the weighted bipartite surface graph $(D,y)=(D_\G,y(x))\subset\SI$ given as follows: as a subset of $\SI$, $D_\G=\G\cup\G^*$,
with vertex set $B(D)\cup W(D)$, where $B(D)=V(\G)\cup V(\G^*)=:\Lambda$ and $W(D)=E(\G)\cap E(\G^*)=:\diamondsuit$. As for the weighted edges, an edge $e$ of $\G$ with weight $x_e=\tan(\theta_e/2)$
will give rise to two edges of $D_\G$ with weight $\sin(\theta_e)$ and to two transverse edges of $D_\G$ with weight $\cos(\theta_e)$. Following~\cite{CS09}, we shall denote by $z_e\in\diamondsuit$
the vertex of $D_\G$ corresponding to the edge $e\in E(\G)$. This is illustrated in Figure~\ref{fig:double}.

Following Wu-Lin~\cite{W-L} (see also~\cite{BdT,Dub}), we shall consider another weighted surface graph $(C,y)=(C_\G,y(x))\subset\SI$ associated to a given weighted surface graph $(\G,x)\subset\SI$.
It is obtained as follows: replace each edge $e$ of $\G$ by a rectangle with the edges parallel to $e$ having weight $\sin(\theta_e)$ while the other two edges have weight $\cos(\theta_e)$.
In each corner of each face of $\G\subset\SI$, we now have two vertices; join them with an edge of weight $1$. This is illustrated in Figure~\ref{fig:C_G}.
Note that since the surface $\SI$ is orientable, the graph $C_\G$ is bipartite. Note also that the weighted graph $(C_{\G^*},y(x^*))$ associated to $(\G^*,x^*)$ is equal to the weighted graph
$(C_\G,y(x))$ associated to $(\G,x)$.

\begin{figure}[Htb]
\labellist\small\hair 2.5pt
\pinlabel {$\G$} at 280 370
\pinlabel {$C_\G$} at 1200 370
\pinlabel {$\tan(\theta_e/2)$} at 300 215
\pinlabel {$\sin(\theta_e)$} at 1225 250
\pinlabel {$\sin(\theta_e)$} at 1225 115
\pinlabel {$\cos(\theta_e)$} at 1080 0
\pinlabel {$\cos(\theta_e)$} at 1300 0
\pinlabel {$1$} at 1055 270
\pinlabel {$1$} at 1055 103
\pinlabel {$1$} at 1375 260
\pinlabel {$1$} at 1375 109
\endlabellist
\begin{center}
\includegraphics[width=\textwidth]{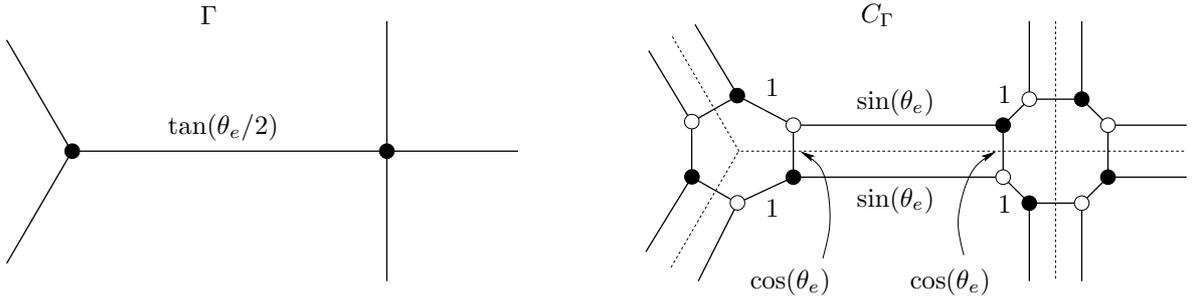}
\caption{The weighted bipartite graph $C_\G$ associated to the weighted graph $\G$.}
\label{fig:C_G}
\end{center}
\end{figure}

In our considerations, it will be useful to endow the surface $\SI$ with a character of its fundamental group, that is, an element of
\[
\Hom(\pi_1(\SI),\C^*)=H^1(\SI;\C^*).
\]
If $\G\subset\SI$ is a surface graph, then such a cohomology class can be represented by a 1-cocycle on $\G$. Let us recall that, with the notations above,
a {\em 1-cochain\/} is a map $\varphi\colon\EE(\G)\to\C^*$ such that $\varphi(\bar{e})=\varphi(e)^{-1}$ for all $e\in\EE(\G)$.
It is called a {\em 1-cocycle\/} if for each face $f$ of $\G\subset\SI$, $\varphi(\partial f):=\prod_{e\in\partial f}\varphi(e)=1$.
Multiplying each $\varphi(e)$ such that $o(e)=v$ by a fixed element of $\C^*$ results in another 1-cocycle, which is said to be {\em cohomologous\/} to $\varphi$.
Equivalence classes of 1-cocycles define the first cohomology group $H^1(\SI;\C^*)$, which only depends on $\SI$ (and not on $\G$): if $\SI$ is a closed surface of genus $g$, then
$H^1(\SI;\C^*)\simeq(\C^*)^{2g}$.

If $\Gamma$ is endowed with a 1-cochain $\varphi$, define the associated 1-cochain $\varphi_C\colon\EE(C_\G)\to\C^*$ by
\[
\varphi_C(w,b)=
\begin{cases}
\varphi(e)&\text{if $(w,b)$ runs parallel to $e\in\EE(\G)$;}\\ 
1&\text{else,}
\end{cases}
\]
as illustrated in Figure~\ref{fig:cocycles}. Note that if $\varphi$ is a cocycle, then so is $\varphi_C$, and they induce the same cohomology class in $H^1(\SI;\C^*)$.
Also, any 1-cochain $\varphi_D$ on $D_\G$ naturally induces 1-cochains $\varphi$ on $\G$ and $\varphi_*$ on $\G^*$, as illustrated in Figure~\ref{fig:cocycles}.
Here again, if $\varphi_D$ is a cocycle, then so are $\varphi$ and $\varphi_*$, and all three induce the same cohomology class.

\begin{figure}[Htb]
\labellist\small\hair 2.5pt
\pinlabel {$e$} at 340 205
\pinlabel {$\varphi(e)=\varphi(\bar{e})^{-1}$} at 225 100
\pinlabel {$\varphi_2\varphi^{-1}_4$} at 1740 245
\pinlabel {$\varphi_1\varphi^{-1}_3$} at 1950 110
\pinlabel {$\varphi_1$} at 1435 190
\pinlabel {$\varphi_2$} at 1380 245
\pinlabel {$\varphi_3$} at 1210 190
\pinlabel {$\varphi_4$} at 1380 85
\pinlabel {$\varphi(e)$} at 720 285
\pinlabel {$\varphi(e)^{-1}$} at 730 35
\pinlabel {\footnotesize{$1$}} at 565 125
\pinlabel {\footnotesize{$1$}} at 905 190
\pinlabel {\footnotesize{$1$}} at 515 300
\pinlabel {\footnotesize{$1$}} at 505 10
\pinlabel {\footnotesize{$1$}} at 955 15
\pinlabel {\footnotesize{$1$}} at 965 310
\endlabellist
\begin{center}
\includegraphics[width=\textwidth]{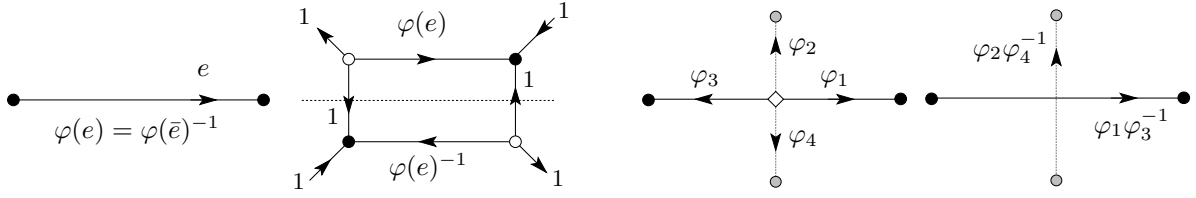}
\caption{A given cochain $\varphi$ on $\G$, and the associated cochain $\varphi_C$ on $C_\G$;
a cochain $\varphi_D$ on $D_\G$, and the associated cochains $\varphi$ on $\G$ and $\varphi_*$ on $\G^*$.}
\label{fig:cocycles}
\end{center}
\end{figure}

\subsection{Isoradial graphs}
\label{sub:iso}

The majority of the results of the present article hold for arbitrary weighted surface graphs, as defined above. However, many of these results take a particularly pleasant form when the graph
is isoradially embedded with critical weights. Let us now recall these concepts.

Consider a collection of planar rhombi of equal side length, say $\delta$, each rhombus having a fixed diagonal $e$ and corresponding half-rhombus angle $\theta_e\in[0,\pi/2]$
as illustrated below.

\begin{figure}[htb]
\labellist\small\hair 2.5pt
\pinlabel {$e$} at 300 205
\pinlabel {$\theta_e$} at 100 140
\endlabellist
\centerline{\psfig{file=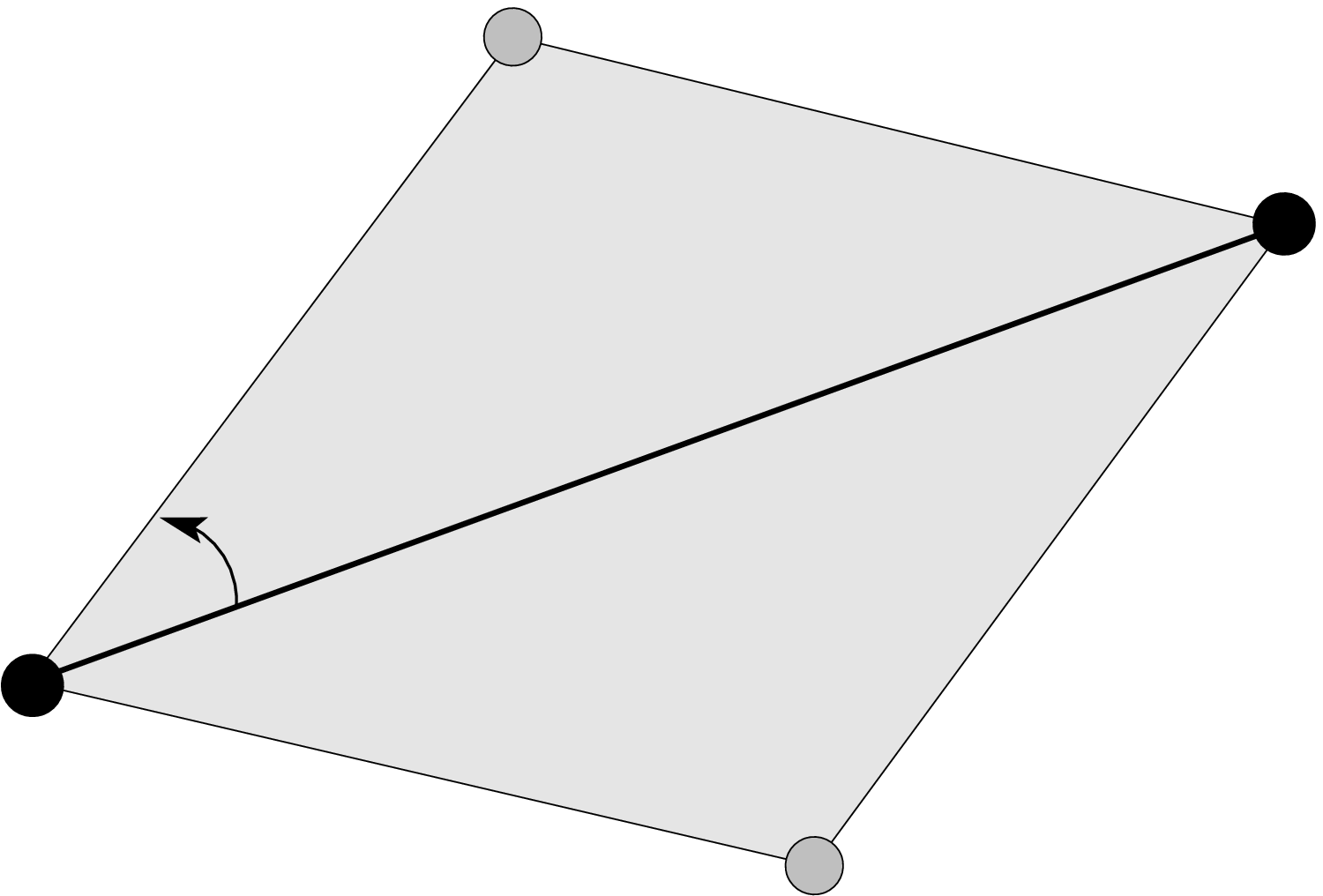,height=2cm}}
\end{figure}

\noindent Paste these rhombi together along their sides so that extremities of diagonals are glued to extremities of diagonals. These diagonals then form the edges of a graph $\G$ embedded
in a flat surface $\SI$, with so-called {\em cone-type singularities\/} in the vertex set $V(\G)$ of $\G$ and in the faces of $\G\subset\SI$. The cone angle of the singularity at $v\in V(\G)$ is given by
the sum of the angles of the rhombi adjacent to $v$ (and similarly for singularities in the faces).

\begin{definition}
\label{def:iso}
A graph $\G$ is {\em $\delta$-isoradially embedded\/} in a flat surface $\SI$ if it can be realized by pasting rhombi of side length $\delta$ as above, so that all the cone angles are
odd multiples of $2\pi$. The corresponding {\em critical weights\/} $x=(x_e)_{e\in E(\G)}$ are given by $x_e=\tan(\theta_e/2)$, where $\theta_e\in[0,\pi/2]$ denotes the half-rhombus angle
associated to the edge $e\in E(\G)$.
\end{definition}

Note that if $\SI$ is the (flat) plane, then this definition coincides with the usual notion of isoradial embedding as considered in~\cite{Ken,CS09}. It is however more general, as it allows us to
work in surfaces of arbitrary genus.

Given a flat surface $\SI$ with cone-type singularities, the parallel transport along a closed loop $\gamma$ in $\SI$ defines an element of $\mathit{SO(2)}=S^1$ called the {\em holonomy\/} of $\gamma$.
If all cone-angles are multiples of $2\pi$, then the holonomy can be described by a homomorphism $\mathit{Hol}\in\mathit{Hom}(\pi_1(\SI),S^1)=H^1(\SI;S^1)$. If this homomorphism is trivial,
$\SI$ is said to have {\em trivial holonomy\/}. (An example of genus $g\ge 1$ is given by the regular $4g$-gone with opposite sides identified.)
If all cone-angles are odd multiples of $2\pi$, then the inverse square roots of the holonomy can be described by $S^1$-valued 1-cocycles on $\SI$, which should be thought of
as {\em discrete spin structures\/} on $\SI$ (see~\cite[Section 3.1]{Cim1}).

Note that if $\G$ is $\delta$-isoradially embedded in a flat surface $\SI$, then so is the dual graph $\G^*$, while the associated bipartite graphs
$D_\G$ and $C_\G$ can be naturally $\frac{1}{2}\delta$-isoradially embedded in this same flat surface $\SI$ (some rhombi of $C_\G$ being degenerate).
Furthermore, if $\G$ is endowed with the critical weights $x$, then the associated weights $x^*$ on $\G^*$ are nothing but the critical weights determined by the corresponding isoradial embedding.
Finally, the associated weights $y$ on $D_\G$ and $C_\G$ are simply given by $y_e=\sin(\theta_e)$, with $\theta_e$ the half-rhombus angle determined by the isoradial embedding.
This is illustrated in Figure~\ref{fig:iso}.

\begin{figure}[Htb]
\labellist\small\hair 2.5pt
\pinlabel {$\theta$} at 123 191
\pinlabel {$\theta^*$} at 700 215
\pinlabel {$\theta$} at 1125 180
\pinlabel {$\theta$} at 1340 180
\pinlabel {$\theta$} at 1740 255
\pinlabel {$\theta$} at 1740 110
\pinlabel {$\theta^*$} at 1223 244
\pinlabel {$\theta^*$} at 1223 95
\pinlabel {$\theta^*$} at 1623 172
\pinlabel {$\theta^*$} at 1843 172
\endlabellist
\begin{center}
\includegraphics[width=\textwidth]{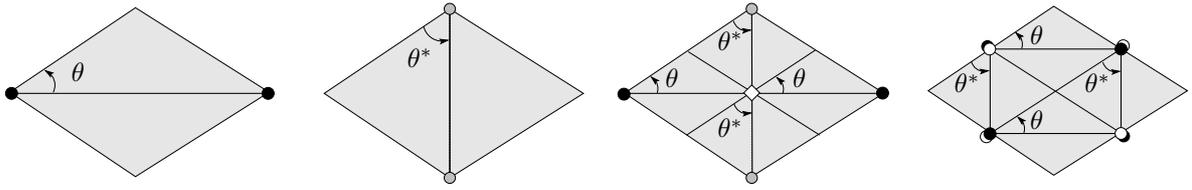}
\caption{If $\G$ is isoradially embedded, then so are $\G^*$, $D_\G$ and $C_\G$.}
\label{fig:iso}
\end{center}
\end{figure}

\subsection{The discrete Laplace operator}
\label{sub:Lapl}

Fix a surface graph $\Gamma\subset\SI$ endowed with a 1-cocycle $\varphi\colon\EE(\G)\to\C^*$. The space $\C^{V(\G)}$ can be thought of as a combinatorial version of the space of complex
valued functions on $\SI$, and more generally, of the space of sections of the complex line bundle induced by the class of $\varphi$ in $H^1(\SI;\C^*)$.
As first observed by Eckmann~\cite{Eck} (in the untwisted case),
the Laplace operator on this space can be naturally discretized by an operator $\Delta^\varphi_\G$ on $\C^{V(\G)}$, provided one fixes inner products on the cochain spaces
$C^0(\G;\C)$ and $C^1(\G;\C)$. Furthermore, if these inner products are diagonal (i.e. given by positive vertex weights $\{\mu_v\}_{v\in V(\G)}$ and edge weights $\{\nu_e\}_{e\in E(\G)}$),
then this operator takes the following very simple form (see e.g.~\cite{For}): for $f\in\C^{V(\G)}$ and $v\in V(\G)$,
\[
(\Delta_\G^\varphi f)(v)=\frac{1}{\mu_v}\sum_{e=(v,w)}\nu_e\,\left(f(v)-f(w)\varphi(e)\right).
\]
Finally, if the surface $\SI$ is endowed with a Riemannian metric, then a sensible choice for these weights consists in taking for $\mu_v$ the area of the star of $v$ in $\G$, and
for $\nu_e$ the quotient of the length of $e^*$ over the length of $e$. With the isoradial case in mind, this leads to the following definition.

\begin{definition}
\label{def:Delta}
Let $(\G,x)\subset\SI$ be a weighted surface graph, with weights parametrized by $x_e=\tan(\theta_e/2)$.
The associated {\em discrete Laplace operator\/} is the operator $\Delta_\G^\varphi$ on $\C^{V(\G)}$ defined by
\[
(\Delta_\G^\varphi f)(v)=\frac{1}{\mu_v}\sum_{e=(v,w)}\tan(\theta_e)\,\left(f(v)-f(w)\varphi(e)\right)
\]
for $f\in\C^{V(\G)}$ and $v\in V(\G)$, where the sum is over all oriented edges of the form $e=(v,w)$, and $\mu_v=\frac{1}{2}\sum_{e\in\EE(\G)_v}\sin(2\theta_e)$.
\end{definition}

We refer to the recent paper~\cite{Ken2} for a beautiful example of the relevance of this operator in statistical mechanics.

\subsection{The Kac-Ward operator}
\label{sub:KW}

To define the next operator, one needs to be able to measure rotation angles along curves. For planar closed curves, there is a unique sensible way to do so: one measures the rotation
angle of the velocity vector field of the curve with respect to any constant vector field on the plane. For curves embedded in an arbitrary surface, this is more tricky. As it turns out,
there is a standard geometrical tool to do this, which is called a {\em spin structure\/}. We shall not go into the trouble of recalling their formal definition (see e.g.~\cite[p.55]{Ati});
let us only mention that such a spin structure can be given by a vector field on $\SI$ with isolated zeroes of even index. Let us also recall that the group of orientation-preserving diffeomorphisms of
$\SI$ acting on the set of spin structures on $\SI$ defines two orbits: the {\em even\/} spin structures -- with so-called {\em Arf invariant\/} equal to $0$ --, and the {\em odd\/} ones, with
Arf invariant equal to $1$.
 
So, given a weighted surface graph $(\Gamma,x)\subset\SI$, let us endow $\SI$ with a Riemannian metric and let us fix a vector field $\lambda$ on $\SI$ with isolated zeroes of even index in
$\SI\setminus\G$. Also, fix a 1-cochain $\varphi\colon\EE(\G)\to\C^*$.

\begin{definition}
\label{def:KW}
The associated {\em Kac-Ward operator\/} is the operator $\KW^\varphi=\KW^\varphi(\G,x)$ on $\C^{\EE(\G)}$ defined by
\[
(\KW^\varphi f)(e)=f(e)-\varphi(e)\,x_e\sum_{\genfrac{}{}{0pt}{}{e'\in \EE(\G)_v}{e'\neq \bar{e}}}\exp\left({\textstyle\frac{i}{2}\alpha_\lambda(e,e')}\right)f(e')
\]
for $f\in\C^{\EE(\G)}$ and $e\in\EE(\G)$ with $t(e)=v$, the sum being over all $e'\in\EE(\G)$ starting where $e$ finishes, but different from $\bar{e}$.
Here, $\alpha_\lambda(e,e')$ is the rotation angle (in radians) of the velocity vector field along $e$ followed by $e'$ with respect to the vector field
$\lambda$, from $z_e$ to $z_{e'}$. (See Figure~\ref{fig:alpha}.)
\end{definition}

If the cocycle is trivial, we shall simply denote this operator by $\KW(\G,x)$, or by $\KW^\lambda(\G,x)$ if we wish to underline its dependence on the spin structure $\lambda$.

\begin{figure}[Htb]
\labellist\small\hair 2.5pt
\pinlabel {$e$} at 100 35
\pinlabel {$e'$} at 245 90
\pinlabel {$z_e$} at -20 10
\pinlabel {$z_{e'}$} at 365 133
\pinlabel {$\alpha_\lambda(e,e')$} at 310 30
\endlabellist
\begin{center}
\includegraphics[height=2cm]{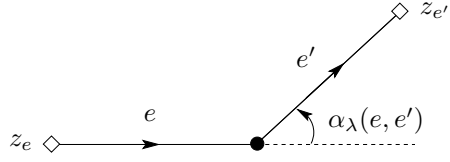}
\caption{The angle $\alpha_\lambda(e,e')$ in the planar case.}
\label{fig:alpha}
\end{center}
\end{figure}

The relevance of this operator for the study of the Ising model is given by the following facts~\cite{Cim2}.
It is well-known that the space $H^1(\SI;\{\pm 1\})$ acts freely transitively on the set of spin structures on $\SI$. Now, for any 1-cocycle $\varphi\colon\EE(\G)\to\{\pm 1\}$, the determinant
of the matrix $\KW^\varphi(\G,x)$ is the square of a polynomial in the variables $\{x_e\}_{e\in E(\G)}$ which only depends on the spin structure determined by the action of the class of
$\varphi$ on $\lambda$, and on the surface graph $\G\subset\SI$ (see subsection~\ref{sub:inv} for more details).
Finally, if $\det(\KW^\varphi(\G,x))^{1/2}$ denotes the square root with constant coefficient equal to $+1$, then the high temperature expansion of the Ising partition function on $\G$ is given by
\begin{equation}
\label{eqn:KW}
Z^\mathit{Ising}(\G,x)=\frac{1}{2^g}\sum_{\varphi\in H^1(\SI;\{\pm 1\})}(-1)^{\Arf(\varphi)}\det(\KW^\varphi(\G,x))^{1/2},
\end{equation}
where $g$ is the genus of $\SI$ and $\Arf(\varphi)\in\Z_2$ is the Arf invariant of the spin structure obtained by the action of $\varphi$ on $\lambda$.
Note that in the planar case, this equality simply reads $Z^\mathit{Ising}(\G,x)=\det(\KW(\G,x))^{1/2}$; this is the original Kac-Ward formula~\cite{KW,DZMSS}.

\subsection{The Kasteleyn operator}
\label{sub:Kast}

Recall that a {\em Kasteleyn orientation\/}~\cite{Ka1,Ka2,Ka3} on a bipartite surface graph $G\subset\SI$ can be understood as a map $\omega\colon E(G)\to\{\pm 1\}$
such that for each face $f$ of $G$,
\[
\omega(\partial f):=\prod_{e\in\partial f}\omega(e)=(-1)^{\frac{|\partial f|}{2}+1}.
\]
Two Kasteleyn orientations are called {\em equivalent\/} if they can be related by flipping the orientation of all edges adjacent to a finite number of vertices.

So, fix a bipartite weighted graph $(G,y)\subset\SI$ with vertex set $V(G)=B\cup W$, a 1-cochain $\varphi\colon\EE(G)\to\C^*$, and a Kasteleyn orientation $\omega$ on $G\subset\SI$.

\begin{definition}
\label{def:Kast}
The associated {\em Kasteleyn operator\/} is the operator $K^\varphi=K^\varphi(G,y)\colon\C^B\to\C^W$ defined by 
\[
(K^\varphi f)(w)=\sum_{e=(w,b)}\varphi(e)\,\omega(e)\,y_e\,f(b)
\]
for $f\in\C^B$ and $w\in W$, the sum being over all oriented edges of $G$ of the form $e=(w,b)$.
\end{definition}

If the cocycle is trivial, we shall simply denote this operator by $K(G,y)$, or by $K^\omega(G,y)$ if we wish to underline its dependence on the Kasteleyn orientation $\omega$.

The relevance of this operator for the study of the dimer model is given by the following formula~\cite{Tes,G-L,C-RI}. The space $H^1(\SI;\{\pm 1\})$ acts freely transitively on the set
of equivalence classes of Kasteleyn orientations on $G\subset\SI$, and this set is in equivariant correspondence with the set of spin structures on $\SI$~\cite{C-RI}. Then, the partition function of
the dimer model on $G$ is given by the formula
\begin{equation}
\label{eqn:Pf}
Z^\mathit{dimer}(G,y)=\frac{1}{2^{g}}\sum_{\varphi\in H^1(\SI;\{\pm 1\})}(-1)^{\Arf(\varphi)}\det(K^\varphi(G,y)),
\end{equation}
where $\Arf(\varphi)\in\Z_2$ denotes the Arf invariant of the spin structure obtained by the action of $\varphi$ on the spin structure corresponding to $\omega$.
Note that in the planar case, this equality simply reads $Z^\mathit{dimer}(G,y)=\det(K(G,y))$; this is Kasteleyn's celebrated theorem~\cite{Ka3}.

Note that since the spaces $\C^B$ and $\C^W$ are not identical, the determinant of the Kasteleyn operators is {\em a priori\/} not well-defined.
However, we will only be interested in the case where the number of white and black vertices of $G$ are equal. (Otherwise, both sides of the equation displayed above vanish.)
In this case, the determinants are well-defined up to a global sign. Since they also change sign when the Kasteleyn orientation is replaced by an equivalent one, the formula above
should really be understood as holding up to a global sign, or for a good choice of a Kasteleyn orientation. (One can also simply use absolute values on the right hand side.)

\subsection{The discrete Dirac operator}
\label{sub:Dirac}

Dirac operators can be defined on surfaces endowed with a complex structure and with a hermitian metric. Let us therefore consider a bipartite surface graph $G\subset\SI$ isoradially embedded in a flat
surface $\SI$, together with a fixed nowhere vanishing vector field $X$ at the vertices of $G$. (See~\cite[Section 2.1]{Cim1} for a motivation of these assumptions.)
Given an oriented edge $e=(v,v')$ of $G$, we shall write $\vartheta_X(e)$ for the oriented angle between $X(v)$ and $e$, as illustrated below.
Finally, let us write $\mu_v=\frac{1}{2}\sum_{e\in\EE(G)_v}\sin(2\theta_e)$, and fix a 1-cochain $\varphi\colon\EE(G)\to\C^*$. 

\begin{figure}[h]
\labellist\small\hair 2.5pt
\pinlabel {$e$} at 75 110
\pinlabel {$v$} at -20 22
\pinlabel {$v'$} at 160 160
\pinlabel {$\vartheta_X(e)$} at 100 45
\pinlabel {$X(v)$} at 180 0
\endlabellist
\centerline{\psfig{file=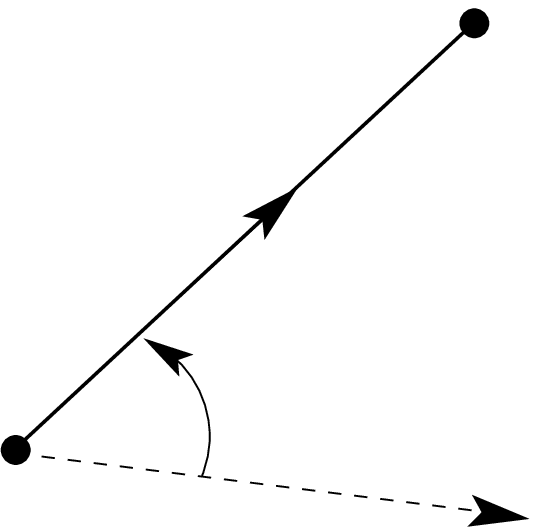,height=2.5cm}}
\label{fig:th}
\end{figure}

\begin{definition}
\label{def:Dirac}
The associated {\em discrete $\bar\partial$-operator\/} is the operator $\bar\partial^\varphi=\bar\partial^\varphi_G\colon\C^B\to\C^W$ defined by 
\[
(\bar\partial^\varphi f)(w)=\frac{1}{\mu_w}\sum_{e=(w,b)}\varphi(e)\,\exp(i\vartheta_X(e))\,\sin(\theta_e)\,f(b)
\]
for $f\in\C^B$ and $w\in W$, while the {\em discrete $\partial$-operator\/} $\partial^\varphi=\partial^\varphi_G\colon\C^W\to\C^B$ is defined by 
\[
(\partial^\varphi f)(b)=\frac{1}{\mu_b}\sum_{e=(b,w)}\varphi(e)\,\exp(-i\vartheta_X(e))\,\sin(\theta_e)\,f(w)
\]
for $f\in\C^W$ and $b\in B$. Finally the associated {\em discrete Dirac operator\/} is the operator
$\Di^\varphi=\Di^\varphi_G=\left(\begin{smallmatrix}0&-\partial^\varphi\cr \bar\partial^\varphi & 0\end{smallmatrix}\right)\colon\C^B\oplus\C^W\to \C^B\oplus\C^W$.
\end{definition}

Note that if $\SI$ is the flat plane (with $\varphi=1$ and $X$ a constant vector field), then these definitions coincide with the ones of~\cite{Ken,CS09}.
More generally, if $G$ is isoradially embedded in a flat surface and $\varphi$ is a discrete spin structure, then we get back~\cite[Definition 3.9]{Cim1}.
It should come as no surprise that a vector field is required: on Riemann surfaces, the $\partial$ and $\bar\partial$ operators take values in $(1,0)$ and $(0,1)$-forms, respectively.
In any case, a function $f\in\C^B$ being {\em discrete holomorphic\/} at $w\in W$, i.e. satisfying $(\bar\partial^\varphi f)(w)=0$, is independent of the vector field.

We will be particularly interested in two families of bipartite graphs, namely the graphs $D_\G$ and $C_\G$ associated to an arbitrary isoradial graph $\G$ as described in subsection~\ref{sub:graph}.
The corresponding Dirac operators will be denoted by $\Di^\varphi_D=\left(\begin{smallmatrix}0&-\partial_D^\varphi\cr \bar\partial_D^\varphi & 0\end{smallmatrix}\right)$ and
$\Di^\varphi_C=\left(\begin{smallmatrix}0&-\partial_C^\varphi\cr \bar\partial_C^\varphi & 0\end{smallmatrix}\right)$.


\section{Relations between operators}
\label{sec:rel}

The aim of this section is to exhibit natural relations between the operators defined in Section~\ref{sec:def}. A couple of applications are also included, but the most interesting consequences 
will be presented in Section~\ref{sec:s-holo}. More precisely, we begin by showing one of our main results: the fact that the Kac-Ward operator on an arbitrary surface graph $\G$ and the Kasteleyn
operator on the associated bipartite graph $C_\G$ can be explicitly related (subsection~\ref{sub:corr}). As an immediate consequence, we obtain a Kramers-Wannier type duality result for Kac-Ward
determinants. In subsection~\ref{sub:Kast-D}, we show that in the isoradial case, the Kasteleyn and discrete Dirac operators on $C_\G$ are conjugate.
In the next paragraph, we observe that the square of the Dirac operator on $C_\G$ is closely related to the discrete Laplace operator on an associated graph.
Finally, in subsection~\ref{sub:D-D}, we relate the discrete Dirac operators on $C_\G$ and on the double $D_\G$ of $\G$. Note that the results of these last two subsections will not be referred to
in the rest of the paper.

\subsection{Kac-Ward versus Kasteleyn operators}
\label{sub:corr}

In this paragraph, we relate the Kac-Ward operator on an arbitrary weighted surface graph $(\G,x)\subset\SI$ to the Kasteleyn operator on the associated {\em bipartite\/} graph
$(C_\G,y(x))\subset\SI$ (Theorem~\ref{thm:corr}). As an immediate consequence, we obtain an equality between their respective determinants (Corollary~\ref{cor:det}). Note that such a Kac-Ward determinant
is known to be equal to the Kasteleyn determinant of the associated {\em Fisher graph\/} (see~\cite[Proposition 4.6]{Cim2}). However, Fisher graphs are never bipartite, thus
not allowing to use the full power of the dimer model theory. In the present case, not only is $C_\G$ bipartite, but Corollary~\ref{cor:det} immediately implies a generalized Kramers-Wannier
duality for Kac-Ward determinants on arbitrary weighted surface graphs (Corollaries~\ref{cor:KW1} and~\ref{cor:KW2}). These statements simultaneously generalize~\cite[Theorem 4.4]{Cim3}, which deals with
the critical isoradial case, and~\cite[Corollaries 3.3 and 3.4]{C-D}, which deal with the toric case.

\begin{figure}[Htb]
\labellist\small\hair 2.5pt
\pinlabel {$e$} at 282 25
\pinlabel {$R(e)$} at 340 265
\pinlabel {$R^2(e)$} at 210 375
\pinlabel {$R^3(e)$} at 70 370
\pinlabel {$\vdots$} at 0 200
\pinlabel {$R^{-1}(e)$} at 100 17
\pinlabel {$\beta_e$} at 220 170
\endlabellist
\begin{center}
\includegraphics[height=5cm]{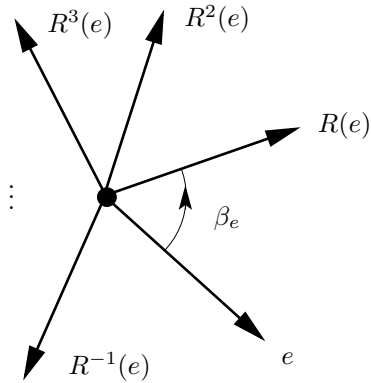}
\caption{The rotated edges, and the angle $\beta_e$.}
\label{fig:R}
\end{center}
\end{figure}

To state the main result of this section, we need to introduce several notations. Let $(\G,x)\subset\SI$ be an arbitrary weighted surface graph, $\SI$ being endowed with a Riemannian metric,
and assume that the edges of $\G$ are smoothly embedded in $\SI$. Also, let us fix a vector field $\lambda$ on $\SI$ with zeroes of even index in $\SI\setminus\G$.
For any $e\in\EE(\G)=:\EE$, write $D_e=\exp(ia_e)$, where $a_e$ denotes the oriented angle at $z_e$ between $\lambda$ and $e$. (The letters $D$ and $a$ stand for ``direction'' and ``argument'',
respectively.) Note the equality $D_{\bar{e}}=-D_e$.
Given a vertex $v\in V(\G)$, let us cyclically order the elements of $\EE_v$ by turning counterclockwise around $v$. (As $\SI$ is orientable, this can be done in a consistent way.)
For $e\in\EE_v$, let $R(e)\in\EE_v$ denote the next edge with respect to this cyclic order, and set $\beta_e=\pi+\alpha_\lambda(\bar{e},R(e))$. By definition of $\alpha_\lambda$
(recall Definition~\ref{def:KW}), $\beta_e$ is the rotation angle between $e$ and $R(e)$ measured with respect to the fixed vector field $\lambda$ (see Figure~\ref{fig:R}).
Note the equality $D_{R(e)}=\exp(i\beta_e)D_e$. Finally, set $q_e=\exp(\frac{i}{2}\beta_e)$.

This allows us to define automorphisms $J,R,D$ and $q$ of the space $\C^\EE$: $J$ is simply given by $(Jf)(e)=f(\bar{e})$, while $R$ is defined by $(Rf)(e)=f(R(e))$.
Finally, we shall write $D$ for the diagonal automorphism of $\C^\EE$ given by $(D f)(e)=D_e\,f(e)$, and similarly for $q$, for any weight system, and for $\varphi$.
To conclude this long list of notations, let $\psi_B\colon\EE\to B:=B(C_\G)$ (resp. $\psi_W\colon\EE\to W:=W(C_\G)$) denote the bijection mapping each oriented edge $e$ of $\G$ to the unique black
(resp. white) vertex of $C_\G$ immediately to the right (resp. left) of $e$, as illustrated in Figure~\ref{fig:psi}. These bijections induce isomorphisms $\psi_B\colon\C^B\to\C^\EE$ and
$\psi_W\colon\C^W\to\C^\EE$.

\begin{figure}[Htb]
\labellist\small\hair 2.5pt
\pinlabel {$e=\psi_W^{-1}(w)$} at 600 35
\pinlabel {$R(e)$} at -25 200
\pinlabel {$\psi_B(R(e))$} at 150 180
\pinlabel {$\psi_B(e)$} at 200 0
\pinlabel {$\psi_B(\bar{e})$} at 420 140
\pinlabel {$w$} at 171 131
\endlabellist
\begin{center}
\includegraphics[height=2.3cm]{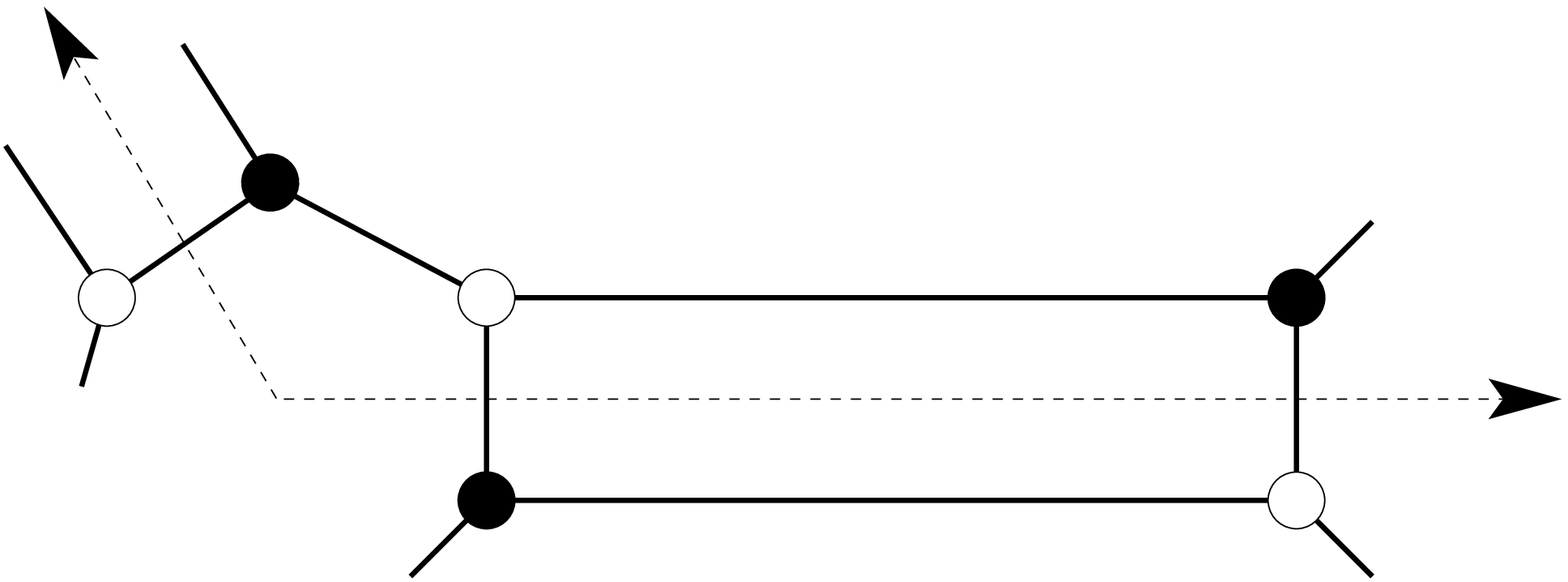}
\caption{The bijections $\psi_B\colon\EE\to B$ and $\psi_W\colon\EE\to W$.}
\label{fig:psi}
\end{center}
\end{figure}

\begin{theorem}
\label{thm:corr}
For any 1-cochain $\varphi\colon\EE\to\C^*$, the following diagram commutes:
\[
\xymatrix{
\C^\EE\ar[rr]^{\KW^\varphi(\G,x)} & \hspace{1cm} & \C^\EE\\
\C^\EE\ar[u]^{I-qR}_\simeq & \hspace{1cm} & \C^\EE\ar[u]_{I-i\varphi xJ}^\simeq \\
\C^\EE\ar[u]^{D^{-1/2}}_\simeq & \hspace{1cm} & \C^\EE\ar[u]_{D^{-1/2}}^\simeq \\
\C^B\ar[u]^{\psi_B}_\simeq \ar[rr]^{K^{\varphi_C}(C_\G,y)} & \hspace{1cm} & \C^W.\ar[u]_{\psi_W}^\simeq}
\]
\end{theorem}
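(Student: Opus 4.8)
The plan is to collapse the four-storey diagram into a single operator identity and verify that identity by a direct, essentially local, computation. Since every vertical arrow carries the symbol $\simeq$, the whole rectangle commutes if and only if the two composites $\C^B\to\C^\EE$ agree, i.e.
\[
\KW^\varphi\circ A = B\circ K^{\varphi_C},\qquad A=(I-qR)\,D^{-1/2}\,\psi_B,\quad B=(I-i\varphi xJ)\,D^{-1/2}\,\psi_W .
\]
First I would evaluate both sides on an arbitrary $g\in\C^B$ and compare the coordinate indexed by an arbitrary oriented edge $e\in\EE$.

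The key preliminary step is to put the two vertical composites $A$ and $B$ into a transparent ``discrete difference'' form using the elementary relations recorded just before the statement. Because $D_{R(e)}=\exp(i\beta_e)D_e$ and $q_e=\exp(\tfrac{i}{2}\beta_e)$, one has $q_e\,D_{R(e)}^{-1/2}=D_e^{-1/2}$ for a coherent choice of square roots, so the left column telescopes to
\[
(Ag)(e)=D_e^{-1/2}\bigl(g(\psi_B(e))-g(\psi_B(R(e)))\bigr).
\]
Likewise $D_{\bar{e}}=-D_e$ gives $D_{\bar{e}}^{-1/2}=-i\,D_e^{-1/2}$, whence
\[
(Bh)(e)=D_e^{-1/2}\bigl(h(\psi_W(e))-\varphi(e)\,x_e\,h(\psi_W(\bar{e}))\bigr).
\]
These two closed forms reduce the whole verification to matching finitely many local terms.

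Next I would read off the combinatorial data of $C_\G$ around an edge. Via the bijections of Figure~\ref{fig:psi}, each white vertex $\psi_W(e)$ has exactly three black neighbours, $\psi_B(e)$ through a short rectangle edge of weight $\cos\theta_e$, $\psi_B(\bar{e})$ through a long rectangle edge of weight $\sin\theta_e$, and $\psi_B(R(e))$ through a corner edge of weight $1$; the induced cochain $\varphi_C$ of Figure~\ref{fig:cocycles} equals $\varphi$ on the parallel (long) edges and $1$ on the perpendicular (short) and corner edges. Plugging the weights $(\sin\theta_e,\cos\theta_e,1)$, the Kasteleyn signs $\omega$ and $\varphi_C$ into Definition~\ref{def:Kast} gives an explicit formula for $(K^{\varphi_C}g)(\psi_W(e))$ and $(K^{\varphi_C}g)(\psi_W(\bar{e}))$, hence for $(B\,K^{\varphi_C}g)(e)$, as a combination of the values of $g$ at $\psi_B(e),\psi_B(\bar{e}),\psi_B(R(e)),\psi_B(R(\bar{e}))$. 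On the other side, applying $\KW^\varphi$ to the difference formula for $Ag$ produces a sum over $e'\in\EE_v$ ($v=t(e)$) weighted by $\exp(\tfrac{i}{2}\alpha_\lambda(e,e'))\,D_{e'}^{-1/2}$; since $\alpha_\lambda(e,R(e'))=\alpha_\lambda(e,e')+\beta_{e'}$, this weight is constant along the cyclic order, so the sum telescopes and — because the index $\bar{e}$ is omitted — leaves only boundary contributions at $\psi_B(\bar{e})$ and $\psi_B(R(\bar{e}))$. Matching the four resulting coefficients then rests on the half-angle identities $\tan(\theta_e/2)\sin\theta_e=1-\cos\theta_e$ and $\tan(\theta_e/2)\cos\theta_e$ combined suitably, which is exactly where the Ising weight $x_e=\tan(\theta_e/2)$ meets the $C_\G$-weights $\sin\theta_e,\cos\theta_e$.

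The hard part will be the global bookkeeping of phases and signs, not the trigonometry. One must simultaneously fix the branch of the half-direction $D_e^{1/2}$ and choose the Kasteleyn orientation $\omega$ on $C_\G$ corresponding to the spin structure $\lambda$ (using the equivariant correspondence between Kasteleyn orientations and spin structures recalled in subsection~\ref{sub:Kast}) so that the products $\varphi_C\omega$ come out with a uniform pattern — $+1$ on the short $\cos$-edge at $\psi_B(e)$ and $-1$ on the parallel $\sin$-edges and the weight-$1$ corner edges — \emph{independently of the vertex and its degree}, and so that the telescoping U-turn term $\exp(\tfrac{i}{2}\alpha_\lambda(e,\bar{e}))\,D_{\bar{e}}^{-1/2}$ evaluates to $-D_e^{-1/2}$. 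Verifying that one global choice realises all of these local sign patterns at once — equivalently, that the defining Kasteleyn condition on $C_\G$ is precisely the compatibility forced by the even-index zeros of $\lambda$ — is the crux of the argument; once it is in place, the coefficient-by-coefficient comparison of $\KW^\varphi A$ and $B\,K^{\varphi_C}$ closes immediately.
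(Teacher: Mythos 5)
Your overall strategy---collapsing the diagram to the single identity $\KW^\varphi\circ A=B\circ K^{\varphi_C}$ (legitimate, since all vertical maps are isomorphisms) and checking it coefficient by coefficient---is close in spirit to the paper's proof. But both of your ``closed forms'' rest on a sign normalization that is impossible, and this is not repairable bookkeeping: it is the actual content of the theorem. Concretely, you claim that for a coherent choice of square roots one has $q_e\,D_{R(e)}^{-1/2}=D_e^{-1/2}$ for all $e$, so that the left column telescopes. Going once around a vertex $v$ of degree $d$, this relation would force
\[
D_e^{1/2}=D_{R^d(e)}^{1/2}=\Big(\prod_{e'\in\EE_v}q_{e'}\Big)\,D_e^{1/2}
=\exp\Big(\tfrac{i}{2}\sum_{e'\in\EE_v}\beta_{e'}\Big)D_e^{1/2}=-D_e^{1/2},
\]
since the angles $\beta_{e'}$ add up to $2\pi$ around $v$. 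So whatever branch of $D^{1/2}$ you fix, the signs $\varepsilon(e)=q_e\,D_e^{1/2}D_{R(e)}^{-1/2}=\pm1$ must have product $-1$ around every vertex; your telescoping formula for $(Ag)(e)$, and with it the ``constant weight'' telescoping of the Kac--Ward sum, cannot hold as stated. The same obstruction ruins your target sign pattern: assigning $+1$ to all $\cos$-edges and $-1$ to all $\sin$- and corner edges, ``independently of the vertex and its degree'', gives product $(+1)^2(-1)^2=+1$ around each rectangular face of $C_\G$, whereas a Kasteleyn orientation must give $(-1)^{4/2+1}=-1$ there. No uniform local rule produces a Kasteleyn orientation; the signs are forced to be non-uniform, and exactly which non-uniform pattern occurs (the equivalence class of $\omega$, equivalently the spin structure of $\lambda$) is the datum the theorem matches on the two sides.

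The paper sidesteps this trap by postponing the square roots: it first computes, purely algebraically and with no branch choices, $(I-i\varphi xJ)^{-1}\,\KW^\varphi\,(I-qR)=\cos\theta+i\sin\theta\,\varphi J-qR$, via the auxiliary operators $\mathit{Succ}$ and $\mathit{Com}$ and the two telescoping identities $\mathit{Succ}(I-qR)=-2i\varphi xJ$ and $\mathit{Com}(I-qR)=-x^2(I+qR)$. This exhibits the conjugated Kac--Ward operator as a Kasteleyn-type operator with $S^1$-valued phases $\tilde\omega=(1,\,i,\,-q_e)$; one then checks that $\tilde\omega$ satisfies the Kasteleyn condition around every face of $C_\G$ (the hypothesis that $\lambda$ has zeroes of even index enters precisely here, for the faces of $\G$) and squares to $1$ on cycles, hence is cohomologous to a genuine $\{\pm1\}$-valued Kasteleyn orientation $\omega$; finally the diagonal gauge realizing this cohomology is identified with $D^{-1/2}$. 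If you insist on conjugating by $D^{-1/2}$ first, you must carry the non-uniform signs $\varepsilon(e)$ through the entire computation and prove at the end that they assemble into a Kasteleyn orientation---which amounts to the same argument. As written, your proposal assumes away precisely the point to be proved.
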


Before proving this theorem, let us make a couple of remarks.
First, note that the choice of the square root of the operator $D$ has no importance: changing the sign of the square root of the coefficient $D_e$ (say, on the left-hand side of the diagram)
simply amounts to inverting the Kasteleyn orientation of all edges adjacent to $\psi_B(e)$, i.e. replacing it with an equivalent Kasteleyn orientation.
Also, the dependence of this diagram on the equivalence class of the Kasteleyn orientation is well-understood. Such equivalence classes are in natural one-to-one correspondence
with spin structures on $\SI$ (see~\cite[Section 4]{C-RI}), which can be described by vector fields with zeroes of even index. Moreover, the paper~\cite{C-RI} contains an explicit construction
of such a vector field $\lambda_\omega$ from a Kasteleyn orientation $\omega$, and one can check that the diagram above commutes for such an orientation $\omega$ and vector field $\lambda_\omega$.

\begin{proof}[Proof of Theorem~\ref{thm:corr}]
We shall start with the most technical part of the demonstration, i.e. the computation of the composition $(I-i\varphi xJ)^{-1}\KW^\varphi(I-qR)$. To achieve this goal, it will be convenient to
consider the operator $\mathit{Succ}\in\mathit{End}(\C^\EE)$ defined as follows: if $e$ is an oriented edge with terminus $t(e)=v$, then
\[
\mathit{Succ}(f)(e)=\varphi(e)x_e\sum_{e'\in \EE_v}\omega(e,e')\,f(e'),
\]
where $\omega(e,e')=\exp(\frac{i}{2}\alpha_\lambda(e,e'))$ for $e'\neq \bar{e}\in \EE_v$ and $\omega(e,\bar{e})=-i$.
Also, let $T\in\mathit{End}(\C^\EE)$ be the endomorphism given by $T=\mathit{Succ}+i\varphi xJ$. By definition, the Kac-Ward operator $\KW^\varphi$ is equal to
$I-T$. Now, the operator $(I+i\varphi xJ)(I-T)$ decomposes into
\[
(I+i\varphi xJ)(I-T)=I-\mathit{Succ}+\mathit{Com},
\]
where
\[
\mathit{Com}(f)(e)=(-i\varphi xJT)(f)(e)=-i\varphi(e) x_e(Tf)(\bar{e})=-i x_e^2\sum_{\genfrac{}{}{0pt}{}{e'\in \EE_v}{e'\neq e}}\omega(\bar{e},e')\,f(e')
\]
if $e$ has origin $o(e)=v$. Let us compute the composition $(I+i\varphi xJ)(I-T)(I-qR)$ using the decomposition displayed above. If $e$ has terminus $t(e)=v$, then
\begin{align*}
(\mathit{Succ}(I-qR))(f)(e)&=\mathit{Succ}(f)(e)-\mathit{Succ}(q(R(f)))(e)\cr
	&=\varphi(e) x_e\sum_{e'\in \EE_v}\omega(e,e')\left(f(e')-q_{e'}f(R(e'))\right)\cr
	&=\varphi(e) x_e\sum_{e'\in \EE_v}\left(\omega(e,e')-\omega(e,R^{-1}(e'))\,q_{R^{-1}(e')}\right)f(e')\cr
	&=-2i\varphi(e) x_e\,f(\bar{e}).
\end{align*}
Therefore, we have the equality $\mathit{Succ}(I-qR)=-2i\varphi xJ$. Similarly, given $e$ with origin $o(e)=v$,
\begin{align*}
(\mathit{Com}(I-qR))(f)(e)&=\mathit{Com}(f)(e)-\mathit{Com}(q(R(f)))(e)\cr\cr
	&=-ix_e^2\sum_{e'\in \EE_v\setminus\{e\}}\omega(\bar{e},e')\left(f(e')-q_{e'}f(R(e'))\right)\cr
	&=-ix_e^2\hskip-.2cm\sum_{e'\in \EE_v\setminus\{e,R(e)\}}\hskip-.2cm\left(\omega(\bar{e},e')-\omega(\bar{e},R^{-1}(e'))\,q_{R^{-1}(e')}\right)f(e')\cr
	&\phantom{=}+ix_e^2\left(\omega(\bar{e},R^{-1}(e))\,q_{R^{-1}(e)}\,f(e)-\omega(\bar{e},R(e))\,f(R(e))\right)\cr
	&=-x^2_e(I+qR)(f)(e).
\end{align*}
These two equalities lead to
\begin{align*}
(I+i\varphi xJ)(I-T)(I-qR)&=(I-\mathit{Succ}+\mathit{Com})(I-qR)\cr
	&=(I-qR)+2i\varphi xJ-x^2(I+qR)\cr
	&=(1-x^2)+2xi\varphi J-(1+x^2)qR.
\end{align*}
This allows us to determine the operator $M^\varphi$ defined by the commuting diagram
\[
\xymatrix{
\C^\EE\ar[rr]^{\KW^\varphi(\G,x)}& \hspace{1cm} & \C^\EE\\
\C^\EE\ar[u]^{I-qR}_\simeq\ar[rr]^{M^\varphi} & \hspace{1cm} & \C^\EE\ar[u]_{I-i\varphi xJ}^\simeq .}
\]
It is given by
\begin{align*}
M^\varphi&=(I-i\varphi xJ)^{-1}(I-T)(I-qR)\cr
	&=\frac{1}{1+x^2}(I+i\varphi xJ)(I-T)(I-qR)\cr
	&=\frac{1-x^2}{1+x^2}+\frac{2x}{1+x^2}i\varphi J-qR\cr
	&=\cos(\theta)+\sin(\theta)i\varphi J-qR,
\end{align*}
using the parametrization $x=\tan(\theta/2)$ of the weights.

Now, observe that the three maps $\psi_B\circ\psi_W^{-1}$, $\psi_B\circ\bar{\phantom{e}}\circ \psi_W^{-1}$ and $\psi_B\circ R\circ\psi_W^{-1}$ associate to a fixed $w\in W$ the three black vertices
of $C_\G$ adjacent to $w$, as demonstrated in Figure~\ref{fig:psi}. Therefore, the operator $\widetilde{K}^\varphi$ defined by the commuting diagram
\[
\xymatrix{
\C^\EE\ar[rr]^{M^\varphi} & \hspace{1cm} & \C^\EE\\
\C^B\ar[u]^{\psi_B}_\simeq \ar[rr]^{\widetilde{K}^\varphi} & \hspace{1cm} & \C^W\ar[u]_{\psi_W}^\simeq}
\]
is given by the coefficients
\[
\widetilde{K}^\varphi_{wb}=
\begin{cases}
\cos(\theta_e)&\text{if the edge $(w,b)$ is perpendicular to $e\in\EE$;}\\ 
\varphi(e) i \sin(\theta_e)&\text{if $(w,b)$ is to the left of $e\in\EE$;} \\
-q_e&\text{if $(w,b)$ is in the ``corner" of $e$ and $R(e)$;} \\
0 & \text{if $w$ and $b$ are not adjacent in $C_\G$.}
\end{cases}
\]
This is illustrated below.

\begin{figure}[h]
\labellist\small\hair 2.5pt
\pinlabel {$e$} at 610 75
\pinlabel {$R(e)$} at -40 260
\pinlabel {$\varphi(e) i \sin(\theta_e)$} at 340 140
\pinlabel {$\varphi(e)^{-1} i \sin(\theta_e)$} at 340 10
\pinlabel {$\cos(\theta_e)$} at 95 40
\pinlabel {$\cos(\theta_e)$} at 590 15
\pinlabel {$-q_e$} at 175 160
\endlabellist\begin{center}
\includegraphics[height=3cm]{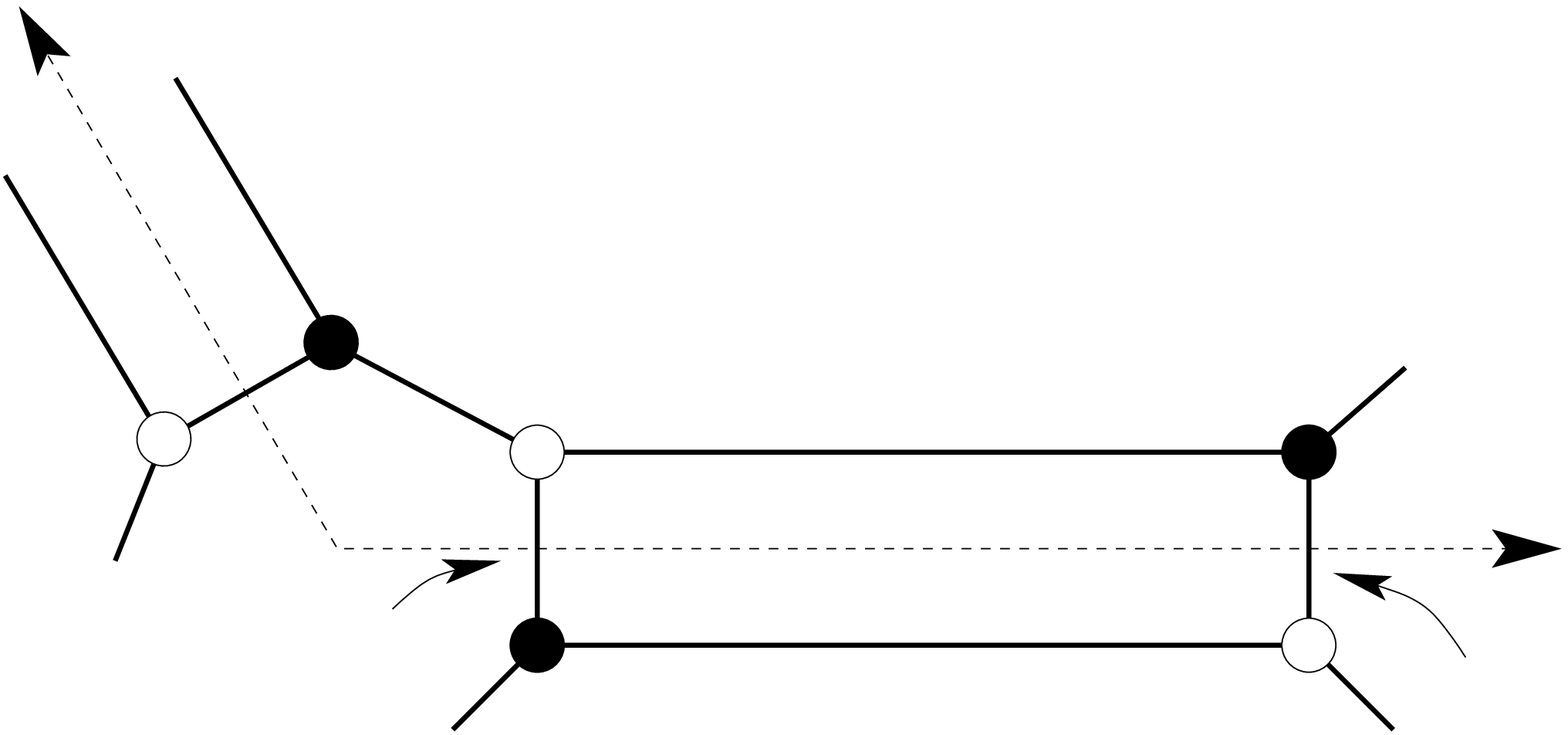}
\end{center}
\end{figure}

\noindent In other words, $\widetilde{K}^\varphi$ is precisely the Kasteleyn operator of $(C_\G,y)$ associated to the 1-cocycle $\varphi_C\colon\EE(C_\G)\to\C^*$ (recall Figure~\ref{fig:cocycles})
and to the map $\tilde\omega\colon E(C_\G)\to S^1\subset\C^*$ given by
\[
\tilde\omega(w,b)=
\begin{cases}
1&\text{if $(w,b)$ is perpendicular to $e\in\EE$;}\\ 
i&\text{if $(w,b)$ is to the left of $e\in\EE$;} \\
-q_e&\text{if $(w,b)$ is in the corner of $e$ and $R(e)$.}
\end{cases}
\]
Extend this map to a 1-cochain $\tilde\omega\colon\EE(C_\G)\to S^1$ by setting $\tilde\omega(b,w)=\tilde\omega(w,b)^{-1}$. Now, observe that for any face
$f$ of $C_\G\subset\SI$,
\[
\tilde\omega(\partial f)=\prod_{e\in\partial f}\tilde\omega(e)=(-1)^{\frac{|\partial f|}{2}+1}.
\]
This is obvious for the rectangular faces; for the faces corresponding to vertices of $\G$, use the fact that the angles $\beta_e$ add up to $2\pi$ around each vertex;
for the faces corresponding to faces of $\G$, use the fact that the angles $\alpha_\lambda(e,e')$ add up to an odd multiple of $2\pi$ around each face since the vector field $\lambda$ has
zeroes of even index. Furthermore, if $\gamma$ denotes a 1-cycle in $C_\G\subset\SI$, then one easily checks that $\tilde\omega(\gamma)^2=\exp(i\,\mathit{rot}_\lambda(\gamma))=1$,
where $\mathit{rot}_\lambda(\gamma)$ is the rotation angle of the velocity vector field along the closed curve $\gamma$ with respect to $\lambda$. Therefore, the $S^1$-valued 1-cochain
$\tilde\omega$ is cohomologous to a
Kasteleyn orientation $\omega$. In other words, $\tilde\omega$ can be transformed into $\omega$ by a sequence of the following transformation: multiply all the edges adjacent to a fixed vertex
of $C_\G$ by some complex number of modulus 1. This defines diagonal operators $d_B$ and $d_W$ such that the following diagram commutes:
\[
\xymatrix{
\C^B\ar[rr]^{\widetilde{K}^\varphi} & \hspace{1cm} & \C^W\\
\C^B\ar[u]^{d_B}_\simeq \ar[rr]^{K^{\varphi_C}(C_\G,y)} & \hspace{1cm} & \C^W.\ar[u]_{d_W}^\simeq}
\]
By connectedness of $\G$, these diagonal operators are unique up to simultaneous multiplication by a fixed element of $S^1$.

To complete the proof of the theorem, it remains to check that
\[
\psi_B\circ d_B=D^{-1/2}\circ\psi_B\quad\text{and}\quad\psi_W\circ d_W=D^{-1/2}\circ\psi_W
\]
for some well-chosen square root of the operator $D$. To do so, let us make the diagonal operators $d_B=(d_b)_{b\in B}$ and $d_W=(d_w)_{w\in W}$ more explicit.
Focusing on the rectangular face of $C_\G$ around the edge $e$ leads to the equations $d_{\psi_W(e)}=\pm d_{\psi_B(e)}$ and $d_{\psi_W(e)}=\pm i\,d_{\psi_B(\bar{e})}$.
Similarly, looking at the face of $C_\G$ around the vertex $v$ leads to the equations $d_{\psi_B(R(e))}=\pm q^{-1}_e d_{\psi_W(e)}$ for all $e\in\EE_v$.
In a nutshell, the operators $d_B^2$ and $d_W^2$ are uniquely determined (up to a global rotation) by the equations
\[
d^2_{\psi_B(e)}=d^2_{\psi_W(e)}=-d^2_{\psi_B(\bar{e})}\quad\text{and}\quad d^2_{\psi_B(R(e))}=\exp(-i\beta_e)d^2_{\psi_B(e)}\quad\text{for all $e\in\EE$}.
\]
By definition, the operator $D$ satisfies the equations $D_e=-D_{\bar{e}}$ and $D_{R(e)}=\exp(i\beta_e)D_e$. It follows that $d_B,d_W$ can be chosen so that
$d^2_{\psi_B(e)}=d^2_{\psi_W(e)}=D_e^{-1}$ for all $e\in\EE$. This is equivalent to the equalities $\psi_\bullet\circ d_\bullet=D^{-1/2}\circ\psi_\bullet$ for $\bullet=B,W$, and the proof is completed.
\end{proof}

\bigbreak

Theorem~\ref{thm:corr} easily leads to the following result, which was first obtained in the genus one case by Duminil-Copin and the author~\cite{C-D}.
(On how to make sense of the determinant of the Kasteleyn operator, recall the remark at the end of paragraph~\ref{sub:Kast}.)

\begin{corollary}
\label{cor:det}
Given a weighted surface graph $(\G,x)\subset\SI$, there is a Kasteleyn orientation on $C_\G\subset\SI$ such that
\[
\det(\KW^\varphi(\G,x))= 2^{-|V(\G)|}\prod_{e\in E(\G)}(1+x_e^2)\,\det(K^{\varphi_C}(C_\G,y))
\]
for any 1-cochain $\varphi\colon\EE\to\C^*$.
\end{corollary}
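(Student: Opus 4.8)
The plan is to derive Corollary~\ref{cor:det} directly from the commuting diagram of Theorem~\ref{thm:corr} by taking determinants around the square. Reading off the diagram, the composition along the top-left gives $\KW^\varphi(\G,x)\circ(I-qR)\circ D^{-1/2}\circ\psi_B$, while the bottom-right composition gives $\psi_W\circ K^{\varphi_C}(C_\G,y)$ followed by the identifications $(I-i\varphi xJ)\circ D^{-1/2}$ on the right column. Since all vertical maps are isomorphisms and the two routes from $\C^B$ to the upper-right $\C^\EE$ agree, I would write the identity
\[
\KW^\varphi(\G,x)\circ(I-qR)\circ D^{-1/2}\circ\psi_B=(I-i\varphi xJ)\circ D^{-1/2}\circ\psi_W\circ K^{\varphi_C}(C_\G,y).
\]
Because $\psi_B$ and $\psi_W$ are bijections between bases (hence have determinant $\pm1$, equal after fixing orderings), taking determinants of both sides and cancelling the equal factors $\det(\psi_\bullet)$ and $\det(D^{-1/2})$ (which appear symmetrically on the two sides) should leave an identity relating $\det(\KW^\varphi)$ to $\det(K^{\varphi_C})$ up to the two remaining ``correction'' factors $\det(I-qR)$ and $\det(I-i\varphi xJ)$.

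The heart of the computation is therefore evaluating these two determinants explicitly. First I would compute $\det(I-i\varphi xJ)$: since $J$ is the involution $e\mapsto\bar e$, the space $\C^\EE$ decomposes into $2$-dimensional $J$-invariant blocks indexed by unoriented edges $e\in E(\G)$, and on each block the operator $I-i\varphi xJ$ acts by a $2\times2$ matrix $\left(\begin{smallmatrix}1&-i\varphi(\bar e)x_e\\-i\varphi(e)x_e&1\end{smallmatrix}\right)$ whose determinant is $1+\varphi(e)\varphi(\bar e)x_e^2=1+x_e^2$ using $\varphi(\bar e)=\varphi(e)^{-1}$. Multiplying over all edges yields $\det(I-i\varphi xJ)=\prod_{e\in E(\G)}(1+x_e^2)$. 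Next I would handle $\det(I-qR)$: here $R$ cyclically permutes the oriented edges around each vertex, so $\C^\EE$ splits into $R$-invariant blocks indexed by vertices $v\in V(\G)$, each of size $|\EE_v|$. On the block for $v$ the operator $qR$ is a weighted cyclic permutation; its eigenvalues are the $|\EE_v|$-th roots of the product of weights $\prod_{e\in\EE_v}q_e$, and since the angles $\beta_e$ sum to $2\pi$ around $v$ (exactly the fact used in the proof of Theorem~\ref{thm:corr}), this product equals $\exp(\frac{i}{2}\cdot 2\pi)=-1$. Thus the eigenvalues of $qR$ on this block are the $|\EE_v|$-th roots of $-1$, and $\det(I-qR|_v)=\prod_\zeta(1-\zeta)=1-(-1)=2$. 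Multiplying over all $v\in V(\G)$ gives $\det(I-qR)=2^{|V(\G)|}$.

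Assembling these pieces, the determinant of the left-hand composition contributes a factor $\det(\KW^\varphi)\cdot\det(I-qR)=\det(\KW^\varphi)\cdot 2^{|V(\G)|}$ beyond the common $\det(D^{-1/2})\det(\psi_B)$, while the right-hand side contributes $\det(I-i\varphi xJ)\cdot\det(K^{\varphi_C})=\prod_e(1+x_e^2)\cdot\det(K^{\varphi_C})$ beyond the common factors. Cancelling and rearranging yields exactly
\[
\det(\KW^\varphi(\G,x))=2^{-|V(\G)|}\prod_{e\in E(\G)}(1+x_e^2)\,\det(K^{\varphi_C}(C_\G,y)),
\]
with the Kasteleyn orientation $\omega$ being the one produced in the proof of Theorem~\ref{thm:corr}. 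The main obstacle I anticipate is bookkeeping the sign and normalization subtleties in $\det(K^{\varphi_C})$: as the remark at the end of subsection~\ref{sub:Kast} stresses, the Kasteleyn determinant is only well-defined up to a global sign (and changes sign under an equivalent orientation), so I must verify that the specific orientation $\omega$ and the specific square root of $D$ chosen in Theorem~\ref{thm:corr} make both sides match exactly, rather than merely up to sign. I would resolve this by checking the constant coefficient (the empty-configuration term) on both sides, which is $+1$ for $\det(\KW^\varphi)$ by the convention recalled in subsection~\ref{sub:KW}, thereby pinning down the sign.
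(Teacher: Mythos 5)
Your proposal is correct and follows essentially the same route as the paper's proof: take determinants in the commuting diagram of Theorem~\ref{thm:corr}, compute $\det(I-i\varphi xJ)=\prod_{e\in E(\G)}(1+x_e^2)$ via the edge-block decomposition and $\det(I-qR)=2^{|V(\G)|}$ via the vertex-block decomposition using $\sum_{e\in\EE_v}\beta_e=2\pi$, then absorb the residual unimodular constant into the choice of Kasteleyn orientation. The only (immaterial) difference is how that constant is pinned down: you evaluate the constant coefficient at $x=0$, whereas the paper specializes to $\{\pm 1\}$-valued cochains $\varphi$ to see that both sides are real, so the constant is a sign, which is then removed by passing to an equivalent Kasteleyn orientation.
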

\begin{proof}
First note that 
\[
\det(I-i\varphi xJ)=\prod_{e\in E(\G)}\det\left(\begin{array}{cc}1&-i\varphi(\bar{e}) x_e\cr -i\varphi(e) x_e&1\cr\end{array}\right)=\prod_{e\in E(\G)}(1+ x_e^2).
\]
Also, the partition ${\EE}=\bigsqcup_{v\in V(\G)}\EE_v$ induces a decomposition $I-qR=\bigoplus_{v\in V(\G)}(I-qR)_v$. This leads to 
\[
\det(I-qR)=\prod_{v\in V(\G)}\det((I-qR)_v)=\prod_{v\in V(\G)}\left(1-{\textstyle\prod_{e\in \EE_v}q_e}\right)=2^{|V(\G)|},
\]
since $\sum_{e\in\EE_v}\beta_e=2\pi$. Hence, Theorem~\ref{thm:corr} implies that the equality of the corollary holds for any 1-cochain $\varphi$ up to multiplication by a fixed complex number of modulus 1.
For $\varphi$ taking values in $\{\pm 1\}$, both sides are real; therefore, the equality holds up to a global sign. One can then choose a Kasteleyn orientation such that the identity holds.
\end{proof}

Let us now assume that $\varphi\colon\EE\to\C^*$ is a 1-cocycle. In this case, one easily checks that the determinant of $\KW^\varphi(\G,x)$ only depends on the cohomology class of $\varphi$ in
$H^1(\SI;\C^*)$. By abuse of notation, we shall also denote such a class by $\varphi$. Here is the announced generalized Kramers-Wannier duality.

\begin{corollary}
\label{cor:KW1}
For any weighted graph $(\G,x)\subset\SI$ and any class $\varphi\in H^1(\SI;\C^*)$,
\[
2^{|V(\G)|}\prod_{e\in E(\G)}(1+x_e)^{-1} \det(\KW^\varphi(\G,x))=2^{|V(\G^*)|}\prod_{e\in E(\G)}(1+x^*_e)^{-1} \det(\KW^\varphi(\G^*,x^*)).
\]
\end{corollary}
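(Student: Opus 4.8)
The plan is to deduce the duality directly from Corollary~\ref{cor:det}, applied once to $(\G,x)$ and once to $(\G^*,x^*)$, exploiting the observation recorded in subsection~\ref{sub:graph} that the associated bipartite graph and its weights coincide: $(C_{\G^*},y(x^*))=(C_\G,y)$. Concretely, Corollary~\ref{cor:det} for $(\G,x)$ reads
\[
2^{|V(\G)|}\,\textstyle\prod_{e\in E(\G)}(1+x_e^2)^{-1}\det(\KW^\varphi(\G,x))=\det(K^{\varphi_C}(C_\G,y)),
\]
while for $(\G^*,x^*)$, using $(C_{\G^*},y(x^*))=(C_\G,y)$, it reads
\[
2^{|V(\G^*)|}\,\textstyle\prod_{e\in E(\G)}(1+(x^*_e)^2)^{-1}\det(\KW^\varphi(\G^*,x^*))=\det(K^{\varphi_C}(C_\G,y)),
\]
where throughout the products range over $E(\G)\cong E(\G^*)$ via the bijection $e\mapsto e^*$. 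Equating the two right-hand sides yields the intermediate relation
\[
2^{|V(\G)|}\,\textstyle\prod_{e}(1+x_e^2)^{-1}\det(\KW^\varphi(\G,x))=2^{|V(\G^*)|}\,\textstyle\prod_{e}(1+(x^*_e)^2)^{-1}\det(\KW^\varphi(\G^*,x^*)).
\]

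Granting this relation, the stated corollary is then a purely local algebraic matter. Since the primal and dual weights satisfy $x_e+x^*_e+x_ex^*_e=1$, that is $x^*_e=(1-x_e)/(1+x_e)$, a direct computation gives the edgewise identity
\[
\frac{1+x_e^2}{1+x_e}=\frac{1+(x^*_e)^2}{1+x^*_e}\qquad\text{for every }e\in E(\G),
\]
which in the parametrization $x_e=\tan(\theta_e/2)$ is the invariance of $\cos(\theta_e/2)\bigl(\cos(\theta_e/2)+\sin(\theta_e/2)\bigr)$ under $\theta_e\mapsto\theta^*_e=\pi/2-\theta_e$. Multiplying the intermediate relation through by the single number $\prod_e\frac{1+x_e^2}{1+x_e}=\prod_e\frac{1+(x^*_e)^2}{1+x^*_e}$ then converts the factor $\prod_e(1+x_e^2)^{-1}$ into $\prod_e(1+x_e)^{-1}$ on the left and $\prod_e(1+(x^*_e)^2)^{-1}$ into $\prod_e(1+x^*_e)^{-1}$ on the right, giving exactly the asserted equality.

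The step that requires genuine care — and the one I expect to be the main obstacle — is the claim that the two applications of Corollary~\ref{cor:det} produce literally the \emph{same} Kasteleyn determinant on $C_\G$. The twisting cochain $\varphi_C$ induced from a representative of $\varphi$ on $\G$ is supported on the edges of $C_\G$ running parallel to $\G$, whereas the one induced from $\G^*$ is supported on the edges parallel to $\G^*$ (i.e. perpendicular to $\G$); the two are cohomologous on $C_\G$, since both represent the image of $\varphi$ in $H^1(\SI;\C^*)$, but they sit in different gauges, and in general a change of gauge scales a Kasteleyn determinant by a factor $\prod_{w\in W(C_\G)}c(w)\big/\prod_{b\in B(C_\G)}c(b)$. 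I would resolve this by running both applications with one fixed vector field $\lambda$ on $\SI$, so that Theorem~\ref{thm:corr} supplies a single Kasteleyn orientation on the common graph $C_\G$, together with cocycle representatives on $\G$ and $\G^*$ chosen compatibly (for instance induced from a common cochain $\varphi_D$ on $D_\G$, as in Figure~\ref{fig:cocycles}). Because $\det(\KW^\varphi(\G,x))$ and $\det(\KW^\varphi(\G^*,x^*))$ each depend only on the cohomology class of $\varphi$, this choice is harmless for the left-hand sides; the remaining point is to check that the coboundary relating the two gauges on $C_\G$ has balanced black and white vertex contributions (which is plausible since $|B(C_\G)|=|W(C_\G)|$ via the bijections $\psi_B,\psi_W$, and the local gadgets of $C_\G$ around each edge-rectangle and each vertex/face match up), so that the gauge factor is trivial and the two Kasteleyn determinants coincide. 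Verifying this local cancellation is the technical heart of the argument.
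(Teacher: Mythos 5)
Your proposal is correct and follows essentially the same route as the paper's proof: a double application of Corollary~\ref{cor:det} on the common bipartite graph $(C_\G,y(x))=(C_{\G^*},y(x^*))$, combined with the edgewise identity $\frac{1+x_e^2}{1+x_e}=\frac{1+(x^*_e)^2}{1+x^*_e}$. The ``technical heart'' you flag is resolved in the paper exactly as you suggest: both cocycles are induced from a common cochain $\varphi_D$ on $D_\G$ (chosen with $\varphi_1\varphi_3^{-1}=\varphi_2\varphi_4^{-1}$, so that $\varphi(e)=\varphi_*(e^*)$), and the coboundary relating $\varphi_C$ to $(\varphi_*)_C$ has, rectangle by rectangle, equal products over black and over white vertices, so the gauge change leaves the Kasteleyn determinant unchanged.
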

\begin{proof}
As mentioned in subsection~\ref{sub:graph}, the weighted graph $(C_{\G^*},y(x^*))$ associated to $(\G^*,x^*)$ is equal to the weighted graph $(C_\G,y(x))$ associated to $(\G,x)$.
Fix a cocycle $\varphi_D$ on $D_\G$ as in Figure~\ref{fig:cocycles} with $\varphi_1\varphi_3^{-1}=\varphi_2\varphi_4^{-1}$ -- any class can be represented by such a cocycle --
and consider the associated cocycles $\varphi$, $\varphi_*$, $\varphi_C$ and $(\varphi_*)_C$, who all define the same cohomology class.
The cocycles $\varphi_C$ and $(\varphi_*)_C$ are not equal, but cohomologous, and the corresponding transformations of the Kasteleyn matrices do not change their determinant. 
Corollary~\ref{cor:det} together with the equality $\frac{1+x^2}{1+x}=\frac{1+(x^*)^2}{1+x^*}$ then gives the result.
\end{proof}

As mentioned in subsection~\ref{sub:KW}, for a cocycle $\varphi\colon\EE(\G)\to\{\pm 1\}$, $\det(\KW^\varphi(\G,x))$ is the square of a polynomial in the weight variables $x_e$.
As the constant coefficient of this determinant is equal to $1$, we can pick such a square root $|\KW^\varphi(\G,x)|^{1/2}$ by requiring its constant coefficient to be $+1$.
Taking a closer look at the sign leads to the following duality.

\begin{corollary}
\label{cor:KW2}
For any weighted graph $(\G,x)\subset\SI$ and any $\varphi\in H^1(\SI;\{\pm 1\})$,
\[
2^{\frac{|V(\G)|}{2}}\prod_{e\in E(\G)}(1+x_e)^{-\frac{1}{2}}|\KW^\varphi(\G,x)|^{\frac{1}{2}}=
(-1)^{A(\varphi)}\,2^{\frac{|V(\G^*)|}{2}}\prod_{e\in E(\G)}(1+x^*_e)^{-\frac{1}{2}} |\KW^\varphi(\G^*,x^*)|^{\frac{1}{2}},
\]
where $A(\varphi)\in\Z_2$ is the Arf invariant of the spin structure obtained by the action of $\varphi$ on $\lambda$.
\end{corollary}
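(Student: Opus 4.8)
The plan is to bootstrap from Corollary~\ref{cor:KW1}, which already establishes the corresponding identity for the \emph{determinants} — equivalently, for the squares of the two quantities appearing in the statement. Indeed, squaring either side of the asserted identity (the sign $(-1)^{A(\varphi)}$ disappears) returns exactly the two sides of Corollary~\ref{cor:KW1}. Writing $p(x)=|\KW^\varphi(\G,x)|^{1/2}$ and $p^*(x^*)=|\KW^\varphi(\G^*,x^*)|^{1/2}$ for the square roots normalized to have constant coefficient $+1$, Corollary~\ref{cor:KW1} says that the two sides have equal squares as real-analytic functions of $x\in[0,1]^{E(\G)}$. Hence they agree up to a sign $\epsilon(\varphi)\in\{\pm 1\}$; since $[0,1]^{E(\G)}$ is connected and both sides are continuous, this sign is a global constant, independent of $x$. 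The whole problem thus reduces to computing $\epsilon(\varphi)$ and checking that $\epsilon(\varphi)=(-1)^{A(\varphi)}$.

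To pin down $\epsilon(\varphi)$, I would evaluate at the degenerate point $x=0$, which corresponds to $x^*=1$ under $x+x^*+xx^*=1$. There the left-hand side trivializes: $\KW^\varphi(\G,0)=I$, so $p(0)=1$ and the left-hand side equals $2^{|V(\G)|/2}$. Consequently $\epsilon(\varphi)$ is determined by the value of the normalized dual square root at the all-weights-one point, i.e. by the sign of $p^*(1)$, the positive prefactor $\prod_{e}(1+x^*_e)^{-1/2}=2^{-|E(\G)|/2}$ playing no role. So everything comes down to computing $p^*(1)$.

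For this I would invoke the combinatorial expansion of the normalized Kac-Ward square root as a signed sum over even subgraphs,
\[
p^*(x^*)=\sum_{E'\ \mathrm{even}}(-1)^{q_\varphi([E'])}\prod_{e\in E'}x^*_e,
\]
where $q_\varphi\colon H_1(\SI;\Z_2)\to\Z_2$ is the quadratic form of the spin structure obtained by the action of $\varphi$ on $\lambda$ (this is the surface Kac-Ward formula of~\cite{Cim2}, whose sign depends on $E'$ only through its homology class). At $x^*=1$ each even subgraph contributes its sign, so
\[
p^*(1)=\sum_{E'\ \mathrm{even}}(-1)^{q_\varphi([E'])}=2^{|V(\G)|-1}\sum_{a\in H_1(\SI;\Z_2)}(-1)^{q_\varphi(a)},
\]
because each homology class is hit by exactly $2^{|V(\G)|-1}$ even subgraphs of $\G^*$ (the null-homologous ones form the group of boundaries, of this order, all carrying sign $+1$). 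The Gauss--Milgram identity $\sum_{a}(-1)^{q_\varphi(a)}=2^{g}(-1)^{\Arf(q_\varphi)}=2^{g}(-1)^{A(\varphi)}$ then yields $p^*(1)=2^{|V(\G)|-1+g}(-1)^{A(\varphi)}$. A short bookkeeping using Euler's formula $|V(\G)|-|E(\G)|+|V(\G^*)|=2-2g$ shows that the powers of $2$ on the two sides match exactly, leaving precisely $\epsilon(\varphi)=(-1)^{A(\varphi)}$.

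The main obstacle is this third step: one needs the signed even-subgraph expansion of $\det(\KW^\varphi)^{1/2}$ together with the fact that its sign is governed by a genuine quadratic form $q_\varphi$ on $H_1(\SI;\Z_2)$ whose Arf invariant is $A(\varphi)$. This is exactly the theory underlying~\eqref{eqn:KW}, so in the write-up I would either cite~\cite{Cim2} directly or re-derive it through the dimer model on $C_\G$, combining Corollary~\ref{cor:det} with the Arf-invariant formula~\eqref{eqn:Pf}. A secondary point requiring care is that a single vector field $\lambda$ — with zeroes in the complement of the double $D_\G=\G\cup\G^*$, hence away from both $\G$ and $\G^*$ — and thus a single quadratic form must serve both graphs, so that ``$A(\varphi)$'' is unambiguous; this is automatic once $\lambda$ is chosen to vanish only off $D_\G$.
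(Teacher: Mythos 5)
Your proof is correct, and it pins down the sign by a genuinely different route than the paper. The common part is the reduction via Corollary~\ref{cor:KW1}: the two sides have equal squares, and since they are polynomials in the weights they agree up to one global sign (your connectedness argument is slightly loose where the polynomials vanish, but $P^2=Q^2$ in an integral domain gives $P=\pm Q$ outright; the paper makes this same reduction implicitly). From there the paths diverge. You evaluate at $x=0$, $x^*=1$ and compute the sign of $|\KW^\varphi(\G^*,1)|^{1/2}$ separately for each $\varphi$, using the signed even-subgraph expansion of the normalized square root (the result of~\cite{Cim2} recalled in subsection~\ref{sub:inv}, in its twisted form), the count of $2^{|V(\G)|-1}$ even subgraphs of $\G^*$ in each homology class, and the Gauss--Milgram identity $\sum_{a\in H_1(\SI;\Z_2)}(-1)^{q(a)}=2^g(-1)^{\Arf(q)}$; your bookkeeping with Euler's formula is right. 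The paper instead evaluates at $x=1$, $x^*=0$ and never computes any individual sign: writing the unknown sign as $(-1)^{A(\varphi)}$, it substitutes $|\KW^\varphi(\G,1)|^{1/2}=(-1)^{A(\varphi)}2^{|V(\G^*)|+g-1}$ into the generalized Kac-Ward formula~\eqref{eqn:KW}, uses $Z^\mathit{Ising}(\G,1)=2^{|E(\G)|-|V(\G)|+1}$ (the count of even subgraphs of $\G$), and deduces $\sum_{\varphi}(-1)^{\Arf(\varphi)+A(\varphi)}=2^{2g}$; since there are exactly $2^{2g}$ summands, each must equal $+1$, which fixes all the signs simultaneously. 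The trade-off: the paper's counting trick uses only ingredients already stated in the paper (formula~\eqref{eqn:KW} and Euler's formula) and sidesteps any Gauss-sum evaluation, but it is indirect, identifying the signs only by averaging over all of $H^1(\SI;\{\pm 1\})$ at once; your argument is pointwise in $\varphi$ and conceptually more self-contained, at the price of importing the (standard) Gauss--Milgram formula and the twisted form of the signed expansion, which the paper itself only asserts via~\cite{Cim2}. Your closing remark that a single vector field $\lambda$ with zeroes off $\G\cup\G^*$ must serve both graphs, so that $A(\varphi)$ is unambiguous, is a genuine point that the paper leaves implicit and is worth keeping.
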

\begin{proof}
By Corollary~\ref{cor:KW1}, we only need to determine the sign $A(\varphi)$ in the equation above. Let $g$ denote the genus of $\SI$. Setting $x=1$ (and therefore, $x^*=0$) leads to
\[
|\KW^\varphi(\G,1)|^{1/2}=(-1)^{A(\varphi)}2^{(|V(\G^*)|+|E(\G)|-|V(\G)|)/2}=(-1)^{A(\varphi)}2^{|V(\G^*)|+g-1},
\]
using the fact that $|V(G)|-|E(G)|+|V(G^*)|$ is equal to the Euler characteristic $\chi(\SI)=2-2g$.
Furthermore, the Ising partition function $Z^\mathit{Ising}(\G,x)$ with weights $x=1$ is nothing but the cardinality of the $\Z_2$-vector space of 1-cycles modulo 2 in $\G$.
Since $\G$ is connected, the dimension of this space is classically equal to $|E(\G)|-|V(\G)|+1$. The Kac-Ward formula~\eqref{eqn:KW} now implies
\[
2^{|E(\G)|-|V(\G)|+1}=Z^\mathit{Ising}(\G,1)=\frac{1}{2^g}\sum_{\varphi\in H^1(\SI;\{\pm 1\})}(-1)^{\Arf(\varphi)}(-1)^{A(\varphi)}2^{|V(G^*)|+g-1},
\]
that is,
\[
\sum_{\varphi\in H^1(\SI;\{\pm 1\})}(-1)^{\Arf(\varphi)+A(\varphi)}=2^{2-\chi(\SI)}=2^{2g}.
\]
Since the set $H^1(\SI;\{\pm 1\})$ has precisely $2^{2g}$ elements, this shows that $A(\varphi)=\Arf(\varphi)$ for all $\varphi$, and the corollary is proved.
\end{proof}

To illustrate the power of Theorem~\ref{thm:corr}, let us finally show how it allows us to easily transfer the recent results Kenyon-Sun-Wilson~\cite{KSW} from the dimer to the
Ising model. Consider a locally finite planar graph $\mathcal{G}$ with edge weights $J=(J_e)_e\in(0,\infty)^{E(\mathcal{G})}$
and faces that are bounded topological discs. Let us assume that this weighted graph is {\em biperiodic\/}, i.e. invariant under the action of a lattice $L\simeq\Z^2$.
For $n\ge 1$, let $\G_n$ denote the finite weighted toric graph given by
$\G_n=\mathcal{G}/n L\subset\mathbb{T}^2$, and set $\G_1=:\G$. Fixing a basis of $L$, one can identify $H^1(\mathbb{T}^2;\C^*)$ with $(\C^*)^2$ so that any 1-cocycle
$\varphi\colon\mathbb{E}\to\C^*$ defines a pair $(z,w)$ of non-vanishing complex numbers. Finally, let us write $P_n(z,w)=\det\KW^\varphi(\G_n,x)$, where $x_e=\tanh(\beta J_e)$,
set $P_1=:P$, and denote by $H=\left(\begin{smallmatrix}A_z&B\\ B&A_w\end{smallmatrix}\right)$ the Hessian of $P$ at $(1,1)$.

\begin{corollary}\label{cor:tau}
At the critical inverse temperature $\beta=\beta_c$, the Ising partition function on $\G_n$ satisfies
\[
\log Z^J_{\beta_c}(\G_n)= n^2 f(\beta_c) + \mathsf{fsc}_1(\tau)+o(1),
\]
where $f$ is the free energy per fundamental domain
\[
f(\beta)=|V(\G)|\log(2)+\sum_{e\in E(\G)}\log\cosh(\beta J_e)+\frac{1}{2(2\pi i)^2} \int_{\mathbb T^2} \log P(z,w)\frac{dz}{z}\frac{dw}{w},
\]
$\tau$ is the modular parameter given by
\[
\tau=\frac{-B+i\sqrt{A_zA_w-B^2}}{A_w},
\]
and $\mathsf{fsc}_1(\tau)$ is the explicit universal finite-size correction term given in~\cite[Theorem 2 (a)]{KSW}, which is invariant under modular transformations.
\end{corollary}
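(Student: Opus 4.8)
The plan is to strip off the bulk contribution with the high-temperature expansion, rewrite the surviving Ising partition function as a signed sum of square roots of Kac-Ward determinants via the genus-one formula~\eqref{eqn:KW}, factorize each such determinant over Bloch-Floquet modes, and finally feed the resulting expression into~\cite[Theorem 2 (a)]{KSW}; Theorem~\ref{thm:corr} is what certifies that the relevant characteristic polynomial has the structure their theorem requires. Concretely, I would first record the high-temperature expansion
\[
Z^J_\beta(\G_n)=2^{|V(\G_n)|}\prod_{e\in E(\G_n)}\cosh(\beta J_e)\,Z^\mathit{Ising}(\G_n,x),\qquad x_e=\tanh(\beta J_e).
\]
Because $|V(\G_n)|=n^2|V(\G)|$ and the edge product is $n^2$-periodic, taking logarithms immediately produces the term $n^2\big(|V(\G)|\log 2+\sum_{e\in E(\G)}\log\cosh(\beta J_e)\big)$, reducing everything to the asymptotics of $\log Z^\mathit{Ising}(\G_n,x)$.

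Next I would apply~\eqref{eqn:KW} with $g=1$, writing $Z^\mathit{Ising}(\G_n,x)=\tfrac12\sum_\varphi(-1)^{\Arf(\varphi)}\det(\KW^\varphi(\G_n,x))^{1/2}$, where the four classes $\varphi\in H^1(\mathbb T^2;\{\pm1\})$ correspond to the sign-pairs $(z,w)\in\{\pm1\}^2$. Since the torus carries a constant vector field $\lambda$, the Kac-Ward operator on $\G_n$ is genuinely $(\Z/n)^2$-periodic (twisted by $\varphi$), so a Fourier block-diagonalization over the characters of $L/nL$ yields the factorization $P_n(z,w)=\det\KW^\varphi(\G_n,x)=\prod_{\zeta^n=z,\,\omega^n=w}P(\zeta,\omega)$. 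Consequently $\log Z^\mathit{Ising}(\G_n,x)=\log\big(\tfrac12\sum_{(z,w)\in\{\pm1\}^2}(-1)^{\Arf(\varphi)}\prod_{\zeta^n=z,\,\omega^n=w}P(\zeta,\omega)^{1/2}\big)$.

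This last expression is exactly the signed sum of square roots over the four theta-characteristics whose asymptotics are the content of~\cite[Theorem 2 (a)]{KSW}, and this is where Theorem~\ref{thm:corr} is used: by Corollary~\ref{cor:det}, $P(z,w)$ coincides with the dimer characteristic polynomial $\det K^{\varphi_C}(C_\G,y)$ up to the factor $2^{-|V(\G)|}\prod_e(1+x_e^2)$, which is independent of $(z,w)$. This identifies the zero set of $P$ with a dimer spectral (Harnack) curve, so that $P$ satisfies the hypotheses of~\cite{KSW}, the constant being swept into the bulk free energy. By~\cite{C-D}, $\beta=\beta_c$ is precisely the value at which this curve acquires a real node at $(1,1)$, so the hypotheses hold exactly at criticality, the modular parameter $\tau$ is read off from the Hessian $H$ of $P$ at $(1,1)$ as in the statement, and~\cite{KSW} produces the Riemann-sum bulk term $\tfrac{n^2}{2(2\pi i)^2}\int_{\mathbb T^2}\log P\,\tfrac{dz}{z}\tfrac{dw}{w}$ together with the universal, modular-invariant correction $\mathsf{fsc}_1(\tau)+o(1)$. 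Combining with the first step gives $\log Z^J_{\beta_c}(\G_n)=n^2 f(\beta_c)+\mathsf{fsc}_1(\tau)+o(1)$.

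The genuine analytic work lives in~\cite{KSW}; our contribution is the reduction and the verification of its hypotheses, and I expect the obstacles to be twofold. The delicate bookkeeping is to match the four classes $\varphi$ together with their signs $(-1)^{\Arf(\varphi)}$ to the four theta-characteristics and signs used in~\cite{KSW}, since a single sign error would destroy the cancellations that generate $\mathsf{fsc}_1$. The conceptual crux is that the Ising sum involves the square roots of the determinants rather than the determinants themselves, as in the plain dimer partition function; one must check that~\cite[Theorem 2 (a)]{KSW} is stated precisely for this $(c=\tfrac12)$ square-root combination, and that the non-degeneracy $A_zA_w-B^2>0$ (so that $\tau\in\mathbb H$) is guaranteed by the real-node structure furnished by Corollary~\ref{cor:det} and~\cite{C-D}.
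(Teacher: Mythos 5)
Your proposal is correct and follows essentially the same route as the paper: high-temperature expansion, the genus-one Kac--Ward formula~\eqref{eqn:KW}, identification of $P(z,w)$ with the bipartite dimer characteristic polynomial of $C_\G$ via Corollary~\ref{cor:det} (hence a Harnack curve by Kenyon--Okounkov--Sheffield, with criticality equivalent to the node at $(1,1)$ by~\cite{C-D}), and then the asymptotics of~\cite{KSW}. The only cosmetic differences are that you spell out the Bloch--Floquet factorization of $P_n$ over $n$-th roots of unity, whereas the paper absorbs this by invoking~\cite[Theorem 1]{KSW} with $E=\left(\begin{smallmatrix}n&0\\ 0&n\end{smallmatrix}\right)$ applied directly to the signed sum of $P_n(\pm1,\pm1)^{1/2}$.
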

\begin{proof}
Given a cocycle $\varphi\colon\mathbb{E}(\G)\to\C^*$ representing a class $(z,w)\in(\C^*)^2$, let us write $Q(z,w)=\det K^{\varphi_C}(C_{\G},y)$. (Recall that the cocycles
$\varphi$ and $\varphi_C$ define the same cohomology class.) Corollary~\ref{cor:det} now reads
\[
P(z,w)=2^{-|V(\G)|}\prod_{e\in E(\G)}(1+x_e^2)\,\,Q(z,w),
\]
and implies that the curves defined by $P(z,w)$ and $Q(z,w)$ coincide. Since $C_\G$ is bipartite, one can apply results of Kenyon-Okounkov-Sheffield~\cite{KOS}
and conclude that this is a special Harnack curve. As demonstrated in~\cite{C-D}, it allows us to show that $P(z,w)$ is strictly positive on $\mathbb{T}^2\setminus\{(1,1)\}$,
and vanishes at $(1,1)$ if and only if the inverse temperature is critical.
Recall that the high temperature expansion~\cite{vdW} for the Ising model on $(\G_n,J)$ can be stated as
\[
Z^J_\beta(\G_n)=2^{n^2|V(\G)|}\Big(\prod_{e\in E(\G)}\cosh(\beta J_e)\Big)^{n^2}\,Z^\mathit{Ising}(\G_n,x),
\]
while the genus one Kac-Ward formula~(\ref{eqn:KW}) reads
\[
Z^\mathit{Ising}(\G_n,x)=\frac{1}{2}\left(-P_n(1,1)^{1/2}+P_n(-1,1)^{1/2}+P_n(1,-1)^{1/2}+P_n(-1,-1)^{1/2}\right).
\]
The conclusion now follows from~\cite[Theorem 1]{KSW} in the special case $E=\left(\begin{smallmatrix}n&0\\ 0&n\end{smallmatrix}\right)$, applied to the equation above.
\end{proof}

\begin{example}
Consider the case of the rectangular lattice $\mathcal{G}$ with horizontal (resp. vertical) edges having weight $J$ (resp. $K$). Fixing the natural fundamental domain and basis of $\Z^2$, we obtain
\[
P(z,w)=(1+x^2)(1+y^2)-x(1-y^2)(z+z^{-1})-y(1-x^2)(w+w^{-1}),
\]
where $x=\tanh(\beta J)$ and $y=\tanh(\beta K)$. This leads to
\[
H=\begin{pmatrix}-2x(1-y^2)&0\cr 0&-2y(1-x^2)\end{pmatrix}\text{ and } \tau=i\sqrt{\frac{y(1-x^2)}{x(1-y^2)}}.
\]
Now, the model is at the critical temperature if and only if $x$ and $y$ satisfy the equality $1=x+y+xy$. Using the parametrization $x=\tan(\theta/2)$ and $y=\tan(\rho/2)$,
this equation can be written as $\theta+\rho=\frac{\pi}{2}$. This leads to the modular parameter $\tau=i\tan(\theta)$, which defines the isoradial embedding of $\mathcal{G}$ whose critical weights are the ones we started with.
\end{example}

\subsection{Kasteleyn versus discrete Dirac operators}
\label{sub:Kast-D}

Let us now assume that the graph $\G$ is isoradially embedded in a flat surface $\SI$, and endowed with the corresponding critical weights $x_e=\tan(\theta_e/2)$ (recall paragraph~\ref{sub:iso}).
Then, the associated graph $C_\G$ is also isoradially embedded in $\SI$, so one can define a discrete $\bar{\partial}$-operator $\bar{\partial}_C^\varphi\colon\C^B\to\C^W$ as explained
in paragraph~\ref{sub:Dirac}. Recall that this discretization is particularly relevant when the 1-cochain $\varphi$ is a so-called {\em discrete spin structure\/}, that is, when it represents
one of the $2^{2g}$ inverse square roots of the holonomy of the flat metric on $\SI$. Recall also that this operator $\bar{\partial}_C^\varphi$ depends on the choice of a vector field $X$ at the white
vertices of $C_\G$; without loss of generality, we shall assume that this vector field at the vertex $w=\psi_W(e)$ points towards $\psi_B(e)\in B$ (recall Figure~\ref{fig:psi}).

The main result of this paragraph is that the Kasteleyn operators associated to $2^{2g}$ non-equivalent Kasteleyn orientations on $(C_\G,y)$ are
simultaneously conjugate to the discrete $\bar{\partial}$-operators associated to $2^{2g}$ non-cohomologous discrete spin structures.
This can be understood as an instructive and explicit special case of~\cite[Theorem 4.8]{Cim1}, which deals with general bipartite isoradial graphs.
(See also~\cite[Theorem 10.1]{Ken} for the planar case). 

\begin{proposition}
\label{prop:Kast-D}
Given any Kasteleyn orientation $\omega$ on $C_\G\subset\SI$, there exists a unique discrete spin structure $\varphi_\omega\colon\EE(C_\G)\to S^1$ such that the following diagram commutes
\[
\xymatrix{
\C^B\ar[rr]^{K^\omega(C_\G,y)} & \hspace{1cm} & \C^W\\
\C^B\ar[u]^{\exp\left(-\frac{i}{2}\theta_B\right)}_\simeq\ar[rr]^{\bar{\partial}_C^{\varphi_\omega}}& \hspace{1cm} & \C^W,\ar[u]_{\exp\left(-\frac{i}{2}\theta_W\right)\circ\mu_W}^\simeq}
\]
where the diagonal operators $\theta_B=(\theta_b)_{b\in B}$, $\theta_W=(\theta_w)_{w\in B}$ and $\mu_W=(\mu_w)_{w\in W}$ are defined by $\theta_b=\theta_e$ if $b=\psi_B(e)$, 
$\theta_w=\theta_e$ if $w=\psi_W(e)$, and $\mu_w=\frac{1}{2}\sum_{e\in\EE(C_\G)_w}\sin(2\theta_e)$.
Furthermore, the map $\omega\mapsto\varphi_\omega$ defines a bijection between equivalence classes of Kasteleyn orientations on $C_\G\subset\SI$ and inverse square roots in $H^1(\SI;S^1)$
of the holonomy on $\SI$.
\end{proposition}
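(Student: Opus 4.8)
\emph{Step 1 (reducing to a pointwise identity).} Reading off the square, commutativity means
\[
K^\omega(C_\G,y)\circ\exp\!\left(-\tfrac{i}{2}\theta_B\right)=\exp\!\left(-\tfrac{i}{2}\theta_W\right)\circ\mu_W\circ\bar\partial_C^{\varphi_\omega}.
\]
I would compute both sides on a single oriented edge $(w,b)$ of $C_\G$ with $w\in W$ and $b\in B$, using Definitions~\ref{def:Kast} and~\ref{def:Dirac}. The left-hand entry is $\omega(w,b)\,y_{(w,b)}\exp(-\tfrac{i}{2}\theta_b)$, while on the right the factor $\mu_w$ cancels the $\tfrac{1}{\mu_w}$ of $\bar\partial_C$, leaving $\exp(-\tfrac{i}{2}\theta_w)\,\varphi_\omega(w,b)\,\exp(i\vartheta_X(w,b))\,\sin(\theta_{(w,b)})$. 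Since the isoradial weight is exactly $y_{(w,b)}=\sin(\theta_{(w,b)})$ and this is nonzero (all rhombi of $C_\G$ being nondegenerate), the two sine factors cancel and the diagram commutes if and only if
\[
\varphi_\omega(w,b)=\omega(w,b)\,\exp\!\left(\tfrac{i}{2}(\theta_w-\theta_b)-i\vartheta_X(w,b)\right),
\]
extended by $\varphi_\omega(b,w)=\varphi_\omega(w,b)^{-1}$. This formula is $S^1$-valued and is forced by the equation, so existence and uniqueness of the cochain making the diagram commute are immediate; the real content is that it is a \emph{discrete spin structure}.

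\emph{Step 2 (the cocycle condition).} First I would note that $\theta$ is a genuine function on the vertices of $C_\G$ (via the bijections $\psi_B,\psi_W$), so the factor $\exp(\tfrac{i}{2}(\theta_w-\theta_b))$ is a coboundary and telescopes to $1$ around every closed loop. Hence for any loop $\gamma$ in $C_\G$ one has $\varphi_\omega(\gamma)=\big(\prod_{\gamma}\omega\big)\exp(-i\sum_{w}\angle_w)$, where at each white vertex $w$ on $\gamma$ the two incident $\vartheta_X$-terms combine into the signed angle $\angle_w$ between the two edges of $\gamma$ at $w$, so that the auxiliary field $X$ drops out. To prove the cocycle condition I would run this over the three types of faces of $C_\G\subset\SI$: for a rectangular face the two white corners are right angles, so $\exp(-i\sum\angle_w)=e^{-i\pi}=-1=(-1)^{|\partial f|/2+1}=\prod_\gamma\omega$ and thus $\varphi_\omega(\partial f)=(\prod_\gamma\omega)^2=1$; for the faces around a vertex of $\G$ one uses $\sum_e\beta_e=2\pi$, and for those around a face of $\G$ one uses that the angles $\alpha_\lambda(e,e')$ sum to an odd multiple of $2\pi$ because $\lambda$ has zeroes of even index. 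This is precisely the bookkeeping already carried out for the auxiliary orientation $\tilde\omega$ in the proof of Theorem~\ref{thm:corr}.

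\emph{Step 3 (inverse square root of the holonomy).} Since $(\prod_\gamma\omega)^2=1$, the same loop computation gives $\varphi_\omega(\gamma)^2=\exp(-2i\sum_w\angle_w)$. The angles $\angle_w$ are differences of edge-directions read in the local flat charts of $\SI$; transporting these charts once around $\gamma$ rotates them by the metric holonomy, so $2\sum_w\angle_w$ differs from an element of $2\pi\Z$ exactly by the holonomy angle of $\gamma$. I would make this precise by comparing with $\rot_\lambda(\gamma)$ as in the identity $\tilde\omega(\gamma)^2=\exp(i\,\rot_\lambda(\gamma))$ of Theorem~\ref{thm:corr}, concluding that $\varphi_\omega(\gamma)^2$ equals the inverse holonomy of $\gamma$. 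Thus $\varphi_\omega$ is an $S^1$-valued $1$-cocycle whose square is the inverse of the holonomy, that is, a discrete spin structure. \textbf{I expect this identification of the white-vertex angle sum with the holonomy, together with the sign bookkeeping of Step 2, to be the main obstacle}, since it is where the geometry of the isoradial embedding and the even-index hypothesis on $\lambda$ genuinely enter.

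\emph{Step 4 (bijection).} Finally I would check that $\omega\mapsto[\varphi_\omega]\in H^1(\SI;S^1)$ is equivariant and descends to equivalence classes: multiplying $\omega$ by a $\{\pm1\}$-cocycle $\epsilon$ multiplies $\varphi_\omega$ by $\epsilon$, while flipping $\omega$ around a vertex multiplies $\varphi_\omega$ by a coboundary, so the class $[\varphi_\omega]$ depends only on the equivalence class of $\omega$ and $[\varphi_{\epsilon\omega}]=\epsilon\cdot[\varphi_\omega]$. By~\cite{C-RI} the set of equivalence classes of Kasteleyn orientations on $C_\G\subset\SI$ is a torsor over $H^1(\SI;\{\pm1\})\cong(\Z_2)^{2g}$, and the set of inverse square roots of the holonomy is likewise a torsor over the same group (two such roots differ by a $\{\pm1\}$-class, and at least one exists since the cone angles are odd multiples of $2\pi$). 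An equivariant map between torsors over the same group is automatically a bijection, which finishes the proof.
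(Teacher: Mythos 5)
Your overall strategy coincides with the paper's: solve the pointwise identity at each white vertex to get a forced formula for $\varphi_\omega$ (your uniform expression $\varphi_\omega(w,b)=\omega(w,b)\exp\left(\frac{i}{2}(\theta_w-\theta_b)-i\vartheta_X(w,b)\right)$ agrees with the paper's three-case formula once the convention that $X(\psi_W(e))$ points towards $\psi_B(e)$ is inserted), then verify that this cochain is a discrete spin structure, then deduce the bijection. Your Step 4 is in fact more complete than the paper's, which dismisses the last claim with ``follows easily from the definitions''; the torsor-equivariance argument is the right way to fill that in.

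The gap is in Steps 2--3, and it is precisely where you yourself locate the main difficulty. You justify the cocycle condition on the faces of $C_\G$ surrounding vertices and faces of $\G$ by citing the identities used for $\tilde\omega$ in the proof of Theorem~\ref{thm:corr}, namely $\sum_{e\in\EE_v}\beta_e=2\pi$ and the fact that the angles $\alpha_\lambda(e,e')$ add up to an odd multiple of $2\pi$ because $\lambda$ has zeroes of even index. But those are statements about rotation angles measured against the auxiliary vector field $\lambda$ of the Kac--Ward setting, and no such $\lambda$ enters Proposition~\ref{prop:Kast-D}: after your telescoping, the quantities remaining in $\varphi_\omega(\partial f)$ are genuine geometric angles of the flat metric at the white vertices. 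Around a vertex $v$ of $\G$, these contribute $\exp\left(-i\big(\deg(v)\pi+\sum_{e\in\EE_v}\theta_e\big)\right)$, and $\sum_{e\in\EE_v}\theta_e$ is half the cone angle at $v$, i.e.\ $\pi k_v$ with $k_v$ odd but in general different from $1$. So ``$2\pi$'' is simply false at the cone points -- which is exactly the situation this proposition is designed to cover -- and the verification only closes because odd multiples of $2\pi$ halve to the correct sign $(-1)^{k_v}=-1$ against the Kasteleyn sign $(-1)^{\deg(v)+1}$. The same correction applies to the face-type faces and to your holonomy computation in Step 3: the input is the hypothesis that all cone angles are odd multiples of $2\pi$ (this is what the paper's proof invokes), not any property of $\lambda$. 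Two smaller inaccuracies: your parenthetical ``all rhombi of $C_\G$ being nondegenerate'' contradicts the paper (the weight-$1$ edges correspond to degenerate rhombi), though the division is still legitimate since those edges have weight $\sin(\pi/2)=1\neq 0$; and for those degenerate edges the direction $\vartheta_X(w,b)$ is not defined by the embedding alone but by a limiting convention (it equals $\pi+\theta_e$ in the paper's computation), which your uniform formula silently presupposes.
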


\begin{proof}
Fix $f\in\C^B$ and $w\in W$. Following the notations of Figure~\ref{fig:psi}, let us write $e=\psi_W^{-1}(w)$, $e'=R(e)$, and $b_1=\psi_B(e)$, $b_2=\psi_B(\bar{e})$, $b_3=\psi_B(e')$ for the three vertices
of $C_\G$ adjacent to $w$. By definition, we have
\begin{align*}
\left(K^\omega(C_\G,y)\circ \exp\left(\textstyle{-\frac{i}{2}}\theta_B\right)\right)&(f)(w)=\omega(w,b_1)\exp\left(-{\textstyle\frac{i}{2}}\theta_e\right)\cos(\theta_e)f(b_1)+\cr
	&+\omega(w,b_2)\exp\left(-{\textstyle\frac{i}{2}}\theta_e\right)\sin(\theta_e)f(b_2)+\omega(w,b_3)\exp\left(-{\textstyle\frac{i}{2}}\theta_{e'}\right)f(b_3),
\end{align*}
and
\begin{align*}
\left(\exp\left(\textstyle{-\frac{i}{2}}\theta_W\right)\circ\mu_W\circ\bar{\partial}_C^{\varphi}\right)(f)(w)&=\exp\left(-{\textstyle\frac{i}{2}}\theta_e\right)\Big(\varphi(w,b_1)
	\cos(\theta_e)f(b_1)+\cr
	+&\varphi(w,b_2)\,i\,\sin(\theta_e)f(b_2)-\varphi(w,b_3)\exp(i\theta_e)f(b_3)\Big).
\end{align*}
Therefore, the diagram commutes if and only if the 1-cochain $\varphi\colon\EE(C_\G)\to S^1$ is given by
\[
\varphi(w,b_j)=
\begin{cases}
\omega(w,b_1)&\text{if $j=1$;}\\ 
-i\,\omega(w,b_2)&\text{if $j=2$;}\\
-\exp\left(\textstyle{-\frac{i}{2}}(\theta_e+\theta_{e'})\right)\omega(w,b_3)&\text{if $j=3$.}
\end{cases}
\]
Using the fact that $\omega$ is a Kasteleyn orientation and the fact that the cone angles of the singularities of $\SI$ are odd multiples of $2\pi$, one checks that $\varphi$ is a 1-cocycle.
Furthermore, if $\gamma$ is any closed curve in $C_\G$, $\varphi(\gamma)^{-2}$ is nothing but the holonomy $\mathit{Hol}(\gamma)$. Therefore, $\varphi$ is a discrete spin structure.
The last statement follows easily from the definitions.
\end{proof}

\subsection{Dirac versus Laplace operators}
\label{sub:D-L}

In this whole paragraph, we shall assume that $\G$ is a graph isoradially embedded in a flat surface $\SI$ with trivial holonomy. On such a surface, one can fix a constant vector field $X$,
which we will be using to evaluate the associated discrete Dirac operators (recall Definition~\ref{def:Dirac}).

The following relation is well-known in the planar~\cite{Ken} and toric~\cite{BdT1} cases. The proof is straightforward.

\begin{proposition}
\label{prop:D-L1}
Given any graph $\G$ isoradially embedded in a flat surface with trivial holonomy and any 1-cocycle $\varphi_D\colon\EE(D_\G)\to\C^*$,
\[
-\,\partial_D^{\varphi_D}\circ\bar\partial^{\varphi_D}_D=\Delta_\G^\varphi\oplus\Delta_{\G^*}^{\varphi_*}
\]
as endomorphisms of $\C^{B(D)}=\C^{V(\G)}\oplus\C^{V(\G^*)}$.\qed
\end{proposition}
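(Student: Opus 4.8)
The plan is to compute the endomorphism $\partial_D^{\varphi_D}\circ\bar\partial_D^{\varphi_D}$ of $\C^{B(D)}=\C^{V(\G)}\oplus\C^{V(\G^*)}$ entry by entry, using that every white vertex of $D_\G$ is a midpoint $z_e$ of degree four, joined to the two endpoints $v,v'$ of $e\in E(\G)$ (with weight $\sin\theta_e$) and to the two endpoints $f_1,f_2$ of the dual edge $e^*$ (with weight $\cos\theta_e=\sin\theta_e^*$). Thus $(\partial_D^{\varphi_D}\bar\partial_D^{\varphi_D}f)(b)$ is a sum over the length-two paths $b\to z_e\to b'$ in $D_\G$, and I would organise this sum according to the block decomposition of $\C^{B(D)}$: the two diagonal blocks $V(\G)\to V(\G)$ and $V(\G^*)\to V(\G^*)$, and the two off-diagonal blocks. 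The proposition then amounts to showing that the off-diagonal blocks vanish and that the diagonal ones are $-\Delta_\G^\varphi$ and $-\Delta_{\G^*}^{\varphi_*}$.

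First I would record the angular data from the isoradial embedding. With $X$ a fixed constant vector field, write $\alpha_e=\vartheta_X(v,z_e)$ for the direction of the half-edge from $v$ to $z_e$; then $\vartheta_X(z_e,v)=\alpha_e+\pi$ and $\vartheta_X(z_e,v')=\alpha_e$, while the transverse half-edges point in the perpendicular directions $\vartheta_X(z_e,f_{1,2})=\alpha_e\pm\tfrac{\pi}{2}$, because $e$ and $e^*$ are orthogonal diagonals of a rhombus. In each length-two term the two phases $\exp(\mp i\vartheta_X)$ coming from $\partial_D$ and $\bar\partial_D$ multiply, so the constant vector field drops out and only the relative turning angle survives; this is exactly the rotation invariance one expects of a Laplacian.

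The main step, and the one I expect to be the genuine obstacle, is the vanishing of the off-diagonal blocks. Fix $v\in V(\G)$ and a face $f$ (a vertex of $\G^*$) having $v$ as a corner, and let $e,\tilde e$ be the two edges of $\G$ bounding $f$ at $v$; these are the only edges for which $z_e$ and $z_{\tilde e}$ are simultaneously adjacent to $v$ and to $f$. The corresponding contributions to the $(v,f)$-entry carry relative phases $\exp(-i\alpha_e)\exp\!\big(i(\alpha_e+\tfrac{\pi}{2})\big)=i$ and $\exp(-i\alpha_{\tilde e})\exp\!\big(i(\alpha_{\tilde e}-\tfrac{\pi}{2})\big)=-i$, the opposite signs reflecting that $f$ lies on opposite sides of the two oriented half-edges $v\to z_e$ and $v\to z_{\tilde e}$. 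Their weight magnitudes agree, and --- the crucial point --- their cochain factors $\varphi_D(v,z_e)\varphi_D(z_e,f)$ and $\varphi_D(v,z_{\tilde e})\varphi_D(z_{\tilde e},f)$ coincide, since these are the two ways around the quadrilateral face $v\,z_e\,f\,z_{\tilde e}$ of $D_\G$ and $\varphi_D$ is a cocycle. Hence the two terms are negatives of one another and cancel, so $\partial_D^{\varphi_D}\bar\partial_D^{\varphi_D}$ sends no mass from $V(\G)$ to $V(\G^*)$ (and, symmetrically, none the other way).

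Finally, on the diagonal $V(\G)\to V(\G)$ block only the paths $v\to z_e\to v$ and $v\to z_e\to v'$ survive, one for each edge $e$ at $v$. Here the reversal of the half-edge contributes a sign $\exp(i\pi)=-1$ to the $f(v)$-term, the cochain property gives $\varphi_D(v,z_e)\varphi_D(z_e,v)=1$ and $\varphi_D(v,z_e)\varphi_D(z_e,v')=\varphi(e)$, and the remaining angular weights $\sin^2\theta_e/\mu_{z_e}$ collapse (using $\mu_{z_e}=\tfrac12\sum\sin 2\theta_{e'}$ over the four half-edges at $z_e$, together with $\theta_e^*=\tfrac{\pi}{2}-\theta_e$) to the conductance $\tan\theta_e$ featuring in $\Delta_\G^\varphi$. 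Summing over the edges $e$ at $v$ and dividing by $\mu_v$ then reproduces $-(\Delta_\G^\varphi f)(v)=-\frac{1}{\mu_v}\sum_{e=(v,w)}\tan\theta_e\big(f(v)-\varphi(e)f(w)\big)$. The block $V(\G^*)\to V(\G^*)$ is handled identically after exchanging the roles of $\G$ and $\G^*$ --- legitimate because $D_{\G^*}=D_\G$ and the weights swap via $\theta_e^*=\tfrac{\pi}{2}-\theta_e$ --- yielding $-\Delta_{\G^*}^{\varphi_*}$ and completing the proof.
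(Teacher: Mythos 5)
Your strategy is exactly the ``straightforward'' computation that the paper alludes to but never writes out (the proposition carries a tombstone and the remark that the proof is straightforward, nothing more). The structural core of your argument is correct and is surely the intended one: expanding $\partial_D^{\varphi_D}\circ\bar\partial_D^{\varphi_D}$ over length-two paths $b\to z_e\to b'$, the constant vector field cancels out of the composed phases (this is where trivial holonomy enters); the off-diagonal blocks vanish because the two paths $v\to z_e\to f$ and $v\to z_{\tilde e}\to f$ carry phases $+i$ and $-i$ (orthogonality of $e$ and $e^*$ in the rhombic embedding), equal weight magnitudes, and equal cochain factors by the cocycle condition around the quadrilateral face $v\,z_e\,f\,z_{\tilde e}$ of $D_\G$ --- correctly identifying where the cocycle (not merely cochain) hypothesis is used; and on the diagonal the paths $v\to z_e\to v$ and $v\to z_e\to v'$ contribute with phase $-1$, cochain $1$, and phase $+1$, cochain $\varphi(e)$, respectively.

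The gap is in the one computation you do not display. With the paper's definitions, the four half-edges at $z_e$ have half-rhombus angles $\theta_e,\theta_e,\theta_e^*,\theta_e^*$, so
\[
\mu_{z_e}=\tfrac12\left(2\sin(2\theta_e)+2\sin(2\theta_e^*)\right)=2\sin(2\theta_e)=4\sin\theta_e\cos\theta_e,
\qquad\text{hence}\qquad
\frac{\sin^2\theta_e}{\mu_{z_e}}=\frac{\tan\theta_e}{4},
\]
not $\tan\theta_e$. Since the $\mu_v$ appearing in $\partial_D^{\varphi_D}$ coincides with the $\mu_v$ of Definition~\ref{def:Delta}, your computation carried out honestly yields
\[
-\,\partial_D^{\varphi_D}\circ\bar\partial_D^{\varphi_D}=\tfrac14\left(\Delta_\G^\varphi\oplus\Delta_{\G^*}^{\varphi_*}\right),
\]
that is, the claimed identity fails by a factor of $4$. (Sanity check on $\Z^2$ with all $\theta_e=\pi/4$: the diagonal entry of $-\partial_D\bar\partial_D$ is $\tfrac12$, while that of $\Delta_\G$ is $2$; this also matches the continuum relation $\partial\bar\partial=\tfrac14\Delta$.) This constant is arguably a normalization slip in the unproved statement itself rather than a defect of your scheme, but a blind proof must detect it: either the identity should be stated with the factor $\tfrac14$, or one must exhibit a convention that absorbs it. Asserting that $\sin^2\theta_e/\mu_{z_e}$ ``collapses to the conductance $\tan\theta_e$'' --- which contradicts the very formula for $\mu_{z_e}$ you quote one line earlier --- is precisely where the discrepancy is buried.
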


Note that the composition $-\bar\partial_D^{\varphi_D}\circ\partial^{\varphi_D}_D$ does not seem to be related to the discrete Laplace operator in any sensible way.
Therefore, it is not clear how to make sense of the square of the discrete Dirac operator
$(\Di^\varphi_D)^2=\left(\begin{smallmatrix}0&-\partial_D^\varphi\cr \bar\partial_D^\varphi & 0\end{smallmatrix}\right)^2
=\left(\begin{smallmatrix}-\partial_D^\varphi\bar\partial^\varphi_D&0\cr 0&-\bar\partial_D^\varphi\partial_D^\varphi\end{smallmatrix}\right)$.

On the other hand, considering the bipartite graph $C_\G$ instead of $D_\G$ leads to a more symmetrical result, as we now explain.
First, note that in the smooth setting, a direct computation leads to the equality
$\Di^2=\left(\begin{smallmatrix}-\partial\bar\partial&0\cr 0&-\bar\partial\partial\end{smallmatrix}\right)=\left(\begin{smallmatrix}\Delta-iA&0\cr 0&\Delta+iA\end{smallmatrix}\right)$,
where $A=\frac{\partial^2}{\partial x\partial y}-\frac{\partial^2}{\partial y\partial x}$ vanishes. As we will see, this is reflected in the discretization by $C_\G$.

Given an isoradially embedded graph $\G$, let $M$ denote the (isoradial) graph dual to $D_\G$, as illustrated below.
Also, let $\varepsilon$ denote the orientation of the edges of $M$ illustrated in the same figure.

\begin{figure}[h]
\labellist\small\hair 2.5pt
\pinlabel {$\G$} at 35 265
\pinlabel {$M=D_\G^*$} at 640 265
\endlabellist
\begin{center}
\includegraphics[height=2cm]{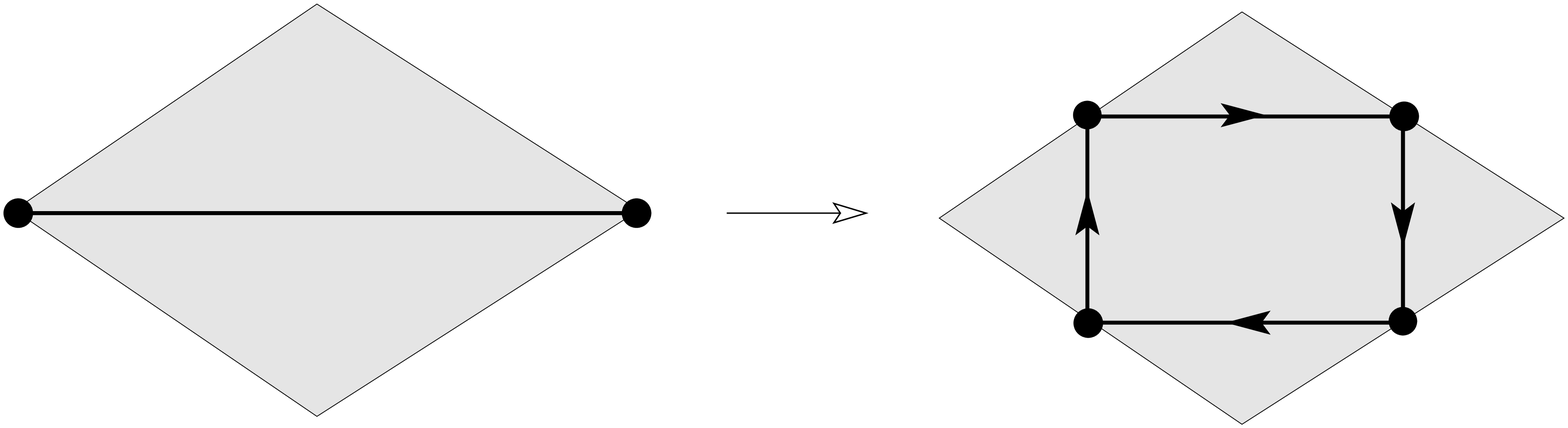}
\end{center}
\end{figure}

\noindent Note that $C_\G$ can be obtained from $M$ by replacing each vertex by a ``small'' edge between two vertices. (This operation is very natural in the setting of discrete complex analysis,
see~\cite[Section 2.2]{Cim1}.) 
Without loss of generality, we shall consider 1-cochains $\varphi$ on $C_\G$ that are trivial on the small edges of $C_\G$, and use the same notation $\varphi$ for the restriction to the edges of $M$. 
Finally, we shall denote by $A^\varphi_\varepsilon$ the associated normalized $\varphi$-twisted skew-adjacency operator on $\C^{V(M)}$.
In other ``words'',
\[
(A^\varphi_\varepsilon f)(v)=\frac{1}{\mu_v}\sum_{e=(v,v')}\varepsilon(e)\varphi(e)f(v')
\]
for $f\in\C^{V(M)}$ and $v\in V(M)$, the sum being over all edges of $M$ of the form $(v,v')$,
with $\varepsilon(e)=1$ if $\varepsilon$ orients $e$ from $v$ to $v'$, and $\varepsilon(e)=-1$ else.

\begin{proposition}
\label{prop:D-L2}
Given any graph $\G$ isoradially embedded in a flat surface with trivial holonomy and any 1-cocycle $\varphi\colon\EE(C_\G)\to\C^*$ which is trivial on the small edges of $C_\G$,
\[
(\Di_C^\varphi)^2
=\begin{pmatrix}-\partial_C^\varphi\bar\partial^\varphi_C&0\cr 0&-\bar\partial_C^\varphi\partial_C^\varphi\end{pmatrix}
=\frac{1}{2}\begin{pmatrix}\Delta^\varphi_M-iA^\varphi_\varepsilon&0\cr 0&\Delta^\varphi_M+iA^\varphi_\varepsilon\end{pmatrix}
\]
as endomorphisms of $\C^{V(C)}=\C^{V(M)}\oplus\C^{V(M)}$. In particular,
\[
-\left(\partial_C^\varphi\bar\partial^\varphi_C+\bar\partial_C^\varphi\partial^\varphi_C\right)=\Delta^\varphi_M.
\]
\end{proposition}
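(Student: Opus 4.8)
The plan is to prove Proposition~\ref{prop:D-L2} by direct computation of the four relevant compositions of discrete $\partial$ and $\bar\partial$ operators on $C_\G$, and then read off the block-diagonal structure by grouping terms. Recall from subsection~\ref{sub:Dirac} that the vertex set of $C_\G$ is $B(C)\cup W(C)$; the key geometric observation, stated in the proposition, is that $C_\G$ is obtained from the self-dual graph $M=D_\G^*$ by splitting each vertex of $M$ into a ``small'' edge of weight $1$ between a black and a white vertex. Thus $\C^{V(C)}=\C^{B(C)}\oplus\C^{W(C)}$ can be identified with $\C^{V(M)}\oplus\C^{V(M)}$, so that computing $-\partial_C^\varphi\bar\partial_C^\varphi$ as an operator on $\C^{B(C)}$ is the same as computing an operator on $\C^{V(M)}$, and likewise for $-\bar\partial_C^\varphi\partial_C^\varphi$ on $\C^{W(C)}$.

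First I would fix a black vertex $b\in B(C)$ corresponding to a vertex $v\in V(M)$, and expand $(-\partial_C^\varphi\bar\partial_C^\varphi f)(b)$ using Definition~\ref{def:Dirac}. This is a double sum: the inner $\bar\partial_C^\varphi$ runs over oriented edges $(w,b')$ emanating from each white neighbour $w$ of $b$, and the outer $\partial_C^\varphi$ runs over edges $(b,w)$. The exponential phase factors $\exp(\pm i\vartheta_X(e))\sin(\theta_e)$ attached to each edge of $C_\G$ must be tracked carefully. The central point is to separate the contribution of the small edge at $v$ (weight $1$, traversed in both directions) from the contributions that leave $v$ through genuine edges of $M$. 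The ``diagonal'' terms, where the path returns to $b$ itself, will assemble into the degree/normalization factor $\mu_v$ coming from $\Delta_M^\varphi$; the ``off-diagonal'' terms, landing on a neighbouring vertex $v'$ of $M$, must be shown to combine into $\tfrac12(\Delta_M^\varphi\mp iA_\varepsilon^\varphi)$. Here is where the isoradiality hypothesis and the trivial-holonomy assumption enter: the rhombus geometry forces specific relations between the angles $\vartheta_X(e)$ at the two ends of each rhombus and the half-rhombus angle $\theta_e$, so that $\exp(i\vartheta)\sin\theta$ products telescope into $\tan(\theta_e)$ (the Laplacian weight of Definition~\ref{def:Delta}) plus an imaginary correction proportional to $\varepsilon(e)$.

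The main obstacle I expect is precisely the bookkeeping of these phases: one must verify that when the outgoing and incoming edges at $v$ are genuine $M$-edges pointing to a common neighbour $v'$, the two resulting terms have the form $\tfrac12\tan(\theta_e)\mp\tfrac{i}{2}\varepsilon(e)$ (with the sign of the imaginary part controlled by the reference orientation $\varepsilon$ and by whether we are computing the $\partial\bar\partial$ or the $\bar\partial\partial$ block). The identity $\Delta^\varphi_M\mp iA^\varphi_\varepsilon$ then emerges because the real parts give the Laplacian weights $\tan(\theta_e)$ and the imaginary parts give $\pm\varepsilon(e)\varphi(e)$, matching Definitions~\ref{def:Delta} and the formula for $A^\varphi_\varepsilon$. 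The sign asymmetry between the two blocks (which produces $-iA$ in the upper block and $+iA$ in the lower block) is exactly the discretization of the smooth identity $\Di^2=\left(\begin{smallmatrix}\Delta-iA&0\\0&\Delta+iA\end{smallmatrix}\right)$ noted before the statement; it arises because $\partial$ and $\bar\partial$ carry conjugate phases $\exp(\mp i\vartheta_X)$.

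Finally, once both blocks are established, the last displayed identity follows immediately: adding the two block computations cancels the two skew-adjacency contributions, since $(-iA^\varphi_\varepsilon)+(+iA^\varphi_\varepsilon)=0$, leaving $-(\partial_C^\varphi\bar\partial_C^\varphi+\bar\partial_C^\varphi\partial_C^\varphi)=\Delta_M^\varphi$ (note the factor $\tfrac12$ in each block doubles to give $\Delta_M^\varphi$, not $\tfrac12\Delta_M^\varphi$, precisely because the sum runs over both diagonal blocks acting on the same copy of $\C^{V(M)}$). I would present the computation for a single representative vertex, invoke linearity and the transitivity of the rhombus relations to cover all vertices, and remark that the trivial-holonomy hypothesis guarantees $\varphi$ and $X$ can be chosen globally consistently so that all the local angle identities hold simultaneously.
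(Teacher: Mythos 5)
Your overall strategy coincides with the paper's: expand the double sum defining $-\partial_C^\varphi\bar\partial_C^\varphi$ at a black vertex $b$, read off the coefficient of $f$ at every vertex reachable in two steps, identify these with the entries of $\frac{1}{2}(\Delta_M^\varphi-iA_\varepsilon^\varphi)$, and obtain the other block from the fact that the coefficients of $\partial_C$ and $\bar\partial_C$ are complex conjugate. However, your accounting of the two-step paths contains a genuine error, and it hides the one place where the hypotheses on $\varphi$ are actually used. A black vertex $b$ is a corner of exactly one rectangle of $C_\G$ (the one around an edge $e$ of $\G$); its three neighbours are the two white corners $w_1,w_2$ of that rectangle and the white vertex $w$ joined to $b$ by the small edge. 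The paths that use two rectangle edges (``genuine $M$-edges'', in your language) are $b\to w_1\to b'$ and $b\to w_2\to b'$, where $b'$ is the corner of the rectangle opposite to $b$: they land on a common vertex whose $M$-vertex is \emph{not} adjacent to $v$, so they cannot produce Laplacian or skew-adjacency entries --- their two contributions must cancel identically. This cancellation is exactly where one uses that $\varphi$ is a $1$-cocycle (so the $\varphi$-products along the two halves of the rectangle boundary agree) together with the orthogonality of the rectangle's sides (the two phase factors are $e^{i\pi/2}$ and $e^{-i\pi/2}$). Your plan never mentions these terms; instead it asserts that they combine into $\frac{1}{2}\tan(\theta_e)\mp\frac{i}{2}\varepsilon(e)$, which is false.

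Correspondingly, you misplace the origin of the Laplacian and skew-adjacency coefficients. Each $M$-neighbour of $v$ is reached by exactly \emph{one} two-step path, and that path uses one small edge and one rectangle edge: either $b\to w_i\to b_i$, where $(w_i,b_i)$ is the small edge at the neighbour $v_i$, or $b\to w\to b_j$, which starts with the small edge at $v$ and then uses a rectangle edge of the adjacent rectangle --- so, contrary to your claim, the small edges do contribute off-diagonal terms, not only diagonal ones. For such a path the coefficient is a single complex number, e.g.\ $\frac{-1}{2\mu_b}\,\varphi(v,v_1)\,e^{i\theta}/\sin(\theta)=\frac{-1}{2\mu_b}\,\varphi(v,v_1)\left(\tan(\theta^*)+i\right)$; the Laplacian weight $\tan(\theta^*)$ and the entry of $iA_\varepsilon^\varphi$ are the real and imaginary parts of this one term, not a combination of two paths. (Here the hypothesis that $\varphi$ is trivial on the small edges guarantees that the $\varphi$-product along the path equals $\varphi$ of the corresponding $M$-edge.) Your treatment of the diagonal coefficient, the conjugation argument for $-\bar\partial_C^\varphi\partial_C^\varphi$, and the derivation of the final identity from the cancellation of $\mp iA_\varepsilon^\varphi$ are all in line with the paper; repairing the path bookkeeping as above is what turns your plan into the paper's proof.
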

\begin{proof}
By definition, for any $b\in B:=B(C_\G)$ and $f\in\C^B$,
\[
-(\partial_C^\varphi\circ\bar\partial^\varphi_C)(f)(b)=\frac{-1}{2\mu_b}\sum_{e=(b,w)}\sum_{e'=(w,b')}\varphi(e)\varphi(e')\frac{\sin(\theta_e)\sin(\theta_{e'})}{\sin(\theta_w)\cos(\theta_w)}
\exp(i(\vartheta(e')-\vartheta(e)))f(b'),
\]
where $\theta_w$ stands for $\theta_{\psi^{-1}_W(w)}$ and $\vartheta$ for $\vartheta_X$. Let us denote the vertices of $C_\G$ around $b$ as illustrated below.

\begin{figure}[h]
\labellist\small\hair 2.5pt
\pinlabel {$b$} at 136 248
\pinlabel {$w$} at 130 174
\pinlabel {$b_1$} at 33 368
\pinlabel {$w_1$} at 120 322
\pinlabel {$b_2$} at 368 255
\pinlabel {$w_2$} at 280 307
\pinlabel {$b'$} at 263 375
\pinlabel {$\theta$} at 390 421
\pinlabel {$\theta'$} at 406 60
\pinlabel {$b_3$} at 53 150
\pinlabel {$b_4$} at 293 87
\endlabellist
\begin{center}
\includegraphics[height=4.5cm]{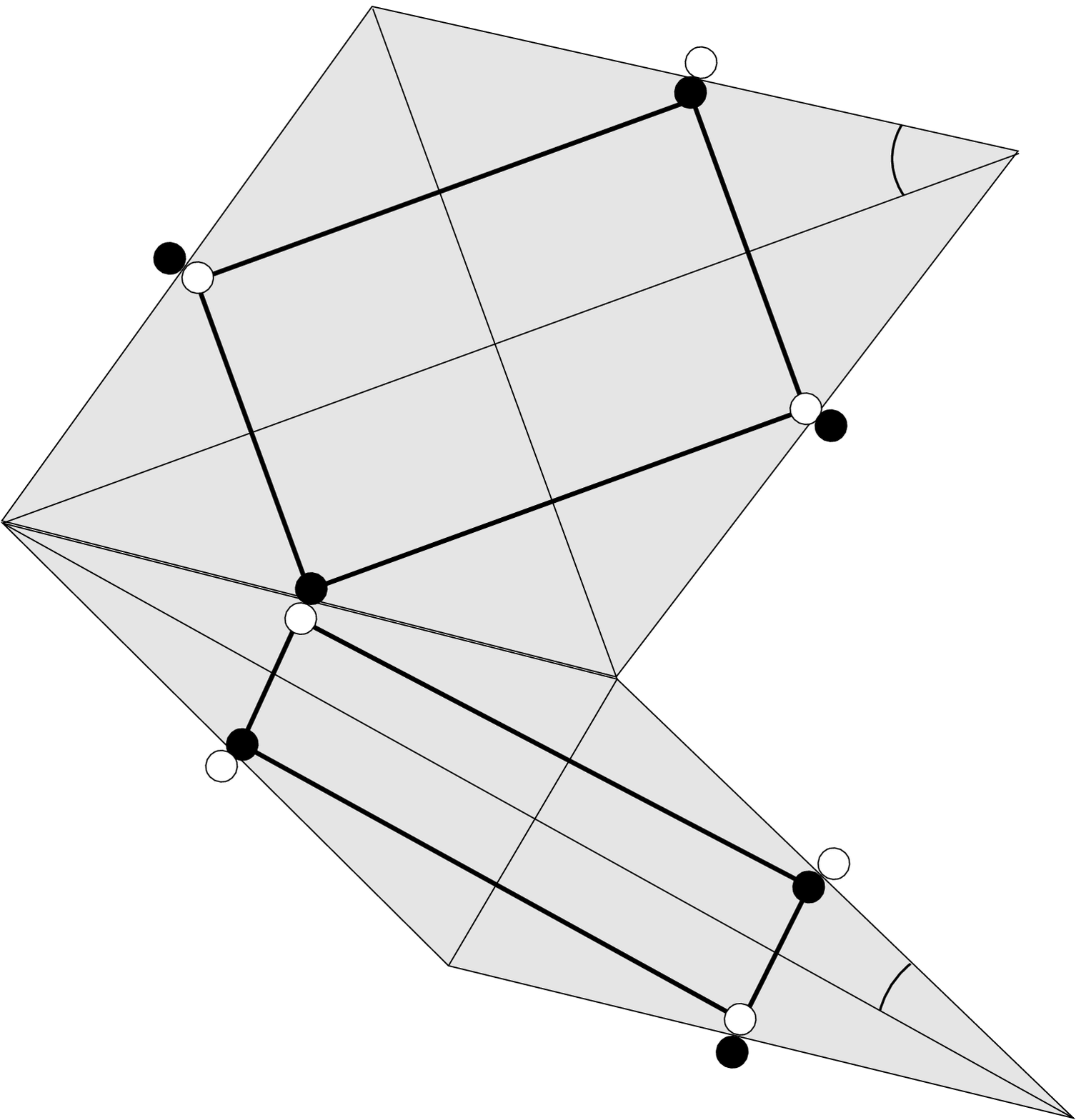}
\end{center}
\end{figure}

\noindent Also, we shall denote by $v$ (resp. $v_1,v_2$) the vertex of $M$ corresponding to $b$ and $w$ (resp. $b_1$ and $w_1$, $b_2$ and $w_2$).
Since $\varphi$ is a cochain and $\vartheta(\bar{e})-\vartheta(e)=\pm\pi$, the coefficient of $f(b)$ in the formula above is given by
\[
\frac{1}{2}\left(\tan(\theta)+\tan(\theta^*)+\tan(\theta')+\tan((\theta')^*)\right),
\]
where $\theta^*=\frac{\pi}{2}-\theta$ as usual. This coincides with the coefficient of $f(v)$ in
$\frac{1}{2}\Delta^\varphi_M(f)(v)$. As for the coefficient of $f(b')$, since $\varphi$ is a cocycle and $\vartheta(w_2,b')-\vartheta(b,w_2)=\frac{\pi}{2}$ while
$\vartheta(w_1,b')-\vartheta(b,w_1)=-\frac{\pi}{2}$, it vanishes as expected. Since $\varphi$ is trivial on the small edges and $\vartheta(w_1,b_1)-\vartheta(b,w_1)=\theta$,
the coefficient of $f(b_1)$ is equal to
\[
\frac{-1}{2\mu_b}\varphi(b,w_1)\frac{\cos(\theta)}{\cos(\theta)\sin(\theta)}\exp(i\theta)=\frac{-1}{2\mu_b}\varphi(v,v_1)(\tan(\theta^*)+i),
\]
which coincides with the coefficient of $f(v_1)$ in $\frac{1}{2}(\Delta^\varphi_M-iA^\varphi_\varepsilon)(f)(v)$.
Similarly, the coefficient of $f(b_2)$ is equal to
\[
\frac{-1}{2\mu_b}\varphi(b,w_2)\frac{\sin(\theta)}{\cos(\theta)\sin(\theta)}\exp(-i\theta^*)=\frac{-1}{2\mu_b}\varphi(v,v_2)(\tan(\theta)-i),
\]
which is the coefficient of $f(v_2)$ in $\frac{1}{2}(\Delta^\varphi_M-iA^\varphi_\varepsilon)(f)(v)$. The cases of $f(b_3)$ and $f(b_4)$ are treated similarly, leading to the equation
\[
-\partial_C^\varphi\bar\partial^\varphi_C=\frac{1}{2}(\Delta^\varphi_M-iA^\varphi_\varepsilon).
\]
The formula for $-\bar\partial_C^\varphi\partial^\varphi_C$ follows, since the coefficients of $\bar\partial_C$ and $\partial_C$ are complex conjugate.
\end{proof}

\subsection{Comparing discrete Dirac operators}
\label{sub:D-D}

Surprisingly, the discrete Dirac operators $\Di^\varphi_D$ on $D=D_\G$ and $\Di^\varphi_C$ on $C=C_\G$ are also related in a fairly natural way.
To present this result, it is convenient to adopt the following terminology:
we will say that vertices $v\in V(D)=\Lambda\cup\diamondsuit$ and $v'\in V(C)=B\cup W$ are {\em adjacent\/}, denoted $v\sim v'$, if they are linked by an edge in the right-hand side of the
illustration below.

\begin{figure}[h]
\labellist\small\hair 2.5pt
\pinlabel {$V(\G)$} at 35 265
\pinlabel {$V(D)\cup V(C)$} at 610 265
\endlabellist
\begin{center}
\includegraphics[height=2.5cm]{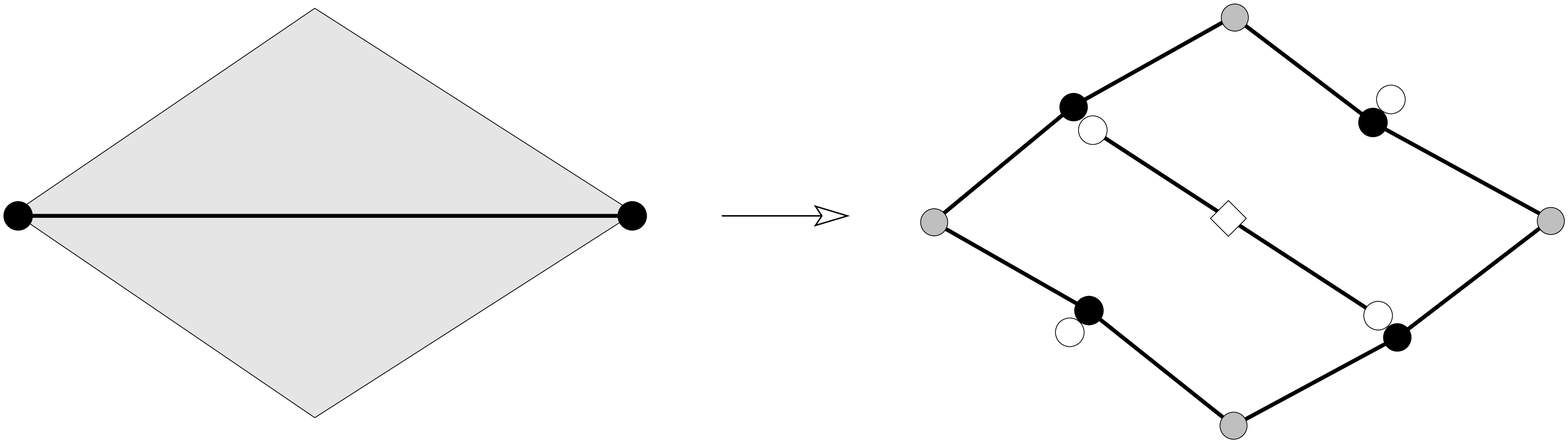}
\end{center}
\end{figure}

\noindent Clearly, any 1-cocycle $\varphi$ on the graph above naturally induces 1-cocycles $\varphi_D$ on $D$ and $\varphi_C$ on $C$. We shall denote by
$h^\varphi_{DC}\colon\C^{V(D)}\to\C^{V(C)}$ and $h^\varphi_{CD}\colon\C^{V(C)}\to\C^{V(D)}$ the associated twisted adjacency operators, i.e. the operators defined by
\[
(h_{DC}^\varphi f)(v')=\frac{\deg(v')}{2}\sum_{v'\sim v\in V(D)}\varphi(v',v)f(v)\quad\text{and}\quad
(h_{CD}^\varphi g)(v)=\hspace{-.2cm}\sum_{v\sim v'\in V(C)}\hspace{-.2cm}\varphi(v,v')g(v')
\]
for $f\in\C^{V(D)}$, $g\in\C^{V(C)}$, $v\in V(D)$ and $v'\in V(C)$, where $\deg(v')\in\{1,2\}$ denotes the number of vertices in $V(D)$ adjacent to $v'$.

\begin{proposition}
\label{prop:d-d}
Given any graph $\G$ isoradially embedded in a flat surface with trivial holonomy and any 1-cocycle $\varphi$ as above, the following diagram commutes:
\[
\xymatrix{
\C^{V(D)}\ar[d]_{h^\varphi_{D C}}\ar[rr]^{\Di^{\varphi_D}_D}& \hspace{1cm} & \C^{V(D)}\\
\C^{V(C)}\ar[rr]^{\Di^{\varphi_C}_C} & \hspace{1cm} & \C^{V(C)}\ar[u]_{\mu_D^{-1}\circ h_{CD}^\varphi\circ\mu_C},}
\]
where $\mu_D=(\mu_v)_{v\in V(D)}$ and $\mu_C=(\mu_{v'})_{v'\in V(C)}$ are the usual diagonal operators.
\end{proposition}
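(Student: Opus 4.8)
The plan is to verify that the diagram commutes by a direct, entirely local computation of matrix coefficients, in the same spirit as the proof of Proposition~\ref{prop:D-L2}. Commutativity of the square amounts to the operator identity
\[
\Di_D^{\varphi_D}=\mu_D^{-1}\circ h_{CD}^\varphi\circ\mu_C\circ\Di_C^{\varphi_C}\circ h_{DC}^\varphi
\]
as endomorphisms of $\C^{V(D)}$. Since every operator appearing here is supported on a bounded neighbourhood of each vertex, it suffices to fix $v\in V(D)$ and a basis function $f\in\C^{V(D)}$ and to compare, for each $u\in V(D)$, the coefficient of $f(u)$ on the two sides.

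First I would exploit the bipartite structure. Both $\Di_D^{\varphi_D}$ and $\Di_C^{\varphi_C}$ are off-diagonal with respect to the splittings $V(D)=\Lambda\sqcup\diamondsuit$ and $V(C)=B\sqcup W$, so $\Di_D^{\varphi_D}$ only connects a vertex of $\Lambda$ to a vertex of $\diamondsuit$. I would check that the adjacency operator $h_{DC}^\varphi$ sends $\Lambda$-supported functions into one colour class of $C$ and $\diamondsuit$-supported functions into the other, and symmetrically for $h_{CD}^\varphi$; this guarantees that the right-hand composite also respects the bipartite grading of $D$, so that the two sides have nonzero entries between exactly the same pairs of vertices. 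It then remains to match the actual coefficients.

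The heart of the matter is the following: a single edge of $D$ joining $u\in\Lambda$ to $z_e\in\diamondsuit$ corresponds, under the adjacency relation $v\sim v'$ depicted in the figure preceding the statement, to the length-two paths $u\sim v'\sim z_e$ running through vertices $v'\in V(C)$; applying $h_{DC}^\varphi$, then $\mu_C\Di_C^{\varphi_C}$, then $h_{CD}^\varphi$, then $\mu_D^{-1}$ amounts to summing the corresponding products of coefficients over these intermediate $C$-vertices. I would carry this out case by case according to whether $v$ lies in $\Lambda$ or in $\diamondsuit$, reading off the weights $\cos\theta_e,\sin\theta_e$, the phases $\exp(\pm i\vartheta_X(e))$ and the normalisations $\mu_{v'},\mu_v$ from Definition~\ref{def:Dirac}. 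The normalisation $\deg(v')/2$ built into $h_{DC}^\varphi$ is arranged precisely so that, together with the factors $\mu_{v'}$ and $\mu_v$, the weights balance; geometrically it accounts for the fact that a vertex of $M=D_\G^*$ splits into two vertices of $C$ joined by a small edge, on which $\varphi$ is trivial.

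The main obstacle will be the trigonometric and angular bookkeeping. As in the proof of Proposition~\ref{prop:D-L2}, one must verify two things: that the relevant contributions combine, via identities such as $\tan\theta+\tan\theta^*$ and the relation $\vartheta_X(\bar e)-\vartheta_X(e)=\pm\pi$, into exactly the single coefficient $\frac{1}{\mu_v}\varphi_D(v,u)\exp(\pm i\vartheta_X(e))\sin\theta_e$ prescribed by $\bar\partial_D^{\varphi_D}$ or $\partial_D^{\varphi_D}$; and that the spurious ``off-edge'' contributions, coming from the second intermediate $C$-vertex, cancel because the two relevant angle differences are $+\frac{\pi}{2}$ and $-\frac{\pi}{2}$. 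Here the cocycle condition on $\varphi$ ensures that the product of twisting factors along each $C$-path equals the single factor $\varphi_D$ on the corresponding $D$-edge, and the trivial holonomy hypothesis guarantees that the rhombic angle data of $D_\G$ and of $C_\G$ are globally consistent. Assembling these local identities over all vertices yields the commutativity of the diagram.
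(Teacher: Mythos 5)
Your proposal is correct and takes essentially the same approach as the paper's proof: a direct, vertex-by-vertex verification of the identity $\mu_D^{-1}\circ h_{CD}^\varphi\circ\mu_C\circ\Di_C^{\varphi_C}\circ h_{DC}^\varphi=\Di_D^{\varphi_D}$, split into the cases $v\in\diamondsuit$ and $v\in\Lambda$, with weights, phases and cocycle factors matched by local trigonometric identities. The only cosmetic difference is that the cancellation you anticipate via $\pm\frac{\pi}{2}$ angle differences (the mechanism of Proposition~\ref{prop:D-L2}) does not actually occur here, since every length-two path through $C$ lands on a $D$-neighbour; instead the several contributions per edge of $D$ simply combine, e.g. $\sin(\theta_z)+i\cos(\theta_z)+e^{-i\theta_z^*}=2\sin(\theta_z)$, against the normalisation $\deg(v')/2$.
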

\begin{proof}
Let $f$ be an element of $\C^{V(D)}$. Fixing $z\in\diamondsuit\subset V(D)$, let us denote by $(v_1,v_2,v_3,v_4)$ the vertices of $\Lambda$ around $z$ numbered counterclockwise, and by $\theta$
the corresponding half-rhombus angle (assuming $v_1$ and $v_3$ are vertices of the primal graph $\G$). By definition, we have
\begin{align*}
\left(h^\varphi_{CD}\circ\mu_C\circ\Di^{\varphi_C}_C\circ h^\varphi_{D C}\right)(f)(z)&=
		\sum_{z\sim w\in W}\sum_{(w,b)\in E(C)}\sum_{b\sim v\in\Lambda}\varphi_D(z,v)e^{i\vartheta_X(w,b)}\sin(\theta_{wb})f(v)\cr
	&=e^{i\vartheta_X(v_3,v_1)}\sum_j\varphi_D(z,v_j)\,c_j\,f(v_j),
\end{align*}
where the coefficients $c_j$ are given by $c_1=i\cos(\theta)+e^{-i\theta^*}=\sin(\theta)$, $c_2=\sin(\theta)+ie^{i\theta}=i\cos(\theta)$, $c_3=-i\cos(\theta)+ie^{i\theta}=-\sin(\theta)$ and
$c_4=-\sin(\theta)+e^{-i\theta^*}=-i\cos(\theta)$. Therefore,
\[
\left(h^\varphi_{CD}\circ\mu_C\circ\Di^{\varphi_C}_C\circ h^\varphi_{D C}\right)(f)(z)=\left(\mu_D\circ\bar\partial^{\varphi_D}_D\right)(f)(z)=\left(\mu_D\circ\Di^{\varphi_D}_D\right)(f)(z).
\]
Furthermore, fixing $v\in\Lambda$, we have
\begin{align*}
\left(h^\varphi_{CD}\circ\mu_C\circ\Di^{\varphi_C}_C\circ h^\varphi_{D C}\right)&(f)(v)=
		-\frac{1}{2}\sum_{v\sim b\in B}\sum_{(b,w)\in E(C)}\sum_{w\sim z\in\diamondsuit}\varphi_D(v,z)e^{-i\vartheta_X(b,w)}\sin(\theta_{bw})f(z)\cr
	&=-\frac{1}{2}\sum_{(v,z)\in E(D)}\varphi_D(v,z)e^{-i\vartheta_X(v,z)}\left(\sin(\theta_z)+i\cos(\theta_z)+e^{-i\theta_z^*}\right)f(z)\cr
	&=-\sum_{(v,z)\in E(D)}\varphi_D(v,z)e^{-i\vartheta_X(v,z)}\sin(\theta_z)f(z)\cr
	&=-\left(\mu_D\circ\bar\partial^{\varphi_D}_D\right)(f)(v)\cr
	&=\left(\mu_D\circ\Di^{\varphi_D}_D\right)(f)(v),
\end{align*}
and the proof is completed.
\end{proof}

\bigbreak

Let us conclude this section with one last summarizing remark. Given any isoradially embedded graph $\G\subset\SI$ with critical weights $x$, Corollary~\ref{cor:det} and Proposition~\ref{prop:Kast-D} give
the relation between determinants
\[
\left|\KW^\varphi(\G,x)\right|^2\stackrel{\cdot}{=}\left|K^\varphi(C_\G,y(x))\right|^2\stackrel{\cdot}{=}\left|\bar\partial^\varphi_C\right|^2,
\]
where $\stackrel{\cdot}{=}$ stands for the equality between functions of $\varphi\in H^1(\SI;\C^*)\simeq(\C^*)^{2g}$ up to a multiplicative constant.
Furthermore, if $\varphi$ is unitary, i.e. belongs to $(S^1)^{2g}$, then $\bar\partial^\varphi_D$ and $\partial^\varphi_D$ are adjoint matrices, which are square if $g=1$.
Assuming that $\SI$ is a torus, Proposition~\ref{prop:D-L1} therefore leads to
\[
\left|\bar\partial^\varphi_D\right|^2=\left|\partial^\varphi_D\circ\bar\partial^\varphi_D\right|\stackrel{\cdot}{=}|\Delta_\G^\varphi\oplus\Delta_{\G^*}^\varphi|\stackrel{\cdot}{=}
|\Delta_\G^\varphi|^2.
\]
(The last equality is well-known, see e.g.~\cite{BdT1}.) Finally, Proposition~\ref{prop:d-d} seems to hint at a relation of the form $|\bar\partial^\varphi_C|\stackrel{\cdot}{=}|\bar\partial^\varphi_D|$,
which would imply the equality $\left|\KW^\varphi(\G,x)\right|\stackrel{\cdot}{=}|\Delta_\G^\varphi|$. This equality actually holds under the conditions stated above, and only under these conditions,
as was proved in~\cite[Theorem 4.6]{Cim3}.


\section{Generalized s-holomorphicity}
\label{sec:s-holo}

This section builds on the previous ones to obtain the main results of this article. In the first two paragraphs, we introduce a notion of s-holomorphicity valid for any weighted surface graph,
generalizing the classical definition of Chelkak-Smirnov~\cite{CS09} which corresponds to the planar isoradial case. Moreover, we give three alternative viewpoints on this notion, each
involving one of the three operators (Kac-Ward, Kasteleyn, Dirac) studied in Section~\ref{sec:rel} (Theorem~\ref{thm:s} and Corollary~\ref{cor:s}). In the final two subsections, we show that several
crucial properties of s-holomorphic functions on isoradial graphs extend to our setting. First, the minors of the Kac-Ward matrices are nothing but generalized spin-observables, which are
automatically s-holomorphic (subsection~\ref{sub:inv}). Also, it is possible to define a discrete version of the integral of the square of an s-holomorphic function (subsection~\ref{sub:F2}).

\subsection{The kernel of the Kac-Ward operator}
\label{sub:ker}

As above, let $(\G,x)\subset\SI$ be an arbitrary weighted surface graph. In this paragraph, we will analyse the kernel of the associated Kac-Ward operator $\KW=\KW(\G,x)$,
where the 1-cochain $\varphi$ is taken to be trivial. (The discussion below extends to any real-valued cochain, but the statements become unnecessarily cumbersome.)

First note that the corresponding Kasteleyn operator $K^\omega=K^\omega(C_\G,y)$ restricts to a real operator $\mathrm{K}^\omega(C_\G,y)\colon\R^B\to\R^W$.
The commutative diagram of Theorem~\ref{thm:corr} can therefore be completed into
\[
\xymatrix{
\C^\EE\ar[rr]^{\KW(\G,x)} & \hspace{1cm} & \C^\EE\\
\C^B\ar[u]^{(I-qR)\circ D^{-1/2}\circ\psi_B}_\simeq\ar[rr]^{K^\omega(C_\G,y)} & \hspace{1cm} & \C^W\ar[u]_{(I-ixJ)\circ D^{-1/2}\circ\psi_W}^\simeq \\
\R^B\ar@{^{(}->}[u]\ar[rr]^{\mathrm{K}^\omega(C_\G,y)} & \hspace{1cm} & \R^W,\ar@{^{(}->}[u]}
\]
where the choice of the square root of $D$ is determined by the Kasteleyn orientation $\omega$.

For $e\in\EE$, let us write
\[
\ell(e)=D_e^{-1/2}\cdot\R=\exp\left(-{\textstyle\frac{i}{2}}a_e\right)\cdot\R,
\]
which does not depend on the square root of $D$. By definition, both $(D^{-1/2}\circ\psi_B)(\R^B)$ and $(D^{-1/2}\circ\psi_W)(\R^W)$ coincide with the real vector subspace of $\C^\EE$ given by 
\[
\L=\{f\in\C^\EE\;|\;f(e)\in\ell(e)\text{ for all $e\in\EE$}\}.
\]
Since $\ell(\bar{e})=i\ell(e)$, $I-ixJ$ leaves $\L$ invariant. Since $\ell(R(e))=q_e^{-1}\ell(e)$, the same holds for $I-qR$.
Therefore, we obtain the following commutative diagram:
\[
\xymatrix{
\L\ar[rr]^{\KW(\G,x)} & \hspace{1cm} & \L\\
\R^B\ar[u]^{(I-qR)\circ D^{-1/2}\circ\psi_B}_\simeq\ar[rr]^{\mathrm{K}^\omega(C_\G,y)} & \hspace{1cm} & \R^W.\ar[u]_{(I-ixJ)\circ D^{-1/2}\circ\psi_W}^\simeq}
\]

Given any real line $\ell\subset\C$, we shall denote by $\Pr(-;\ell)\colon\C\to\ell$ the orthogonal projection onto $\ell$. We shall also simply write $\Pr(-;u)$ for the orthogonal projection onto
the real line $u\cdot\R$ generated by $u\in\C^*$. Following Lis~\cite{Lis}, let us consider the real linear map $S\colon\C^\diamondsuit\to\L$ defined by
\[
(SF)(e)=\sin(\theta_e/2)\Pr(F(z_e);\ell(e))
\]
for $F\in\C^\diamondsuit$ and $e\in\EE$. The following result generalizes~\cite[Proposition 2.1]{Lis}, which deals with the planar isoradial case.

\begin{proposition}
\label{prop:ker}
Fix an element $F$ of $\C^\diamondsuit$ and a vertex $v$ of $\G$. Then, $f=S(F)\in\L$ satisfies $(\KW f)(\bar{e})=0$ for all $e\in\EE_v$
if and only if
\[
\Pr\left(F(z_e)\,;\,i\exp\left(-\textstyle{\frac{i}{2}}(a_e+\theta_e)\right)\right)=\Pr\left(F(z_{e'})\,;\,i\exp\left(\textstyle{-\frac{i}{2}}(a_{e'}-\theta_{e'})\right)\right)\,
\exp\left(\textstyle{\frac{i}{2}}(\beta_e-\theta_e-\theta_{e'})\right)
\]
for all $e\in\EE_v$, where $e'$ stands for $R(e)$.
\end{proposition}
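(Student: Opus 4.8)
The plan is to localize everything at the fixed vertex $v$ and to convert both the Kac-Ward relations and the claimed relations into real-linear equations among finitely many real numbers attached to the oriented edges of $\EE_v$, after which the equivalence becomes a short piece of linear algebra. First I would write $\EE_v=\{e_1,\dots,e_d\}$ in counterclockwise order with $e_{k+1}=R(e_k)$, and choose unwrapped arguments $\hat a_k\in\R$ with $\hat a_{k+1}=\hat a_k+\beta_{e_k}$; since $\sum_k\beta_{e_k}=2\pi$ this forces $\hat a_{d+1}=\hat a_1+2\pi$. Setting $F_k:=F(z_{e_k})$ and decomposing $e^{i\hat a_k/2}F_k$ into real and imaginary parts, I introduce the two real numbers $\hat r_k:=\sin(\theta_k/2)\Re(e^{i\hat a_k/2}F_k)$ and $\hat u_k:=\cos(\theta_k/2)\Im(e^{i\hat a_k/2}F_k)$. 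Using $\Pr(w;e^{i\psi}\R)=\Re(e^{-i\psi}w)e^{i\psi}$ one then gets $f(e_k)=\hat r_k e^{-i\hat a_k/2}$ and $f(\bar e_k)=i\tan(\theta_k/2)\,\hat u_k\,e^{-i\hat a_k/2}$, so that, for $\theta_k\in(0,\pi/2)$, the pair $(\hat r_k,\hat u_k)$ is just an $\R$-linear repackaging of the datum $F_k$.

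Next I would expand $(\KW f)(\bar e_k)$ directly from Definition~\ref{def:KW}. The one delicate ingredient is $\alpha_\lambda(\bar e_k,e_j)$: reading it as the principal turning angle in $(-\pi,\pi)$ and using $\beta_{e_k}=\pi+\alpha_\lambda(\bar e_k,R(e_k))$ together with additivity of the $\beta$'s around $v$, one finds $\alpha_\lambda(\bar e_k,e_j)=\gamma_{kj}-\pi$, where $\gamma_{kj}\in(0,2\pi)$ is the counterclockwise angle from $e_k$ to $e_j$. The resulting half-angle then splits according to position: $\exp(\tfrac i2\gamma_{kj})=\exp(\tfrac i2(\hat a_j-\hat a_k))$ for $j>k$, but $\exp(\tfrac i2\gamma_{kj})=-\exp(\tfrac i2(\hat a_j-\hat a_k))$ for $j<k$, the extra sign being precisely the monodromy $-1$ coming from the $2\pi$ cone angle. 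Substituting $f(e_j)=\hat r_j e^{-i\hat a_j/2}$, the $j$-th summand collapses to $\mp i\,\hat r_j\,e^{-i\hat a_k/2}$, and I obtain
\[
(\KW f)(\bar e_k)=i\,e^{-i\hat a_k/2}\Big(\tan(\theta_k/2)\,\hat u_k-\tan(\theta_k/2)\big(\textstyle\sum_{j<k}\hat r_j-\sum_{j>k}\hat r_j\big)\Big).
\]
Hence, for $\theta_k\in(0,\pi/2)$, the relation $(\KW f)(\bar e_k)=0$ is equivalent to $\hat u_k=V_k$, where $V_k:=\sum_{j<k}\hat r_j-\sum_{j>k}\hat r_j$.

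For the claimed condition I would simplify both orthogonal projections with the same projection formula. Using $a_{k+1}=a_k+\beta_{e_k}$, the phase $\exp(\tfrac i2(\beta_e-\theta_e-\theta_{e'}))$ conspires with the line directions so that the whole relation reduces to the clean equality $\Im(e^{i(\hat a_k+\theta_k)/2}F_k)=\Im(e^{i(\hat a_{k+1}-\theta_{k+1})/2}F_{k+1})$, that is, $\hat u_{k+1}-\hat u_k=\hat r_k+\hat r_{k+1}$ for $k=1,\dots,d-1$, while the wrap-around instance $k=d$ (where $\hat a_{d+1}=\hat a_1+2\pi$) carries the same $-1$ and reads $\hat u_1+\hat u_d=\hat r_1-\hat r_d$. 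The proof then finishes by a purely algebraic comparison: one checks at once that $V_{k+1}-V_k=\hat r_k+\hat r_{k+1}$ and $V_1+V_d=\hat r_1-\hat r_d$, so the $V_k$ obey exactly the recursion and closing identity demanded of the $\hat u_k$. Therefore the Kac-Ward system $\{\hat u_k=V_k\}_k$ immediately yields all the claimed relations; conversely, the claimed relations force $\hat u_k-V_k$ to be independent of $k$, and the single wrap-around equation pins this constant to $0$, recovering $\hat u_k=V_k$ for every $k$.

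I expect the main obstacle to be exactly this sign bookkeeping in the half-angles. The statement equates one \emph{global} equation per edge (a sum over all of $\EE_v$) with one \emph{two-term} equation per edge, and these can only match because $\alpha_\lambda(\bar e_k,e_j)$ is a principal turning angle: the $j<k$ versus $j>k$ sign antisymmetrizes the long sum and is precisely what produces the telescoping $V_{k+1}-V_k=\hat r_k+\hat r_{k+1}$. Keeping the unwrapping $\hat a_k$ consistent all the way around $v$, so that the lone compensating $-1$ lands on the closing ($k=d$) equation, is the crux that makes the two linear systems coincide. The degenerate weights $\theta_k\in\{0,\pi/2\}$, where one would divide by $\tan(\theta_k/2)$ or $\cot(\theta_k/2)$, I would treat separately; there the relevant condition degenerates to something vacuous or trivial and causes no difficulty.
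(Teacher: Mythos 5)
Your proof is correct, and it reaches the proposition by a genuinely different organization than the paper's. The paper never writes out the full Kac--Ward equations at $v$: it observes that $(I-qR)x^{-1}$ is an automorphism of $\C^\EE$ respecting the splitting $\EE=\bigsqcup_v\EE_v$, so the system $\{(\KW f)(\bar e)=0,\ e\in\EE_v\}$ can be replaced by its image under this operator; in that image the long sums cancel pairwise, leaving the four-term relation
\[
x_e^{-1}f(\bar e)+if(e)=\exp\left(\textstyle{\frac{i}{2}}\beta_e\right)\left(x_{e'}^{-1}f(\overline{e'})-if(e')\right),
\]
which is then rewritten as the projection identity. You instead compute both linear systems completely in the real coordinates $(\hat r_k,\hat u_k)$: the Kac--Ward equations become the \emph{global} relations $\hat u_k=V_k$, the projection identities become the \emph{local} relations $\hat u_{k+1}-\hat u_k=\hat r_k+\hat r_{k+1}$ together with the sign-twisted closing relation, and you match the two systems by elementary linear algebra. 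The mechanisms are at bottom the same: your forward direction (differencing, $V_{k+1}-V_k=\hat r_k+\hat r_{k+1}$) is exactly what composing with $I-qR$ does in the paper, and your converse (the difference $\hat u_k-V_k$ is constant around $v$ and killed by the wrap-around sign) is a hands-on substitute for the invertibility of $(I-qR)_v$, whose determinant $1-\prod_{e\in\EE_v}q_e=2$ is nonzero for precisely the same monodromy reason. What your route buys is explicitness -- the sign bookkeeping $\alpha_\lambda(\bar e_k,e_j)=\gamma_{kj}-\pi$ and the $j<k$ versus $j>k$ dichotomy are laid bare rather than absorbed into operator identities; what the paper's route buys is that the same computation feeds directly into Theorem~\ref{thm:corr} and the Kasteleyn correspondence.

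One caveat: your closing remark that the degenerate weights ``cause no difficulty'' is too optimistic when $\theta_k=0$. In that case $(\KW f)(\bar e_k)=0$ holds automatically and, since $\hat r_k=0$, the Kac--Ward system no longer sees $\Im\big(e^{i\hat a_k/2}F(z_{e_k})\big)$ at all, whereas the projection relations at $e_{k-1}$ and at $e_k$ still constrain that quantity; so the stated equivalence genuinely requires $x_e\neq 0$ for all $e\in\EE_v$, and this case must be excluded rather than dismissed. This is not a defect relative to the paper, whose own proof silently makes the same assumption by inverting $x$ in the operator $(I-qR)x^{-1}$, but it should be stated as a hypothesis rather than waved away.
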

\begin{proof}
Using the notations of subsection~\ref{sub:corr}, the first statement means that $(J\circ \KW)(f)$ vanishes on $\EE_v$. Consider the endomorphism of $\C^\EE$ given by
$(I-qR)x^{-1}$. It is clearly an isomorphism (recall the proof of Corollary~\ref{cor:det}), which splits as a direct sum of automorphisms of $\C^{\EE_v}$ according to the
partition $\EE=\bigsqcup_{v}\EE_v$. Therefore, $(J\circ\KW)(f)$ vanishes on $\EE_v$ if and only if $((I-qR)x^{-1}J\;\KW)(f)$ vanishes on $\EE_v$. In other words,
for all $e\in\EE_v$,
\begin{align*}
0&=((I-qR)x^{-1}J\;\KW)(f)(e)\cr
&=x_e^{-1}(\KW f)(\bar{e})-q_e\,x^{-1}_{e'}(\KW f)(\bar{e'})\cr
	&=x_e^{-1}\Big(f(\bar{e})-x_e\sum_{e''\in \EE_v\setminus\{e\}}\exp\left({\textstyle\frac{i}{2}\alpha_\lambda(\bar{e},e'')}\right)f(e'')\Big)\cr
	&\phantom{=}-\exp\left({\textstyle\frac{i}{2}}\beta_e\right)x_{e'}^{-1}\Big(f(\bar{e'})-x_{e'}
	\sum_{e''\in \EE_v\setminus\{e'\}}\exp\left({\textstyle\frac{i}{2}\alpha_\lambda(\bar{e'},e'')}\right)f(e'')\Big)\cr
	&=x_e^{-1}f(\bar{e})+i\exp\left({\textstyle\frac{i}{2}}\beta_e\right)f(e')-\exp\left({\textstyle\frac{i}{2}}\beta_e\right)x_{e'}^{-1}f(\bar{e'})+if(e).
\end{align*}
Therefore, $f=S(F)$ satisfies $(\KW f)(\bar e)=0$ for all $e\in\EE_v$ if and only if
\[
x_e^{-1}f(\bar e)+if(e)=\exp\left(\textstyle{\frac{i}{2}}\beta_e\right)\left(x_{e'}^{-1}f(\overline{e'})-if(e')\right)
\]
for all $e\in\EE_v$. Using the parametrization $x_e=\tan(\theta_e/2)$ of the weights, the definition of $f=S(F)$, and the fact that $\ell(e)$ and $\ell(\bar{e})$ are orthogonal, the left-hand side reads
\begin{align*}
x_e^{-1}f(\bar e)+if(e)&=\sin(\theta_e/2)^{-1}\left(\cos(\theta_e/2)f(\bar{e})+i\sin(\theta_e/2)f(e)\right)\cr
	&=\cos(\theta_e/2)\Pr(F(z_e);\ell(\bar{e}))+i\sin(\theta_e/2)\Pr(F(z_e);\ell(e))\cr
	&=\Pr((\cos(\theta_e/2)+i\sin(\theta_e/2))F(z_e);\ell(\bar{e}))\cr
	&=\exp\left(\textstyle{\frac{i}{2}}\theta_e\right)\Pr\left(F(z_e);\exp\left(-\textstyle{\frac{i}{2}}\theta_e\right)\ell(\bar{e})\right)\cr
	&=\exp\left(\textstyle{\frac{i}{2}}\theta_e\right)\Pr\left(F(z_e);i\exp\left(-\textstyle{\frac{i}{2}}(a_e+\theta_e)\right)\right).
\end{align*}
Similar considerations for the right-hand side lead to
\[
x_{e'}^{-1}f(\overline{e'})-if(e')=\exp\left(-\textstyle{\frac{i}{2}}\theta_{e'}\right)\Pr\left(F(z_{e'});i\,\exp\left(\textstyle{-\frac{i}{2}}(a_{e'}-\theta_{e'})\right)\right),
\]
and the proposition follows.
\end{proof}

Note that the left-hand side of the equality in Proposition~\ref{prop:ker} can be written as
\[
\Pr\left(F(z_e);i\exp\left(-\textstyle{\frac{i}{2}}(a_e+\theta_e)\right)\right)=
-i\,\exp\left(-\textstyle{\frac{i}{2}}\theta_e\right)D_e^{-1/2}\mathit{Re}\left(i\,D_e^{1/2}\exp\left(\textstyle{\frac{i}{2}}\theta_e\right)F(z_e)\right),
\]
while
\[
\Pr\left(F(z_{e'});i\exp\left(\textstyle{-\frac{i}{2}}(a_{e'}-\theta_{e'})\right)\right)=
-i\,\exp\left(\textstyle{\frac{i}{2}}\theta_{e'}\right)D_{e'}^{-1/2}\mathit{Re}\left(i\,D_{e'}^{1/2}\exp\left(-\textstyle{\frac{i}{2}}\theta_{e'}\right)F(z_{e'})\right).
\]
Therefore, this equality is equivalent to
\[
\mathit{Re}(i\,D_e^{1/2}\exp\left(\textstyle{\frac{i}{2}}\theta_e\right)F(z_e))=\varepsilon(e)\,\mathit{Re}(i\,D_{e'}^{1/2}\exp\left(-\textstyle{\frac{i}{2}}\theta_{e'}\right)F(z_{e'})),\tag{$\star$}
\]
where $\varepsilon(e)=q_e D_{e'}^{-1/2}D_e^{1/2}=\pm 1$ depends on the Kasteleyn orientation $\omega$.

\subsection{Three viewpoints on s-holomorphicity}
\label{sub:3}

Assume that $\G$ is isoradially embedded in the plane with critical weights $x_e=\tan(\theta_e/2)$.
If $\lambda$ is chosen to be a constant vector field, then $\beta_e-\theta_e-\theta_{e'}$ vanishes for all $e\in\EE$. Therefore the equality in Proposition~\ref{prop:ker} simply reads
\[
\Pr(F(z_e);\ell_{e,e'})=\Pr(F(z_{e'});\ell_{e,e'}),
\]
where $\ell_{e,e'}=i\exp\left(-\textstyle{\frac{i}{2}}(a_e+\theta_e)\right)\cdot\R=i\exp\left(-\textstyle{\frac{i}{2}}(a_{e'}-\theta_{e'})\right)\cdot\R$.
This equality defines the fact that $\exp\left(i\textstyle{\frac{\pi}{4}}\right)F$ is ``s-holomorphic''~\cite{CS09}. Hence, the equality in Proposition~\ref{prop:ker}, or equivalently, the fact of lying in the kernel of
the Kac-Ward operator, should be understood as a generalized s-holomorphicity condition valid for any weighted surface graph. This motivates the following terminology.

\begin{definition}
\label{def:s}
Let $(\G,x)\subset\SI$ be an arbitrary surface graph with weights parametrized by $x_e=\tan(\theta_e/2)$. A function $F\in\C^\diamondsuit$ is called {\em s-holomorphic\/} around $v\in V(\G)$ if
\[
\Pr\left(F(z_e);\left[i\exp(i(a_e+\theta_e))\right]^{-\frac{1}{2}}\right)=
\Pr\left(F(z_{e'});\left[i\exp(i(a_{e'}-\theta_{e'}))\right]^{-\frac{1}{2}}\right)\exp\left(\textstyle{\frac{i}{2}(\beta_e-\theta_e-\theta_{e'}})\right)
\]
for all $e\in\EE_v$, where $e'$ stands for $R(e)$.
\end {definition}

Note that the choice of the square root is irrelevant.

We shall now build on the results of Section~\ref{sec:rel} to give three viewpoints on the generalized notion of s-holomorphicity defined above.
Recall that a Kasteleyn orientation $\omega$ on $C_\G\subset\SI$ determines a square root $D^{1/2}=(D_e^{1/2})_{e\in\EE}$ of $D=(D_e)_{e\in\EE}=(\exp(ia_e))_{e\in\EE}$
as explained in Theorem~\ref{thm:corr}, and that these square roots satisfy $D_{R(e)}^{1/2}=\varepsilon(e)q_eD_e^{1/2}$ for all $e\in\EE$, with $\varepsilon(e)=\pm 1$. This allows us to define
$T\colon\C^\diamondsuit\to\R^B$ by
\[
(TF)(b)=\sum_{e\in\EE_v}\mathit{Re}\left(\varepsilon_b(e)D_e^{1/2}\sin(\theta_e/2)F(z_e)\right),
\]
where $v\in V(\G)$ denotes the vertex of $\G$ closest to $b\in B$, i.e. $v=o(e_0)$ if $b=\psi_B(e_0)$, and $\varepsilon_b(e)=\varepsilon(e_0)\varepsilon(R(e_0))\cdots\varepsilon(R^{k-1}(e_0))$
if $e=R^k(e_0)$, $0\le k<\deg(v)$. Finally, set $T':=\exp\left(\textstyle{\frac{i}{2}}\theta_B\right)\circ T$,
where $\theta_B=(\theta_{\psi^{-1}_B(b)})_{b\in B}$.

\begin{theorem}
\label{thm:s}
The maps $S\colon\C^\diamondsuit\to\L$, $T\colon\C^\diamondsuit\to\R^B$ and $T'\colon\C^\diamondsuit\to T'(\C^\diamondsuit)\subset\C^B$ are $\R$-linear isomorphisms such that,
for any $F\in\C^\diamondsuit$, the following are equivalent:
\begin{romanlist}
\item{$\exp\left(i\textstyle{\frac{\pi}{4}}\right)F$ is s-holomorphic.}
\item{$S(F)\in\L$ lies in the kernel of the Kac-Ward operator $\KW(\G,x)$.}
\item{$T(F)\in\R^B$ lies in the kernel of the Kasteleyn operator $\mathrm{K}^\omega(C_\G,y)$.}
\end{romanlist}
Furthermore, if $\G$ is isoradially embedded in a flat surface $\SI$ with critical weights, then these three conditions are equivalent to:
\begin{romanlist}{\setcounter{enumi}{3}}
\item{$T'(F)\in\C^B$ lies in the kernel of the discrete $\overline{\partial}$-operator $\overline{\partial}^{\varphi_\omega}_C$.}
\end{romanlist}
\end{theorem}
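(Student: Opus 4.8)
The plan is to establish the four equivalences by chaining together three commutative diagrams that are already available: the diagram of subsection~\ref{sub:ker} comparing $\KW(\G,x)$ restricted to $\L$ with the real Kasteleyn operator $\mathrm{K}^\omega(C_\G,y)$, which came from Theorem~\ref{thm:corr}, and the diagram of Proposition~\ref{prop:Kast-D} comparing $K^\omega(C_\G,y)$ with $\bar\partial_C^{\varphi_\omega}$. Throughout I write $\Phi_B=(I-qR)\circ D^{-1/2}\circ\psi_B\colon\R^B\to\L$ and $\Phi_W=(I-ixJ)\circ D^{-1/2}\circ\psi_W\colon\R^W\to\L$ for the vertical isomorphisms of the first diagram.

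For $(i)\Longleftrightarrow(ii)$ I would first record the elementary identity $\Pr(cz;L)=c\,\Pr(z;\bar{c}L)$ valid for any unit $c\in\C$ and any real line $L$. Applying it with $c=\exp(i\frac{\pi}{4})$ to $G=\exp(i\frac{\pi}{4})F$, and observing that $\exp(-i\frac{\pi}{4})\,[i\exp(i(a_e+\theta_e))]^{-1/2}=-i\exp(-\frac{i}{2}(a_e+\theta_e))$ generates the same real line as $i\exp(-\frac{i}{2}(a_e+\theta_e))$ (they differ by the factor $-1$), one sees that Definition~\ref{def:s} applied to $G$ becomes, after cancelling the common factor $\exp(i\frac{\pi}{4})$ from both sides, exactly the equation of Proposition~\ref{prop:ker} for $F$. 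Since $\bar{e}$ ranges over all of $\EE$ as $e$ runs through $\EE_v$ and $v$ through $V(\G)$, Proposition~\ref{prop:ker} then says precisely that $\exp(i\frac{\pi}{4})F$ is s-holomorphic around every vertex if and only if $S(F)\in\ker\KW(\G,x)$.

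For $(ii)\Longleftrightarrow(iii)$ the only genuine computation is to verify that $T(F)$ corresponds to $S(F)$ under $\Phi_B$; once that is known, the diagram of subsection~\ref{sub:ker} gives the equivalence for free, because $\Phi_W$ is an isomorphism and hence $\mathrm{K}^\omega T(F)=0\iff\KW\,\Phi_B T(F)=0\iff\Phi_B T(F)\in\ker\KW$. To exhibit the correspondence I would unwind $\Phi_B$ coordinatewise: using $D_{R(e)}^{1/2}=\varepsilon(e)q_eD_e^{1/2}$ one gets $(\Phi_B g)(e)=D_e^{-1/2}\bigl(\tilde{g}(e)-\varepsilon(e)\tilde{g}(R(e))\bigr)$ with $\tilde{g}=\psi_B g$, while $S(F)(e)=D_e^{-1/2}\,\mathrm{Re}\bigl(D_e^{1/2}\sin(\theta_e/2)F(z_e)\bigr)$. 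Thus $\Phi_B g=S(F)$ is the signed cyclic recursion $\tilde{g}(e)-\varepsilon(e)\tilde{g}(R(e))=\mathrm{Re}\bigl(D_e^{1/2}\sin(\theta_e/2)F(z_e)\bigr)$ around each vertex $v$. Because $\sum_{e\in\EE_v}\beta_e=2\pi$, the holonomy $\prod_{e\in\EE_v}\varepsilon(e)=\bigl(\prod_{e\in\EE_v}q_e\bigr)^{-1}=\exp(-i\pi)=-1$, so the cyclic system is non-singular (its determinant is $1-(-1)=2$) with unique solution $\tilde{g}(e_0)=\frac12(TF)(\psi_B(e_0))$. Hence $\Phi_B T(F)=2\,S(F)$, and the harmless factor $2$ does not affect kernels. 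This same identity shows $T=\frac12\Phi_B^{-1}S$, so $T$ is an isomorphism as soon as $S$ is.

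For $(iii)\Longleftrightarrow(iv)$ in the isoradial case, note that $K^\omega(C_\G,y)$ has real matrix, so a real vector lies in $\ker\mathrm{K}^\omega$ iff it lies in the complex kernel of $K^\omega$; thus $(iii)$ reads $T(F)\in\ker K^\omega$ inside $\C^B$. Feeding $h=T(F)$ into the commuting square of Proposition~\ref{prop:Kast-D} and using that $\exp(-\frac{i}{2}\theta_W)\circ\mu_W$ is invertible gives $K^\omega T(F)=0\iff\bar\partial_C^{\varphi_\omega}\bigl(\exp(\tfrac{i}{2}\theta_B)T(F)\bigr)=0$; by the definition $T'=\exp(\frac{i}{2}\theta_B)\circ T$ this is $\bar\partial_C^{\varphi_\omega}(T'(F))=0$, i.e.\ $(iv)$. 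Finally, the isomorphism assertions follow from injectivity plus dimension count: $S$ is injective since for each unoriented edge the two projections $\Pr(F(z_e);\ell(e))$ and $\Pr(F(z_e);\ell(\bar{e}))$ onto the orthogonal lines $\ell(\bar e)\perp\ell(e)$ recover $F(z_e)$ (using $\sin(\theta_e/2)>0$), and $\dim_\R\C^\diamondsuit=2|E(\G)|=\dim_\R\L=\dim_\R\R^B$; hence $S$, $T=\frac12\Phi_B^{-1}S$, and $T'=\exp(\frac{i}{2}\theta_B)\circ T$ (onto its image) are all $\R$-linear isomorphisms. The main obstacle is the bookkeeping in the second step — solving the signed recursion around each vertex and correctly tracking the $-1$ holonomy arising from $\prod_{e\in\EE_v}q_e=\exp(i\pi)$; everything else is a formal consequence of diagrams already established.
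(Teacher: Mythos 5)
Your proposal is correct and takes essentially the same route as the paper: (i)$\Leftrightarrow$(ii) via Proposition~\ref{prop:ker} after cancelling the factor $\exp\left(i\frac{\pi}{4}\right)$, (ii)$\Leftrightarrow$(iii) via the real restriction of the diagram of Theorem~\ref{thm:corr} together with the identity $\Phi_B T(F)=2S(F)$, and (iii)$\Leftrightarrow$(iv) via Proposition~\ref{prop:Kast-D}. Your direct solution of the signed cyclic recursion around each vertex (nonsingular since $\prod_{e\in\EE_v}\varepsilon(e)=-1$) is exactly the paper's inversion $2(I-qR)_v^{-1}=I+qR+\dots+(qR)^{\deg(v)-1}$ in disguise, so even this step coincides with the paper's computation.
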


\begin{proof}
The map $S\colon\C^\diamondsuit\to\L$ is clearly an isomorphism of real vector spaces, with inverse $(S^{-1}f)(z)=\sin(\theta_e/2)^{-1}(f(e)+f(\bar{e}))$ for $f\in\C^\EE$ and $z=z_e\in\diamondsuit$.
By definition, $\exp\left(i\textstyle{\frac{\pi}{4}}\right)F$ is s-holomorphic if and only if $F$ satisfies the equality in Proposition~\ref{prop:ker}; hence, $(i)$ and $(ii)$ are equivalent by this proposition.

Now, let us define $T_0\colon\C^\diamondsuit\to\R^B$ as the unique $\R$-linear isomorphism such that $(I-qR)\circ D^{-1/2}\circ\psi_B\circ T_0=S$.
By Theorem~\ref{thm:corr} and the discussion in subsection~\ref{sub:ker}, we have the commutative diagram
\[
\xymatrix{
\C^\diamondsuit\ar[r]^S_\simeq \ar@/_{10pt}/[dr]_{T_0} & \L\ar[rr]^{\KW(\G,x)} & \hspace{1cm} & \L'\\
& \R^B\ar[u]^\simeq \ar[rr]^{\mathrm{K}^\omega(C_\G,y)} & \hspace{1cm} & \R^W.\ar[u]_\simeq}
\]
This implies that $\KW(SF)=0$ if and only if $T_0(F)$ lies in the kernel of $\mathrm{K}^\omega(C_\G,y)$.
Therefore, it remains to check that $T_0$ coincides with the map $T$ defined before the statement of the theorem (up to a real scalar factor).
To do so, first observe that the automorphism $I-qR$ of $\C^\EE$ splits as a direct sum $\bigoplus_{v\in V(\G)}(I-qR)_v$ according to the partition $\EE=\bigsqcup_{v\in V(\G)}\EE_v$, and that
$2(I-qR)_v^{-1}=I+qR+\dots+(qR)^{\deg(v)-1}$. This leads to the following computation, where $F\in\C^\diamondsuit$, $b=\psi_B(e)\in B$ and $v=o(e)\in V(\G)$:
\begin{align*}
2(T_0F)(b)&=2(\psi_B^{-1}\circ D^{1/2}\circ(I-qR)^{-1}\circ S)(F)(b)\cr
	&=2D_e^{1/2}(I-qR)_v^{-1}(SF)(e)\cr
	&=D_e^{1/2}\sum_{k=0}^{\deg(v)-1}q_e\cdots q_{R^{k-1}(e)}(SF)(R^k(e))\cr
	&=\sum_{e'\in\EE}\varepsilon_b(e')D_{e'}^{1/2}(SF)(e')\cr
	&=\sum_{e'\in\EE}\varepsilon_b(e')D_{e'}^{1/2}\sin(\theta_{e'}/2)\Pr(F(z_{e'});\ell(e'))\cr
	&=\sum_{e'\in\EE_v}\mathit{Re}\left(\varepsilon_b(e')D_{e'}^{1/2}\sin(\theta_{e'}/2)F(z_{e'})\right).
\end{align*}
This shows that the map $T=2T_0$ is an isomorphism, and that conditions $(ii)$ and $(iii)$ are equivalent. The map $T'=\exp\left(\textstyle{\frac{i}{2}}\theta_B\right)\circ T$
being the composition of two isomorphisms, it is one as well.

Finally, let us assume that $\G$ is isoradially embedded in a flat surface $\SI$ with weights $x_e=\tan(\theta_e/2)$. By Proposition~\ref{prop:Kast-D}, we have the commutative diagram
\[
\xymatrix{
\C^\diamondsuit\ar[r]^T_\simeq \ar@/_{10pt}/[dr]_{T'} & \R^B\ar[rr]^{\mathrm{K}^\omega(C_\G,y)} & \hspace{1cm} & \R^W\\
& \L_B\ar[u]^\simeq\ar[rr]^{\overline{\partial}_C^{\varphi_\omega}} & \hspace{1cm} & \L_W,\ar[u]_\simeq},
\]
where $\L_B=\exp\left(\textstyle{\frac{i}{2}}\theta_B\right)(\R^B)$ and $\L_W=\exp\left(\textstyle{\frac{i}{2}}\theta_W\right)(\R^W)$. Therefore, $T(F)$ lies in the kernel of $\mathrm{K}^\omega(C_\G,y)$
if and only if $T'(F)$ lies in the kernel of $\overline{\partial}_C^{\varphi_\omega}$.
\end{proof}

This theorem admits the following corollary.

\begin{corollary}
\label{cor:s}
Let $s\mathcal{O}$ denote the real vector subspace of $\C^\diamondsuit$ consisting of functions $F$ such that $\exp\left(i\textstyle{\frac{\pi}{4}}\right)F$ is s-holomorphic.
\begin{romanlist}
\item{The map $S\colon\C^\diamondsuit\to\C^\EE$ restricts to an $\R$-linear isomorphism $s\mathcal{O}\simeq\L\cap\mathrm{ker}(\KW)$.}
\item{The map $\widetilde{T}\colon\C^\diamondsuit\to\R^B$ defined by
\[
(\widetilde{T}F)(b)=\mathit{Re}\left(i\,D_e^{1/2}\,\exp\left(\textstyle{-\frac{i}{2}}\theta_e\right)F(z_e)\right)
\]
for $F\in\C^\diamondsuit$ and $b=\psi_B(e)\in B$ restricts to an isomorphism $s\mathcal{O}\simeq\mathrm{ker}(\mathrm{K}^\omega)$.}
\item{If $\G$ is isoradially embedded in a flat surface $\SI$ with critical weights, then the map $\widetilde{T'}\colon\C^\diamondsuit\to\C^B$ defined by
\[
(\widetilde{T'}F)(b)=i\,D_e^{1/2}\,\Pr\left(F(z_e);i\exp\left(\textstyle{-\frac{i}{2}}(a_e-\theta_e)\right)\right)
\]
for $F\in\C^\diamondsuit$ and $b=\psi_B(e)\in B$ restricts to an isomorphism $s\mathcal{O}\simeq\mathrm{ker}(\overline{\partial}^{\varphi_\omega}_C)\cap \widetilde{T'}(\C^\diamondsuit)$.}
\end{romanlist}
\end{corollary}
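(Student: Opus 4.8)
The plan is to deduce all three parts from Theorem~\ref{thm:s} together with the per-corner reformulation $(\star)$ of s-holomorphicity, by expressing each of the ``single-term'' maps $S,\widetilde{T},\widetilde{T'}$ in terms of the maps $S,T,T'$ already analysed there.

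Part $(i)$ is immediate. By Theorem~\ref{thm:s} the map $S$ is an $\R$-linear isomorphism onto $\L$ and $F\in s\mathcal{O}$ if and only if $S(F)\in\mathrm{ker}(\KW)$. Since $S(F)\in\L$ for every $F$, this says precisely that $S$ carries $s\mathcal{O}$ bijectively onto $\L\cap\mathrm{ker}(\KW)$, which is the assertion.

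For part $(ii)$ I would first record that $\widetilde{T}\colon\C^\diamondsuit\to\R^B$ is a global $\R$-linear isomorphism: it is clearly $\R$-linear, and for $z=z_e$ the two numbers $(\widetilde{T}F)(\psi_B(e))$ and $(\widetilde{T}F)(\psi_B(\bar{e}))$ are the real projections of $F(z_e)$ onto two directions differing by a factor $\pm i$ (using $\theta_{\bar{e}}=\theta_e$ and $D_{\bar{e}}^{1/2}=\pm i\,D_e^{1/2}$), hence orthogonal, so they determine $F(z_e)$; as $|B|=|\EE|=2|E(\G)|=\dim_\R\C^\diamondsuit$, injectivity forces surjectivity. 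The crux is then to show that $\widetilde{T}$ and $T$ \emph{coincide on} $s\mathcal{O}$. Fixing $b=\psi_B(e_0)$ and $v=o(e_0)$, each summand of $(TF)(b)$ over $e'=R^k(e_0)$ equals $\eta_k\sin(\theta_{e'}/2)\,\mathit{Re}\bigl(D_{e'}^{1/2}F(z_{e'})\bigr)$ with $\eta_k=\varepsilon_b(R^k(e_0))$, and the elementary identity $2\sin(\theta_{e'}/2)\,\mathit{Re}\bigl(D_{e'}^{1/2}F(z_{e'})\bigr)=p_k-q_k$ — where $p_k,q_k$ denote the two real projections appearing on the two sides of $(\star)$ — rewrites $(TF)(b)$ as $\tfrac12\sum_k\eta_k(p_k-q_k)$. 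Now $(\star)$ reads $q_k=\varepsilon(R^k(e_0))\,p_{k+1}$ while $\eta_{k+1}=\eta_k\,\varepsilon(R^k(e_0))$, so $\eta_kq_k=\eta_{k+1}p_{k+1}$ and the sum telescopes to $\tfrac12\bigl(\eta_0p_0-\eta_{\deg v}\,p_{\deg v}\bigr)$. Going once around $v$ gives $p_{\deg v}=p_0$, and $\sum_{e\in\EE_v}\beta_e=2\pi$ forces $\prod_{e\in\EE_v}q_e=-1$, hence $\eta_{\deg v}=-1$; the bracket then collapses to $p_0=(\widetilde{T}F)(b)$. Thus $\widetilde{T}=T$ on $s\mathcal{O}$, and since $T|_{s\mathcal{O}}$ is an isomorphism onto $\mathrm{ker}(\mathrm{K}^\omega)$ by Theorem~\ref{thm:s}, so is $\widetilde{T}|_{s\mathcal{O}}$.

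For part $(iii)$, writing the orthogonal projection as $\Pr(F(z_e);u)=\mathit{Re}(\bar{u}\,F(z_e))\,u$ with $u=i\exp(-\tfrac{i}{2}(a_e-\theta_e))$, a short computation gives $\widetilde{T'}=\exp(\tfrac{i}{2}\theta_B)\circ\widetilde{T}$, exactly paralleling $T'=\exp(\tfrac{i}{2}\theta_B)\circ T$. Combined with part $(ii)$ this yields $\widetilde{T'}|_{s\mathcal{O}}=T'|_{s\mathcal{O}}$ and $\widetilde{T'}(\C^\diamondsuit)=\exp(\tfrac{i}{2}\theta_B)\R^B=T'(\C^\diamondsuit)$; since Theorem~\ref{thm:s}$(iv)$ (which rests on Proposition~\ref{prop:Kast-D}) identifies $T'(s\mathcal{O})$ with $\mathrm{ker}(\overline{\partial}^{\varphi_\omega}_C)\cap T'(\C^\diamondsuit)$, the map $\widetilde{T'}$ restricts to the asserted isomorphism. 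I expect the main obstacle to lie in the sign bookkeeping of the telescoping in part $(ii)$: tracking how the square roots $D_e^{1/2}$ and the signs $\varepsilon(e)=\pm1$ depend on the Kasteleyn orientation $\omega$, and in particular deducing $\eta_{\deg v}=-1$ cleanly from $\sum_{e\in\EE_v}\beta_e=2\pi$.
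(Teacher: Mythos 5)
Your proof is correct and takes essentially the same route as the paper's: every part is reduced to Theorem~\ref{thm:s} by showing that $\widetilde{T}$ coincides with $T$ on $s\mathcal{O}$ via Equation $(\star)$, and that $\widetilde{T'}=\exp\left(\textstyle{\frac{i}{2}}\theta_B\right)\circ\widetilde{T}$. The only difference is mechanical: where the paper verifies $(I-qR)\circ D^{-1/2}\circ\psi_B\circ\widetilde{T}=2S$ on $s\mathcal{O}$ and invokes the defining relation of $T$, you expand the orbit sum defining $T$ and telescope it down to $\widetilde{T}$ -- an equivalent computation, since it amounts to running the same identity through $2(I-qR)_v^{-1}=I+qR+\dots+(qR)^{\deg(v)-1}$, and your sign bookkeeping (in particular $\eta_{\deg v}=\prod_{e\in\EE_v}\varepsilon(e)=\prod_{e\in\EE_v}q_e=-1$, the $D^{1/2}$-factors cancelling around the orbit) is sound.
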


Note that in the planar isoradial case, $\widetilde{T}(F)$ is the so-called {\em real spinor\/} associated to $F$ by Chelkak-Smirnov in~\cite[Lemma 3.4]{CS09}. To be more precise,
the vertices of $C_\G$ play the role of the vertices of the double cover $\widehat{\Upsilon}$ in~\cite{CS09}, the choice of this double cover, or
{\em spin structure\/}~\cite[Definition 9]{Mer}, corresponding to the choice of a Kasteleyn orientation on $C_\G$ (see~\cite{C-RI}).
Furthermore, the fact that $\widetilde{T}(F)$ lies in the kernel of the Kasteleyn operator translates into the {\em propagation equation\/}~\cite[Equation $(3.6)$]{CS09}, or
the {\em Dirac equation\/}~\cite[Equation $(4.5)$]{Mer} (see also~\cite[Section 4.2]{Dub}). Therefore, the statement $(ii)$ above should be understood as a generalized propagation equation, valid for any weighted surface graph.

\begin{proof}[Proof of Corollary~\ref{cor:s}]
The first point follows immediately from Theorem~\ref{thm:s}. For the second point, it remains to check that the restriction of $T$ to $s\mathcal{O}$ coincides with the map $\widetilde{T}$ (up
to a real scalar factor). This follows from the computation below, where we make use of Equation $(\star)$ in the fourth equality, and of the notation $e'=R(e)$:
\begin{align*}
((I-qR)\circ&D^{-1/2}\circ\psi_B\circ\widetilde{T})(F)(e)=(D^{-1/2}\circ\psi_B\circ\widetilde{T})(F)(e)-q_e(D^{-1/2}\circ\psi_B\circ\widetilde{T})(F)(e')\cr
	&=D_e^{-1/2}(\widetilde{T}F)(\psi_B(e))-q_e\,D_{e'}^{-1/2}(\widetilde{T}F)(\psi_B(e'))\cr
	&=D_e^{-1/2}\mathit{Re}\left(i\;D_e^{1/2}\,\exp\left(\textstyle{-\frac{i}{2}}\theta_e\right)F(z_e)\right)-q_e\,D_{e'}^{-1/2}\mathit{Re}\left(i\;D_{e'}^{1/2}\,\exp\left(\textstyle{-\frac{i}{2}}\theta_{e'}\right)F(z_{e'})\right)\cr
	&=D_e^{-1/2}\mathit{Re}\left(i\;D_e^{1/2}\,\exp\left(\textstyle{-\frac{i}{2}}\theta_e\right)F(z_e)\right)-D_e^{-1/2}\mathit{Re}\left(i\;D_e^{1/2}\,\exp\left(\textstyle{\frac{i}{2}}\theta_e\right)F(z_e)\right)\cr
	&=D_e^{-1/2}\mathit{Re}\left(D_e^{1/2}\,2\sin(\theta_e/2)F(z_e)\right)\cr
	&=2\sin(\theta_e/2)\Pr(F(z_e);D_e^{-1/2}\cdot\R)\cr
	&=2(SF)(e).
\end{align*}
As for the third point, it follows from the second one and from the computation below:
\begin{align*}
(\widetilde{T'}F)(b)&=\left(\exp\left(\textstyle{\frac{i}{2}}\theta_B\right)\circ\widetilde{T}\right)(F)(b)\cr
	&=\exp\left(\textstyle{\frac{i}{2}}\theta_e\right)\mathit{Re}\left(i\;D_e^{1/2}\,\exp\left(\textstyle{-\frac{i}{2}}\theta_e\right)F(z_e)\right)\cr
	&=i\,D_e^{1/2}\Pr\left(F(z_e);i\exp\left(\textstyle{-\frac{i}{2}}(a_e-\theta_e)\right)\right).
\end{align*}
This concludes the proof.
\end{proof}

Let us conclude this paragraph on s-holomorphicity with one final result. It is natural to wonder about the connection between (generalized) s-holomorphicity and continuous 
holomorphicity, especially on Riemann surfaces. Obviously, there is no such relation in full generality when the weights of the graph are not related to the
conformal structure of the underlying surface. In the critical isoradial case however, it is possible to use Corollary~\ref{cor:s} above together with Theorems 2.5 and 3.5
of~\cite{Cim1} to obtain convergence results. The precise definition of all the objects involved being quite cumbersome, we refer the reader to~\cite{Cim1} for details.

\begin{corollary}
\label{cor:conv}
Let $\Sigma$ be a flat surface with cone-type singularities and trivial holonomy. Consider a sequence $\G_n$ of graphs $\delta_n$-isoradially embedded in $\Sigma$ with
$\lim_n\delta_n=0$, such that all rhombi angles of all these isoradial graphs belong to the interval $[\eta,\pi-\eta]$ for some $\eta>0$.
Finally, let $F_n\in\C^{\diamondsuit_n}$ be a sequence of functions such that $\exp\left(i\textstyle{\frac{\pi}{4}}\right)F_n$ is s-holomorphic
(with respect to a constant vector field).
\begin{romanlist}
\item Assume that $F_n\in\C^{\diamondsuit_n}$ converges to $F\colon\Sigma\to\C$ in the following sense: for any sequence $z_n\in\diamondsuit_n$ converging in $\Sigma$,
$F_n(z_n)$ converges to $F(\lim_n z_n)$ in $\C$. Then $F$ is holomorphic in $\Sigma$.
\item Assume that all cone angles are odd multiples of $2\pi$ and that the discrete spinors $\widetilde{T'}F_n\in\C^{B_n}$ converge to a section $\psi$ of a fixed
spin structure, interpreted as a line bundle $L\to\Sigma$. Then, $\psi$ is a holomorphic spinor.
\end{romanlist}
\end{corollary}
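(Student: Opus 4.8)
The plan is to reduce both statements to the convergence theory for discrete holomorphic functions and spinors of~\cite{Cim1}, with Corollary~\ref{cor:s}$(iii)$ serving as the bridge. Since each $\G_n$ is isoradially embedded with critical weights and $\exp(i\frac{\pi}{4})F_n$ is s-holomorphic, Corollary~\ref{cor:s}$(iii)$ says that the real spinor $\widetilde{T'}F_n\in\C^{B_n}$ lies in the kernel of the discrete $\overline{\partial}$-operator $\overline{\partial}^{\varphi_{\omega_n}}_C$ on $C_{\G_n}$; equivalently, each $\widetilde{T'}F_n$ is a discrete holomorphic spinor. Using Proposition~\ref{prop:Kast-D}, I would choose the Kasteleyn orientations $\omega_n$ so that the discrete spin structures $\varphi_{\omega_n}$ all represent the spin structure relevant to the statement at hand (the fixed line bundle $L$ in part $(ii)$). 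The conditions $\delta_n\to 0$ and ``all rhombus angles lie in $[\eta,\pi-\eta]$'' are exactly the uniform non-degeneracy hypotheses under which the discrete holomorphic functions of~\cite{Cim1} satisfy the equicontinuity and precompactness estimates needed to pass to the limit.

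Part $(ii)$ is then the more direct of the two: the discrete $\overline{\partial}$-holomorphicity of $\widetilde{T'}F_n$ is supplied by the reduction above, and the hypothesis gives the convergence of $\widetilde{T'}F_n$ to the section $\psi$ of $L$. I would invoke the convergence statement for discrete holomorphic spinors, Theorem~3.5 of~\cite{Cim1}, whose conclusion is precisely that the limit of a converging sequence of discrete holomorphic spinors on isoradial graphs with mesh tending to $0$ and uniformly bounded angles is a holomorphic spinor; this gives that $\psi$ is holomorphic.

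For part $(i)$, I would first turn the convergence of $F_n$ into a convergence of the associated discrete holomorphic objects. Using the explicit formula
\[
(\widetilde{T'}F_n)(b)=\exp\left(\textstyle{\frac{i}{2}}\theta_e\right)\mathit{Re}\left(i\,D_e^{1/2}\exp\left(\textstyle{-\frac{i}{2}}\theta_e\right)F_n(z_e)\right)
\]
from the proof of Corollary~\ref{cor:s} and the angle bound (which keeps the coefficients in a compact set), the pointwise convergence $F_n(z_n)\to F(\lim_n z_n)$ transfers to convergence of $\widetilde{T'}F_n$. Since the holonomy is trivial, I would then work in local charts away from the cone points, where the flat spin structure trivializes and the discrete holomorphic spinor becomes an honest discrete holomorphic function; Theorem~2.5 of~\cite{Cim1} (the function-level convergence statement) shows its limit to be holomorphic. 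Identifying this local limit with $F$ up to a fixed nowhere-vanishing holomorphic factor (locally $\sqrt{dz}$), which both preserves and detects holomorphicity, and patching over the charts, yields that $F$ is holomorphic.

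The hard part will be the analytic passage to the limit, namely reconciling the convergence mode assumed here with the one needed in~\cite{Cim1}: the hypothesis of $(i)$ is convergence along sequences $z_n\to z$, which must be upgraded---using the equicontinuity of discrete holomorphic functions furnished by the angle bound---to the uniform-on-compacts convergence to which Theorems~2.5 and~3.5 apply. A closely related subtlety is to verify that the discrete values $\widetilde{T'}F_n(b)$, read along edges of differing directions that accumulate near a common point, are mutually consistent and genuinely assemble into a single section of $L$; this consistency is exactly the content of s-holomorphicity, but it must be matched against the precise notion of spinor convergence used in~\cite{Cim1}. Finally, holomorphicity should first be proved on $\Sigma$ minus its finitely many cone-type singularities and then extended across them by a removable-singularity argument, the limit functions being bounded there.
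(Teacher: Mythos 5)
Your treatment of part $\mathit{(ii)}$ is essentially the paper's argument: Corollary~\ref{cor:s}$\mathit{(iii)}$ puts $\widetilde{T'}F_n$ in the kernel of $\overline{\partial}^{\varphi_n}_{C_n}$, the discrete spin structures $\varphi_n$ are chosen to represent the fixed line bundle $L$, and the spinor convergence theorem of~\cite{Cim1} is invoked. One point you assert but do not check, and which the paper has to address explicitly, is that the hypotheses of that theorem are \emph{not} ``exactly'' satisfied by the graphs $C_{\G_n}$: the isoradial embedding of $C_\G$ has degenerate rhombi (those carrying the weight-$1$ corner edges), so the uniform angle bound $[\eta,\pi-\eta]$, which holds for $\G_n$, fails for $C_{\G_n}$. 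The paper's proof notes that Lemma 2.7 of~\cite{Cim1} still holds in this degenerate setting, so that the theorem does apply; without some such remark your reduction is incomplete.

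For part $\i$ there is a genuine gap, and it stems from a missing idea. The paper never routes part $\i$ through $C_{\G_n}$ or through spinors at all: it uses the standard fact (as in~\cite[Lemma 3.2]{CS09}) that an s-holomorphic function $F$ lies in the kernel of the discrete $\overline{\partial}$-operator on the \emph{double} graph $D_\G$, with bipartite structure $B=\diamondsuit$ and $W=V(\G)\cup V(\G^*)$. Since $D_{\G_n}$ is honestly isoradial with the same uniform angle bounds (no degenerate rhombi), Theorem 2.5 of~\cite{Cim1} applies directly to the functions $F_n$ themselves, in exactly the convergence mode assumed in the statement, and yields holomorphicity of $F$. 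Your detour through $\widetilde{T'}F_n$ cannot be repaired as sketched: the value $(\widetilde{T'}F_n)(b)$ is a single real projection of $F_n(z_e)$ onto a line whose direction $D_e^{1/2}$ depends on the direction of the edge $e$, and these directions oscillate from edge to edge without converging to anything as $\delta_n\to 0$. Consequently, pointwise convergence of $F_n$ does \emph{not} give convergence of $\widetilde{T'}F_n$ to a continuous section in any ordinary sense, and your proposed local identification of the spinor limit with $F$ ``up to a factor $\sqrt{dz}$'' is not well defined: one black vertex carries only one real number, not the complex value of $F$, so the ``consistency'' you defer to the end is precisely the analytic content that is missing. What you flag as ``the hard part'' is in fact the reason the paper uses $D_{\G_n}$ rather than $C_{\G_n}$ for part $\i$; your plan also inherits, a second time, the degenerate-rhombi issue on $C_{\G_n}$ described above.
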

\begin{proof}
Since $\Sigma$ has trivial holonomy, it makes sense to talk about a constant vector field on $\Sigma$. With respect to such a vector field, the definition of s-holomorphicity
for isoradial graphs is the usual one. As in~\cite[Lemma 3.2]{CS09}, one then checks that if $F\in\C^\diamondsuit$ is s-holomorphic, then $F$ lies in the kernel of the discrete
$\overline{\partial}$-operator $\overline\partial_D\colon\C^B\to\C^W$ associated with the double graph $D_\G$ with bipartite structure $B=\diamondsuit$ and $W=V(\G)\cup V(\G^*)$.
The first statement now follows from~\cite[Theorem 2.5]{Cim1} applied to the bipartite graphs $D_{\G_n}$.
By Corollary~\ref{cor:s} above, $\widetilde{T'}F_n$ lies in the kernel of the discrete $\overline{\partial}$-operator $\overline\partial^{\varphi_n}_{C_n}\colon\C^{B_n}\to\C^{W_n}$
associated with the graph $C_{\G_n}$, where $\varphi_n$ is a discrete spin structure on $C_{\G_n}$ which can be chosen to represent the fixed spin structure $L$. The second statement
now follows from~\cite[Theorem 3.12]{Cim1} applied to the bipartite graphs $C_{\G_n}$. (Note that even though some rhombi of $C_{\G_n}$ are degenerate, Lemma 2.7 of~\cite{Cim1}
still holds in this setting, and the theorem does apply.)
\end{proof}

\subsection{The inverse Kac-Ward operator}
\label{sub:inv}

Let $(\G,x)\subset\SI$ be an arbitrary surface graph, and as before, let us consider a fixed vector field $\lambda$ on $\SI$ with zeroes of even index in $\SI\setminus\G$.
Recall that the associated Kac-Ward operator
$\KW=\KW^\lambda\colon\C^\EE\to\C^\EE$ can be defined by $(\KW^\lambda f)(e)=\sum_{e'\in\EE}\KW^\lambda(e,e')f(e')$ for $f\in\C^\EE$ and $e\in\EE$, with coefficients
\[
\KW^\lambda(e,e')=
\begin{cases}
1&\text{if $e=e'$;}\\ 
-x_e\exp\left(\textstyle{\frac{i}{2}}\alpha_\lambda(e,e')\right)&\text{if $o(e')=t(e)$ but $e'\neq \bar{e}$;}\\
0&\text{else,}
\end{cases}
\]
with $\alpha_\lambda(e,e')$ the rotation angle in radians of the velocity vector field along $e$ followed by $e'$ with respect to the vector field $\lambda$.

As mentioned in subsection~\ref{sub:KW}, its determinant is the square of a polynomial in the weight variables. The precise result is most conveniently stated using the terminology
of homology, that we now very briefly recall. Given a surface graph $\G\subset\SI$, let $C_0$ (resp. $C_1$, $C_2$) denote the $\Z_2$-vector space with basis the set of vertices (resp. edges, faces)
of $\G\subset\SI$. Elements of $C_i$ are called {\em $i$-chains\/}. Also, let $\partial_2\colon C_2\to C_1$ and $\partial_1\colon C_1\to C_0$ denote the {\em boundary operators\/} defined
in the obvious way. Since $\partial_1\circ\partial_2$ vanishes, the space of {\em $1$-cycles\/} $\mathrm{ker}(\partial_1)$ contains the space $\partial_2(C_2)$ of
{\em $1$-boundaries\/}. The {\em first homology space\/} $H_1(\SI;\Z_2):=\mathrm{ker}(\partial_1)/\partial_2(C_2)$ turns out not to depend on $\G$, but only on $\SI$: it has dimension $2g$, where $g$ is the genus of the closed
connected orientable surface $\SI$. Note that the intersection of curves defines a non-degenerate bilinear form on $H_1(\SI;\Z_2)$, that will be denoted by
$(\alpha,\beta)\mapsto\alpha\cdot\beta$.

We shall also need the following classical result of Johnson~\cite{Joh}: given a vector field $\lambda$ on $\SI$ with zeroes of even index and a piecewise smooth curve $\gamma$ in $\SI$ avoiding
the zeroes of $\lambda$, let $\rot_\lambda(\gamma)\in 2\pi\Z$ denote the rotation angle of the velocity vector field of $\gamma$ with respect to $\lambda$. Then, given
a homology class $\alpha\in H_1(\SI;\Z_2)$ represented by the disjoint union of oriented simple closed curves $\gamma_j$, the equality
\[
(-1)^{q_\lambda(\alpha)}=\prod_j-\exp\left(\textstyle{\frac{i}{2}}\rot_\lambda(\gamma_j)\right)
\]
gives a well-defined {\em quadratic form\/} on $H_1(\SI;\Z_2)$, i.e. a map $q_\lambda\colon H_1(\SI;\Z_2)\to\Z_2$ such that
$q_\lambda(\alpha+\beta)=q_\lambda(\alpha)+q_\lambda(\beta)+\alpha\cdot\beta$ for all $\alpha,\beta\in H_1(\SI;\Z_2)$. This implies in particular that,
for any oriented closed curve $\gamma$ with $t(\gamma)$ transverse self-intersection points,
\[
-\exp\left(\textstyle{\frac{i}{2}}\rot_\lambda(\gamma)\right)=(-1)^{q_\lambda(\gamma)+t(\gamma)}.
\]
(This can be checked by smoothing out the intersection points according to the orientation of $\gamma$ and using the definition of $q_\lambda$.)

Coming back to the Kac-Ward determinant, it was showed in~\cite{Cim2} that
\[
\det\big(\KW^\lambda\big)=\Big(\sum_{\xi\in\E}(-1)^{q_\lambda(\xi)}x(\xi)\Big)^2,
\]
where $\E$ denotes the set of 1-cycles in $\G$, that is, the set of subgraphs $\xi$ of $\G$ such that each vertex of $\G$ is adjacent to an even number
of edges of $\xi$, and $x(\xi)$ stands for $\prod_{e\in\xi}x_e$. Since $q_\lambda$ is well-defined in homology, this equality can be rewritten
\[
\det\big(\KW^\lambda\big)^{1/2}=\sum_{\alpha\in H_1(\SI;\Z_2)}(-1)^{q_\lambda(\alpha)}\sum_{\genfrac{}{}{0pt}{}{\xi\in\E}{[\xi]=\alpha}}x(\xi),
\]
which easily leads to the generalized Kac-Ward formula~\eqref{eqn:KW} for $Z^{\mathit{Ising}}=\sum_{\xi\in\E}x(\xi)$.

Let us introduce some more notation. Let $\widetilde\Gamma$ be the graph obtained from $\G$ by adding the vertex $z_e$ in the middle of each edge
$e\in E$. Given $e,e'\in\EE$, we shall denote by $\mathcal{E}(e,e')$ the set of subgraphs $\xi\subset\widetilde\G$ that contain the half-edges $(z_e,t(e))$ and
$(o(e'),z_{e'})$, and such that each vertex in $V(\widetilde\G)\setminus\{z_e,z_{e'}\}$ is adjacent to an even number of edges of $\xi$.
Given any $\xi\in\mathcal{E}(e,e')$, one can resolve its crossings to obtain a disjoint union $\gamma^0_\xi\sqcup\gamma_\xi$, where $\gamma^0_\xi$
consists in a family of disjoint simple closed curves and $\gamma_\xi$ is an oriented simple curve from $z_e$ to $z_{e'}$.
Note that the complex number $(-1)^{q_\lambda(\gamma_\xi^0)}\exp\left(\textstyle{\frac{i}{2}}\rot_\lambda(\gamma_\xi)\right)$ is unchanged by merging curves
in the family above; therefore, it does not depend on the choice of the smoothing of $\xi$.

\begin{definition}
\label{def:F}
Let $F^\lambda\colon\C^\EE\to\C^\EE$ be defined by $(F^\lambda g)(e)=\sum_{e'\in\EE}F^\lambda(e,e')g(e')$ for $g\in\C^\EE$ and $e\in\EE$, with coefficients
\[
F^\lambda(e,e')=
\begin{cases}
\sum_{e\notin\xi\in\E}(-1)^{q_\lambda(\xi)}x(\xi)&\text{if $e=e'$;}\\ 
\sum_{\xi\in\mathcal{E}(e,e')}(-1)^{q_\lambda(\gamma_\xi^0)}\exp\left(\textstyle{\frac{i}{2}}\rot_\lambda(\gamma_\xi)\right)x_ex(\xi)&\text{else,}
\end{cases}
\]
where $x(\xi)$ stands for the product of the weights of all the edges of $\G$ contained in $\xi$.
\end{definition}

We are now ready to state the main result of this paragraph. Note that in the planar case, it was obtained independently (and announced first) by Lis~\cite{Lis}.

\begin{theorem}
\label{thm:inv}
For any $\lambda$, $\KW^\lambda\circ F^\lambda=\det\big(\KW^\lambda\big)^{1/2}\cdot\mathit{Id}_{\C^\EE}$.
\end{theorem}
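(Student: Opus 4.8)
The plan is to prove the operator identity entrywise, i.e. to show that for all $e,e''\in\EE$,
\[
(\KW^\lambda F^\lambda)(e,e'')=\sum_{e'\in\EE}\KW^\lambda(e,e')\,F^\lambda(e',e'')=\det(\KW^\lambda)^{1/2}\,\delta_{e,e''},
\]
where, by the determinant formula recalled above together with the normalization of the square root, $\det(\KW^\lambda)^{1/2}=\sum_{\xi\in\E}(-1)^{q_\lambda(\xi)}x(\xi)$. Writing $v=t(e)$ and inserting the coefficients of $\KW^\lambda$, the left-hand side equals $F^\lambda(e,e'')-x_e\sum_{e'}\exp(\frac{i}{2}\alpha_\lambda(e,e'))F^\lambda(e',e'')$, the sum running over $e'\in\EE_v$ with $e'\neq\bar e$. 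First I would substitute Definition~\ref{def:F} and read each resulting term as a signed, weighted count of subgraphs of $\widetilde\G$. The factor $x_e\exp(\frac{i}{2}\alpha_\lambda(e,e'))$ is exactly what is needed to prepend the oriented edge $e$ (followed by $e'$) to a configuration counted by $F^\lambda(e',e'')$, the phase adding to the path-rotation because $\rot_\lambda$ is additive under concatenation and $\alpha_\lambda(e,e')$ is by definition the rotation of $e$ followed by $e'$.

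For the diagonal entry $e=e''$, the term $F^\lambda(e,e)$ contributes precisely the cycles $\xi\in\E$ with $e\notin\xi$, each weighted by $(-1)^{q_\lambda(\xi)}x(\xi)$. In the remaining sum, prepending $e$ to a configuration of $\mathcal{E}(e',e)$ joins the initial half-edge of $e$ to the terminal half-edge $(o(e),z_e)$, closing the open path from $z_{e'}$ to $z_e$ into a loop through $e$; the outcome is a cycle $\xi\in\E$ with $e\in\xi$. Here the prefactor $-x_e$ supplies the missing weight and, crucially, converts the path-rotation phase into the correct homological sign: by Johnson's identity $-\exp(\frac{i}{2}\rot_\lambda(\gamma))=(-1)^{q_\lambda(\gamma)+t(\gamma)}$, the product of $-x_e$, the phase $\exp(\frac{i}{2}\alpha_\lambda(e,e'))$ and the sign $(-1)^{q_\lambda(\gamma^0_\xi)}\exp(\frac{i}{2}\rot_\lambda(\gamma_\xi))$ collapses to $(-1)^{q_\lambda(\xi)}x(\xi)$ for the resulting cycle. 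Summing the two contributions recovers $\sum_{\xi\in\E}(-1)^{q_\lambda(\xi)}x(\xi)=\det(\KW^\lambda)^{1/2}$, as wanted.

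For an off-diagonal entry $e\neq e''$ the same bookkeeping shows that every term now describes a configuration made of disjoint cycles together with an open path from $z_e$ to $z_{e''}$ beginning by traversing $e$; the claim is that these cancel. I would prove this by a sign-reversing involution acting at the vertex $v=t(e)$: given such a configuration, one reroutes the initial excursion of the path at $v$ (equivalently, toggles which half-edge incident to $v$ continues the path), producing a second admissible configuration of opposite contribution. Resolution-independence of $(-1)^{q_\lambda(\gamma^0_\xi)}\exp(\frac{i}{2}\rot_\lambda(\gamma_\xi))$ guarantees that the signed weight is controlled under this local modification, and Johnson's formula again governs the change of sign, so that the terms pair off and sum to zero.

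The main obstacle is precisely this off-diagonal cancellation. The delicate point is that, when the path self-intersects or meets the cyclic part at $v=t(e)$, the ``continuation edge'' $e'$ is not canonically determined by the subgraph but only by a choice of smoothing, so one must organize the sum over all smoothings and check that the involution is well-defined and genuinely sign-reversing there; keeping the phases $\exp(\frac{i}{2}\alpha_\lambda)$, the rotation numbers $\rot_\lambda$, and the quadratic form $q_\lambda$ mutually consistent throughout is the heart of the argument. An alternative route, which I would keep in reserve, is to transport the statement through Theorem~\ref{thm:corr}: under the isomorphisms of that theorem, $F^\lambda$ should correspond to the (suitably normalized) inverse Kasteleyn operator on $C_\G$, whose entries are the dimer coupling function, thereby reducing the identity to the analogous and well-documented fact for Kasteleyn operators.
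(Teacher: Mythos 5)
Your setup (entrywise verification, reading each term of $\sum_{e'}\KW^\lambda(e,e')F^\lambda(e',e'')$ as a signed count of configurations, and Johnson's identity to convert rotation phases into homological signs) matches the paper's strategy, but there is a genuine gap at the central combinatorial step, and it already appears in your diagonal case. The map ``prepend $e$ to $\xi\in\mathcal{E}(e',e)$'' is \emph{not} a bijection onto cycles containing $e$: a cycle $\xi'$ in which $v=t(e)$ has degree $2d$ arises from $2d-1$ different continuation edges $e'$. Moreover, the collapse you claim is off by a sign: carrying out the computation, the contribution of each such pair is $(-1)^{q_\lambda(\xi')+n_{\xi'}(e,e')}x(\xi')$, where $n_{\xi'}(e,e')$ counts the edges of $\xi'$ at $v$ strictly between $e$ and $e'$ --- this correction comes from the crossing number $t(\gamma')$ in Johnson's formula and the intersection term $\gamma^0_\xi\cdot\gamma'$ in the quadratic-form identity $q(\alpha+\beta)=q(\alpha)+q(\beta)+\alpha\cdot\beta$, neither of which vanishes when the closed-up path crosses other strands of $\xi'$ at $v$. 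Your argument is therefore only correct when $\deg_{\xi'}(v)=2$ (e.g.\ on cubic graphs). What rescues the general case in the paper is precisely the alternating-sum identity (its Lemma~\ref{lemma:n}): $\sum_{v\in e'\neq e}(-1)^{n_{\xi'}(e,e')}=1$, so the fiber over each $\xi'$ sums to $(-1)^{q_\lambda(\xi')}x(\xi')$ exactly once. This lemma, and the bookkeeping of $n_{\xi'}$, is the missing idea.

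The same mechanism replaces your off-diagonal involution, which is exactly the part you leave undefined (``the heart of the argument''). The paper does not pair configurations two by two; it groups them by their common image $\xi'$ under the half-edge surgery at $v$, and the fiber sum becomes $\bigl(1-\sum_{e}(-1)^{n_{\xi'}(e_1,e)}\bigr)(-1)^{q(\gamma^0_{\xi'})}\exp\left(\textstyle{\frac{i}{2}}\rot(\gamma_{\xi'})\right)x(\xi')$, which vanishes again by Lemma~\ref{lemma:n}; no choice-of-smoothing ambiguity ever needs to be resolved, because the grouping is by subgraph and the signs are computed from any one resolution (resolution-independence being noted beforehand). You would also need a third case that your dichotomy misses: when $e\neq e''$ but $t(e)=o(e'')$, the term $\KW^\lambda(e,e'')F^\lambda(e'',e'')$ mixes the \emph{cycle} expansion of the diagonal coefficient with the \emph{path} expansions of the others, and the paper handles this adjacent case by a separate computation ($X_1+X_2$ against the cycle sum), again closed by Lemma~\ref{lemma:n}. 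Finally, your fallback via Theorem~\ref{thm:corr} would require an independent combinatorial identity for the inverse Kasteleyn operator on $C_\G$ over an arbitrary orientable surface (with the correct spin-structure signs), which is not available off the shelf in this generality, so it does not provide a shortcut.
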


The following elementary lemma will be useful.

\begin{lemma}
\label{lemma:n}
Let $\xi$ be a graph containing an edge $e_1$ adjacent to a vertex $v$ of even degree. For any edge $e\neq e_1$ adjacent to $v$, let $n_\xi(e_1,e)$ denote
the number of edges of $\xi$ adjacent to $v$ strictly between $e_1$ and $e$ (on one given side). Then, $\sum_{v\in e\neq e_1}(-1)^{n_\xi(e_1,e)}=1$.
\end{lemma}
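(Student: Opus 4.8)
The plan is to strip the statement down to an elementary alternating sum by using only the cyclic order in which the edges of $\xi$ meet $v$, supplied by the orientation of $\SI$. First I would fix this (say counterclockwise) ordering of the edges of $\xi$ incident to $v$ and label them $e_1 = h_1, h_2, \dots, h_{2m}$, starting from $e_1$ and running along the chosen side, where $2m$ is the (even) degree of $v$ in $\xi$. The entire content of the lemma is that, with this labelling, $n_\xi(e_1,\cdot)$ is nothing but a position counter: the edges of $\xi$ at $v$ strictly between $e_1$ and $h_j$ on the chosen side are exactly $h_2, \dots, h_{j-1}$.

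Carrying this out, for $e = h_j$ with $2 \le j \le 2m$ one reads off $n_\xi(e_1, h_j) = j-2$, and substituting into the sum gives
\[
\sum_{v \in e \neq e_1} (-1)^{n_\xi(e_1,e)} = \sum_{j=2}^{2m} (-1)^{j-2} = \sum_{j=2}^{2m} (-1)^{j}.
\]
This is a sum of $2m-1$ alternating terms that begins and ends with $+1$ (because $2m$ is even), hence equals $1$. The cleanest way to record this is to note $\sum_{j=0}^{2m}(-1)^j = 1$ and strip off the $j=0$ and $j=1$ terms, whose contributions $+1$ and $-1$ cancel.

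I do not expect a genuine obstacle here; the only point that must be stated carefully is the bookkeeping of the index set, namely that the sum ranges over the edges of $\xi$ meeting $v$ (not over all edges of the ambient graph at $v$) and that $n_\xi$ counts only edges of $\xi$. It is worth checking independence of the chosen side: counting on the opposite side replaces $n_\xi(e_1,h_j) = j-2$ by $2m-j$, and since $(-1)^{2m}=1$ one has $\sum_{j=2}^{2m}(-1)^{2m-j} = \sum_{j=2}^{2m}(-1)^j$, so the value is unchanged. Finally I would emphasize that the even-degree hypothesis is precisely what is needed: the same computation with an odd number of $\xi$-edges at $v$ yields $0$ rather than $1$, and it is this dichotomy that will drive the cancellations in the proof of Theorem~\ref{thm:inv}.
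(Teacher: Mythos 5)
Your proof is correct and follows essentially the same route as the paper: enumerate the edges of $\xi$ at $v$ cyclically starting from $e_1$, observe that $n_\xi(e_1,e_j)=j-2$, and evaluate the alternating sum $\sum_{j=2}^{2d}(-1)^{j-2}=1$. Your explicit verification that the choice of side does not matter is a slight elaboration of the paper's one-line remark (that even degree makes the parity of $n_\xi$ side-independent), but it is the same observation.
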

\begin{proof}
First note that the degree of $v$ being even, the parity of $n_\xi(e_1,e)$ does not depend on the choice of the side. Enumerating the edges $e_1,e_2,\dots,e_{2d}$ cyclically around $v$, we obtain $\sum_{v\in e\neq e_1}(-1)^{n_\xi(e_1,e)}=\sum_{j=2}^{2d}(-1)^{j-2}=1$.
\end{proof}

\begin{proof}[Proof of Theorem~\ref{thm:inv}]
Let $\lambda$ be a fixed vector field, that will be dropped from the notations for simplicity. We need to check that the coefficients of $\KW$ and of $F$
satisfy the relation
\[
\sum_{e\in\EE}\KW(e_1,e)F(e,e_2)=\delta_{e_1,e_2}\sum_{\xi\in\E}(-1)^{q(\xi)}x(\xi)
\]
for all $e_1,e_2\in\EE$. We will distinguish three cases.

Let us first assume that $e_1$ and $e_2$ coincide. In this case, $\sum_{e\in\EE}\KW(e_1,e)F(e,e_1)$ is by definition equal to
\[
\sum_{\genfrac{}{}{0pt}{}{\xi'\in\E}{e_1\notin\xi'}}(-1)^{q(\xi')}x(\xi')-\sum_{\genfrac{}{}{0pt}{}{o(e)=t(e_1)}{e\neq\bar{e}_1}} \sum_{\xi\in\mathcal{E}(e,e_1)}(-1)^{q(\gamma^0_\xi)}\exp\left(\textstyle{\frac{i}{2}(\alpha(e_1,e)+\rot(\gamma_\xi))}\right)x_{e_1}x_ex(\xi).
\]
Given $\xi\in\mathcal{E}(e,e_1)$, let $\xi'$ (resp. $\gamma_\xi'$) denote the 1-cycle obtained from $\xi$ (resp. $\gamma_\xi$) by adding the half-edges $(z_{e_1},t(e_1))$ and $(o(e),z_e)$.
The assignment $\xi\mapsto\xi'$ defines a map from $\E(e,e_1)$ to the set of 1-cycles containing $e_1$, and it satisfies $x_{e_1}x_ex(\xi)=x(\xi')$.
Furthermore,
\[
-\exp\left(\textstyle{\frac{i}{2}(\alpha(e_1,e)+\rot(\gamma_\xi))}\right)=-\exp\left(\textstyle{\frac{i}{2}\rot(\gamma_\xi')}\right)=(-1)^{q(\gamma'_\xi)+t(\gamma_\xi')}.
\]
The homology class of $\gamma'_\xi$ satisfies $[\gamma'_\xi]=[\gamma^0_\xi]+[\xi']$. Since $q$ is a quadratic form, we get the following equalities modulo 2:
\[
q(\gamma^0_\xi)+q(\gamma'_\xi)+t(\gamma_\xi')=q(\xi')+\gamma^0_\xi\cdot\xi'+t(\gamma'_\xi)=q(\xi')+\gamma^0_\xi\cdot\gamma'_\xi+t(\gamma'_\xi)=q(\xi')+n_{\xi'}(e_1,e),
\]
where $n_{\xi'}(e_1,e)$ denotes the number of edges of $\xi'$ adjacent to $t(e_1)=o(e)=:v$ strictly between $e_1$ and $e$. Using Lemma~\ref{lemma:n}, we get
\begin{align*}
\sum_{e\in\EE}\KW(e_1,e)F(e,e_1)&=\sum_{\genfrac{}{}{0pt}{}{\xi'\in\E}{e_1\notin\xi'}}(-1)^{q(\xi')}x(\xi')+\sum_{\genfrac{}{}{0pt}{}{\xi'\in\E}{e_1\in\xi'}}(-1)^{q(\xi')}x(\xi')
\Big(\sum_{\genfrac{}{}{0pt}{}{e\in\xi'}{v\in e\neq e_1}}(-1)^{n_{\xi'}(e_1,e)}\Big)\\
	&=\sum_{\xi'\in\E}(-1)^{q(\xi')}x(\xi')
\end{align*}
as expected.

Let us now consider the case of $e_1\neq e_2$ with $t(e_1)\neq o(e_2)$. By definition,
\begin{align*}
\sum_{e\in\EE}\KW(e_1,e)F(e,e_2)=&
\sum_{\xi'\in\E(e_1,e_2)}(-1)^{q(\gamma^0_{\xi'})}\exp\left(\textstyle{\frac{i}{2}\rot(\gamma_{\xi'})}\right)x(\xi')\;+\cr
	&-\sum_{\genfrac{}{}{0pt}{}{o(e)=t(e_1)}{e\neq\bar{e}_1}} \sum_{\xi\in\mathcal{E}(e,e_2)}(-1)^{q(\gamma^0_\xi)}
\exp\left(\textstyle{\frac{i}{2}(\alpha(e_1,e)+\rot(\gamma_\xi))}\right)x_{e_1}x_ex(\xi).
\end{align*}
Given $\xi\in\E(e,e_2)$, let $\xi'$ (resp. $\gamma_\xi'$) denote the subgraph (resp. the oriented curve) obtained from $\xi$ (resp. $\gamma_\xi$) 
by adding the half-edges $(z_{e_1},t(e_1))$ and $(o(e),z_e)$. This defines a map from $\E(e,e_2)$ to $\E(e_1,e_2)$ such that $x_{e_1}x_ex(\xi)=x(\xi')$.
Furthermore,
\[
-\exp\left(\textstyle{\frac{i}{2}(\alpha(e_1,e)+\rot(\gamma_\xi)-\rot(\gamma_{\xi'})}\right)=-\exp\left(\textstyle{\frac{i}{2}\rot(\gamma)}\right)=(-1)^{q(\gamma)+t(\gamma)},
\]
where $\gamma$ is the closed oriented curve obtained by following $\gamma'_\xi$ and then $-\gamma_{\xi'}$. Since its homology class satisfies
$[\gamma]=[\gamma'_\xi+\gamma_{\xi'}]=[\gamma^0_\xi]+[\gamma^0_{\xi'}]$, we have
\[
q(\gamma^0_\xi)+q(\gamma)+t(\gamma)=q(\gamma^0_{\xi'})+\gamma^0_\xi\cdot\gamma^0_{\xi'}+t(\gamma)=q(\gamma^0_{\xi'})+n_{\xi'}(e_1,e).
\]
This leads to
\[
\sum_{e\in\EE}\KW(e_1,e)F(e,e_2)=
\sum_{\xi'\in\E(e_1,e_2)}(-1)^{q(\gamma^0_{\xi'})}\exp\left(\textstyle{\frac{i}{2}\rot(\gamma_{\xi'})}\right)x(\xi')
\Big(1-\sum_{\genfrac{}{}{0pt}{}{e\in\xi'}{t(e_1)\in e\neq e_1}}(-1)^{n_{\xi'}(e_1,e)}\Big),
\]
which vanishes by Lemma~\ref{lemma:n}.

Let us finally assume that $e_1\neq e_2$ and $t(e_1)=o(e_2)=v$. This time, we have
\[
\sum_{e\in\EE}\KW(e_1,e)F(e,e_2)=\exp\left(\textstyle{\frac{i}{2}\alpha(e_1,e_2)}\right)x_{e_1}\Big(X_1+X_2-\sum_{\genfrac{}{}{0pt}{}{\xi'\in\E}{e_2\notin\xi'}}(-1)^{q(\xi')}x(\xi')\Big),
\]
where
\[
X_1=\sum_{\xi\in\mathcal{E}(e_1,e_2)}(-1)^{q(\gamma^0_\xi)}\exp\left(\textstyle{\frac{i}{2}(-\alpha(e_1,e_2)+\rot(\gamma_\xi))}\right)x(\xi)
\]
and
\[
X_2=-\sum_{\genfrac{}{}{0pt}{}{o(e)=v}{e\neq\bar{e}_1,e_2}}\sum_{\xi\in\mathcal{E}(e,e_1)}(-1)^{q(\gamma^0_\xi)}\exp\left(\textstyle{\frac{i}{2}(-\alpha(e_1,e_2)+\alpha(e_1,e)+\rot(\gamma_\xi))}\right)x_ex(\xi).
\]
Let us deal with the first term. Given $\xi\in\mathcal{E}(e_1,e_2)$, let $\xi'\in\E$ (resp. $\gamma_\xi'$) denote the 1-cycle obtained from $\xi$ (resp. $\gamma_\xi$) by removing the half-edges $(z_{e_1,}v)$ and $(v,z_{e_2})$. We have $x(\xi)=x(\xi')$, while
\[
\exp\left(\textstyle{\frac{i}{2}(-\alpha(e_1,e_2)+\rot(\gamma_\xi))}\right)=-\exp\left(\textstyle{\frac{i}{2}\rot(\gamma_\xi')}\right)=(-1)^{q(\gamma'_\xi)+t(\gamma_\xi')}=(-1)^{q(\xi')+n_{\xi'}(e_1,e_2)}
\]
as in the first step of the proof. Therefore,
\[
X_1=\sum_{\genfrac{}{}{0pt}{}{\xi'\in\E}{e_1,e_2\notin\xi'}}(-1)^{q(\xi')+n_{\xi'}(e_1,e_2)}x(\xi').
\]
Let us now deal with the second term. This time, let us assign to $\xi\in\E(e,e_2)$ the 1-cycle $\xi'\in\E$ obtained from $\xi$ by adding the half-edge
$(v,z_e)$ and removing $(v,z_{e_2})$. Clearly, $x_ex(\xi)=x(\xi')$. As before, we obtain
\[
-(-1)^{q(\gamma^0_\xi)}\exp\left(\textstyle{\frac{i}{2}(-\alpha(e_1,e_2)+\alpha(e_1,e)+\rot(\gamma_\xi))}\right)=(-1)^{q(\xi')+n_{\xi'}(e_1,e)},
\]
so that
\[
X_2=\sum_{\genfrac{}{}{0pt}{}{\xi'\in\E}{e_2\notin\xi'}}(-1)^{q(\xi')}x(\xi')\Big(\sum_{\genfrac{}{}{0pt}{}{v\in e\in\xi'}{e\neq\bar{e}_1,e_2}}(-1)^{n_{\xi'}(e_1,e)}\Big).
\]
Using Lemma~\ref{lemma:n} one last time, we get
\[
X_1+X_2=\sum_{\genfrac{}{}{0pt}{}{\xi'\in\E}{e_2\notin\xi'}}(-1)^{q(\xi')}x(\xi')
\Big(\sum_{\genfrac{}{}{0pt}{}{v\in e\in\xi'}{e\neq\bar{e}_1}}(-1)^{n_{\xi'}(e_1,e)}\Big)=\sum_{\genfrac{}{}{0pt}{}{\xi'\in\E}{e_2\notin\xi'}}(-1)^{q(\xi')}x(\xi').
\]
Therefore, this coefficient vanishes as well, and the proof is completed.
\end{proof}

This theorem immediately provides us with a wealth of s-holomorphic functions.
These generalize the spin-Ising fermionic observables of~\cite{CS09} and of~\cite{C-I}, which correspond to the case of critical isoradial graphs embedded in
simply-connected planar domains, and critical square lattices embedded in multiply-connected planar domains, respectively.

\begin{corollary}
\label{cor:s-holo}
Given any fixed oriented edge $e_0\in\EE$, the function $F_{e_0}\in\C^{\diamondsuit}$ defined by
\[
F_{e_0}(z)=\frac{\exp\left(i\textstyle{\frac{\pi}{4}}\right)}{\cos(\theta_e/2)}\sum_{\xi\in\E(e_0,z)}(-1)^{q(\gamma_\xi^0)}\exp\left(-\textstyle{\frac{i}{2}\rot(\gamma_\xi)}\right)x(\xi),
\]
where $z=z_e\neq z_{e_0}$ and $\E(e_0,z)=\E(e_0,e)\cup\E(e_0,\bar{e})$, is s-holomorphic around every vertex of $\G$ not adjacent to $e_0$.
\end{corollary}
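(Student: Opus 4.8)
The plan is to read the corollary off Theorem~\ref{thm:inv} through the dictionary supplied by the map $S$ of subsection~\ref{sub:ker} and the local criterion of Proposition~\ref{prop:ker}. Combining Proposition~\ref{prop:ker} with the discussion preceding Definition~\ref{def:s}, a function $F\in\C^\diamondsuit$ has the property that $\exp(i\tfrac{\pi}{4})F$ is s-holomorphic around a vertex $v$ exactly when $f=S(F)\in\L$ satisfies $(\KW f)(\bar e)=0$ for all $e\in\EE_v$. Applying this to $F=\exp(-i\tfrac{\pi}{4})F_{e_0}$, the task reduces to exhibiting $f=S(\exp(-i\tfrac{\pi}{4})F_{e_0})$, up to a global scalar placing it in $\L$, as a single column of the matrix $F^\lambda$ of Definition~\ref{def:F}, and then invoking Theorem~\ref{thm:inv}.

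First I would identify $f$ with the column $F^\lambda(\cdot,\bar e_0)$. Its entry at $e$ is a sum over $\E(e,\bar e_0)$, whereas $F_{e_0}(z_e)$ is a sum over $\E(e_0,e)\cup\E(e_0,\bar e)$; the link is the path-reversal involution, which sends a configuration with distinguished arc $\gamma_\xi$ from $z_e$ to $z_{\bar e_0}$ to the same subgraph whose arc now runs from $z_{e_0}$ to $z_e$. Under reversal the loop part keeps its $\Z_2$-homology class, so $q_\lambda(\gamma^0_\xi)$ and $x(\xi)$ are untouched, while $\rot_\lambda(\bar\gamma_\xi)=-\rot_\lambda(\gamma_\xi)$ converts the $\exp(\tfrac i2\rot_\lambda)$ of $F^\lambda$ into the $\exp(-\tfrac i2\rot_\lambda)$ of $F_{e_0}$. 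Grouping $F^\lambda(e,\bar e_0)$ with $F^\lambda(\bar e,\bar e_0)$ rebuilds the sum over $\E(e_0,e)\cup\E(e_0,\bar e)$, and the weight identity $\sin(\theta_e/2)^{-1}\,x_e\,\cos(\theta_e/2)=1$ reconciles the $\cos(\theta_e/2)^{-1}$ of $F_{e_0}$ with the $\sin(\theta_e/2)$ and $x_e=\tan(\theta_e/2)$ carried by $S$ and by $F^\lambda$. The orthogonal projection $\Pr(-;\ell(e))$ built into $S$ is precisely what isolates, among the two orthogonal families $\E(e_0,e)$ and $\E(e_0,\bar e)$, the component lying along $\ell(e)$.

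Granting this identification, Theorem~\ref{thm:inv} gives $\KW^\lambda\circ F^\lambda=\det(\KW^\lambda)^{1/2}\cdot\mathit{Id}$, so $F^\lambda(\cdot,\bar e_0)$ is mapped by $\KW^\lambda$ to a multiple of the indicator of $\bar e_0$; hence $(\KW^\lambda f)(\varepsilon)=0$ for every oriented edge $\varepsilon\neq\bar e_0$. If $v$ is an endpoint of $e_0$ for neither orientation, then for each $e\in\EE_v$ one has $\bar e\neq\bar e_0$ (otherwise $o(e_0)=v$), so $(\KW^\lambda f)(\bar e)=0$ for all $e\in\EE_v$. Proposition~\ref{prop:ker} then shows that $\exp(i\tfrac{\pi}{4})\cdot\exp(-i\tfrac{\pi}{4})F_{e_0}=F_{e_0}$ is s-holomorphic around $v$, which is the assertion.

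The crux, and the one genuinely delicate point, is the \emph{exact} matching in the second paragraph: one must check that $f$ equals the normalised column on the nose, not merely up to an extraneous unit, since s-holomorphicity is \emph{not} invariant under multiplication by a global complex phase. Concretely, $\exp(\tfrac i2\rot_\lambda(\gamma_\xi))$ equals $\pm\exp(\tfrac i2(a_{\mathrm{end}}-a_{\mathrm{start}}))$, so the endpoint arguments $a_{e_0}$ and $a_e$, the winding signs, the built-in factor $\exp(i\tfrac{\pi}{4})$, and the square root $D_e^{1/2}$ fixed by the Kasteleyn orientation must all be tracked at once and shown to leave only a real scalar. Everything else is formal, being a direct transcription of Theorem~\ref{thm:inv} through $S$.
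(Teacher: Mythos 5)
Your proposal is, step for step, the paper's own proof: the path-reversal bijection identifying $F_{e_0}$ with $\exp\left(i\frac{\pi}{4}\right)S^{-1}f$ for $f=F^\lambda(\cdot\,,\bar{e}_0)$, followed by Proposition~\ref{prop:ker}, and then Theorem~\ref{thm:inv} to kill $(\KW f)(\bar{e})$ for every $e\in\EE_v$ when $v$ is not an endpoint of $e_0$. So there is no difference of route to report; the issue is what you leave undone.

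You defer your self-declared crux --- that the phases ``leave only a real scalar'', i.e.\ that $f$ lies in $\L$ on the nose --- and this deferred step is a genuine gap, because it cannot be completed in the form you state it. Since $\gamma_\xi$ runs from $z_e$ (leaving in the direction of $e$) to $z_{e_0}$ (arriving in the direction of $\bar{e}_0$), one has $\rot_\lambda(\gamma_\xi)\equiv a_{\bar{e}_0}-a_e\pmod{2\pi}$, hence $f(e)\in\exp\left(\frac{i}{2}a_{\bar{e}_0}\right)\ell(e)$ for all $e$: the extraneous unit is $\exp\left(\frac{i}{2}a_{\bar{e}_0}\right)$, which is real only when $\bar{e}_0$ points along $\lambda(z_{e_0})$. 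And it cannot be discarded, for exactly the reason you give: the kernel equations are $\C$-linear in $f$, while $S$ is only $\R$-linear. Writing $f=\exp\left(\frac{i}{2}a_{\bar{e}_0}\right)g$ with $g\in\L$, what Proposition~\ref{prop:ker} and Theorem~\ref{thm:inv} actually yield is that $\exp\left(-\frac{i}{2}a_{\bar{e}_0}\right)F_{e_0}$ is s-holomorphic away from $e_0$; s-holomorphicity of $F_{e_0}$ itself would require, in addition, that the reflected vector $Jf$ satisfy the same local kernel equations, and this fails in general: already on a single quadrilateral with constant $\lambda$, constant weight $x$, and $e_0$ pointing along $\lambda$, the equality of Definition~\ref{def:s} for $F_{e_0}$ at the two vertices not adjacent to $e_0$ forces $x^2=1$, whereas $-iF_{e_0}=\exp\left(-\frac{i}{2}a_{\bar{e}_0}\right)F_{e_0}$ does satisfy it. In fairness, the paper's printed proof is silent on exactly the same point --- it writes $F_{e_0}=\exp\left(i\frac{\pi}{4}\right)(S^{-1}f)$ and applies Proposition~\ref{prop:ker}, which presupposes $f\in\L$ --- so your write-up is no less complete than the paper's, and you located the real difficulty correctly. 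But closing the argument requires either inserting the constant $D_{\bar{e}_0}^{-1/2}$ into the definition of $F_{e_0}$ (harmless, since s-holomorphic functions form a real vector space, so the sign ambiguity is immaterial), or choosing the vector field $\lambda$ adapted to $e_0$ so that $D_{\bar{e}_0}=1$; with the phases left untracked, both your argument and the paper's establish the corollary only up to this constant rotation.
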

\begin{proof}
First note that there is a bijection between $\E(e_0,z)$ and $\E(z,\bar{e}_0)=\E(e,\bar{e}_0)\cup\E(\bar{e},\bar{e}_0)$ reversing the orientation of $\gamma_\xi$.
Therefore, for $z=z_e\neq z_{e_0}$,
\[
F_{e_0}(z)
=\frac{\exp\left(i\textstyle{\frac{\pi}{4}}\right)}{\sin(\theta_e/2)}\sum_{\xi\in\E(z,\bar{e}_0)}(-1)^{q(\gamma_\xi^0)}\exp\left(\textstyle{\frac{i}{2}\rot(\gamma_\xi)}\right)x_ex(\xi)
=\exp\left(i\textstyle{\frac{\pi}{4}}\right)(S^{-1}f)(z),
\]
with $f\in\C^\EE$ defined by $f(e)=F(e,\bar{e}_0)$. By Proposition~\ref{prop:ker},
$F_{e_0}$ is s-holomorphic around a vertex $v$ if and only if $(\KW f)(\bar{e})$ vanishes for all $\bar{e}\in\EE_v$.
For $v$ not adjacent to $e_0$, we have $(\KW f)(\bar{e})=(\KW\circ F)(\bar{e},\bar{e}_0)=0$ by Theorem~\ref{thm:inv}.
\end{proof}

This theorem also has a direct link to the Ising model. Indeed, consider a locally finite planar graph $\mathcal{G}$ with edge weights
$J=(J_e)_e\in(0,\infty)^{E(\mathcal{G})}$ and faces that are bounded topological discs. Let us assume that this weighted graph is {\em biperiodic\/},
i.e. invariant under the action of a lattice $L\simeq\Z^2$. 

\begin{corollary}
\label{cor:crit}
There exists a non-trivial biperiodic s-holomorphic function on the weighted graph $(\mathcal{G},x)$, where $x_e=\tanh(\beta J_e)$, if and only if $\beta$ is the
critical inverse temperature for the Ising model on $(\mathcal{G},J)$.
\end{corollary}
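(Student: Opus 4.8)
The plan is to reduce the existence of a nontrivial biperiodic s-holomorphic function to a vanishing condition on the Kac-Ward determinant of the quotient torus graph, and then to invoke~\cite{C-D} to identify that condition with criticality. First I would pass from the infinite biperiodic graph $\mathcal{G}$ to the finite toric graph $\G_1=\mathcal{G}/L\subset\mathbb{T}^2$, choosing a constant vector field $\lambda$ (available since the flat torus carries a nowhere-vanishing vector field). Since s-holomorphicity is a purely local condition around each vertex (Definition~\ref{def:s}) and the relevant geometric data descends to the quotient, a biperiodic function $F$ on $\diamondsuit(\mathcal{G})$ is s-holomorphic everywhere if and only if its descent $F_1\in\C^{\diamondsuit(\G_1)}$ is s-holomorphic around every vertex of $\G_1$. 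The genuine $L$-invariance encoding biperiodicity corresponds precisely to the \emph{trivial} cochain on $\G_1$, that is, to the point $(z,w)=(1,1)$ in the notation of Corollary~\ref{cor:tau}.

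Next I would apply Theorem~\ref{thm:s} and Corollary~\ref{cor:s} to the torus graph. The real-linear isomorphism $S$ identifies the space $s\mathcal{O}$ of functions $F$ for which $\exp(i\frac{\pi}{4})F$ is s-holomorphic with $\L\cap\ker\KW(\G_1,x)$, the Kac-Ward operator being taken with the trivial cochain and the fixed $\lambda$. Since multiplication by $\exp(i\frac{\pi}{4})$ does not affect whether a nonzero element exists, a nontrivial biperiodic s-holomorphic function exists if and only if $\L\cap\ker\KW(\G_1,x)\neq 0$. The commutative diagram of subsection~\ref{sub:ker} shows that $\KW(\G_1,x)$ restricted to $\L$ is conjugate to the real Kasteleyn operator $\mathrm{K}^\omega(C_{\G_1},y)$, so this kernel is nontrivial exactly when $\det\mathrm{K}^\omega(C_{\G_1},y)=0$. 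By Corollary~\ref{cor:det} with trivial cochain, this determinant is a nonzero multiple of $\det\KW(\G_1,x)=P(1,1)$. Hence a nontrivial biperiodic s-holomorphic function exists if and only if $P(1,1)=0$.

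Finally I would invoke~\cite{C-D}, exactly as recalled in the proof of Corollary~\ref{cor:tau}: because $C_{\G_1}$ is bipartite, the curve $\{P=0\}$ is a special Harnack curve, $P$ is strictly positive on $\mathbb{T}^2\setminus\{(1,1)\}$, and $P(1,1)$ vanishes if and only if $\beta=\beta_c$. Combining the two equivalences yields the statement. As a concrete realization of the existence direction, one can observe that when $P(1,1)=0$ Theorem~\ref{thm:inv} forces $\det(\KW)^{1/2}=0$, so every column of $F^\lambda$ lies in $\ker\KW(\G_1,x)$; by Proposition~\ref{prop:ker} the associated fermionic observable $F_{e_0}$ of Corollary~\ref{cor:s-holo} then becomes s-holomorphic around \emph{every} vertex, providing an explicit biperiodic example.

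I expect the main obstacle to be the bookkeeping of the reduction to the torus: checking that s-holomorphicity and the Kac-Ward kernel descend correctly to $\G_1$, that biperiodicity matches the trivial character $(z,w)=(1,1)$ rather than some nontrivial one, and that the constant vector field on $\mathbb{T}^2$ is compatible with the spin-structure conventions under which $\det\KW(\G_1,x)=P(1,1)$ in~\cite{C-D}. Everything else is either elementary linear algebra (nontrivial kernel $\Leftrightarrow$ vanishing determinant) or a direct citation.
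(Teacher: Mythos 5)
Your proposal is correct and follows essentially the same route as the paper's own (very terse) proof: pass to the finite toric graph $\G_1=\mathcal{G}/L$ with the constant vector field and the trivial character, use Theorem~\ref{thm:s} to identify nontrivial biperiodic s-holomorphic functions with a nontrivial Kac-Ward kernel, and quote~\cite{C-D} for the equivalence between the vanishing of $\det\KW(\G_1,x)$ and criticality. The one place you diverge is how the kernel is tied to the determinant: you pass to the real Kasteleyn operator and use that a real square matrix has nontrivial kernel exactly when its determinant vanishes, then invoke Corollary~\ref{cor:det} to identify $\det\mathrm{K}^\omega(C_{\G_1},y)$ with a nonzero multiple of $P(1,1)$, whereas the paper cites Theorem~\ref{thm:inv} instead, i.e.\ produces kernel elements at criticality from the columns of $F^\lambda$. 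Your route is, if anything, tighter: the Theorem~\ref{thm:inv} argument (which you relegate to a supplementary remark, and which is implicitly the paper's existence direction) requires the additional check that some column of $F^\lambda$ — equivalently some observable $F_{e_0}$ — is nonzero at the critical point, a nontriviality statement that neither you nor the paper verifies; your main chain of equivalences needs no such input, since the linear-algebra step delivers a nonzero element of $\ker\mathrm{K}^\omega$, hence of $s\mathcal{O}$, for free.
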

\begin{proof}
A biperiodic weighted graph $(\mathcal{G},x)$ as above is equivalent to a finite weighted surface graph $(\G,x)=(\mathcal{G}/L,x)$ embedded in the torus
$\C/L$. In the recent paper~\cite{C-D}, Duminil-Copin and the author identified the critical inverse temperature $\beta_c$ for the Ising model
on $(\mathcal{G},J)$ as the only value of $\beta$ such that $\det(\KW^\lambda(\G,x))$ vanishes. The result therefore follows from Theorems~\ref{thm:s} and~\ref{thm:inv}.
\end{proof}

\subsection{The integral of $F^2$}
\label{sub:F2}

The aim of this last paragraph is to show that one of the most remarkable facts discovered in~\cite{CS09} extends to our setting: the existence of a discrete version of
$\Im\int f(z)^2\mathit{dz}$ for s-holomorphic functions.

Let us fix a weighted surface graph $(\G,x)\subset\SI$ with weights parametrized by $x_e=\tan(\theta_e/2)$, and let us assume that dual edges $e$ and $e^*$ meet orthogonally, so that $D_{e^*}=iD_e$.
Given an s-holomorphic function $F\in\C^\diamondsuit$, define $H\in\C^\Lambda=\C^{V(\G)}\oplus\C^{V(\G^*)}$ by
\[
H(v)-H(w):=2\left|\Pr\left(F(z);\left[i\exp(i(a_e+\theta_e))\right]^{-\frac{1}{2}}\right)\right|^2=2\left|\Pr\left(F(z');\left[i\exp(i(a_{e'}-\theta_{e'}))\right]^{-\frac{1}{2}}\right)\right|^2
\]
for $v\in V(\G)$ and $w\in V(\G^*)$, with the notations illustrated below. Note that the equality above follows from the very definition of s-holomorphicity.

\begin{figure}[h]
\labellist\small\hair 2.5pt
\pinlabel {$v$} at -30 200
\pinlabel {$w$} at 245 190
\pinlabel {$e$} at 300 50
\pinlabel {$e'$} at 300 280
\pinlabel {$z$} at 136 74
\pinlabel {$z'$} at 150 300
\endlabellist
\begin{center}
\includegraphics[height=2.5cm]{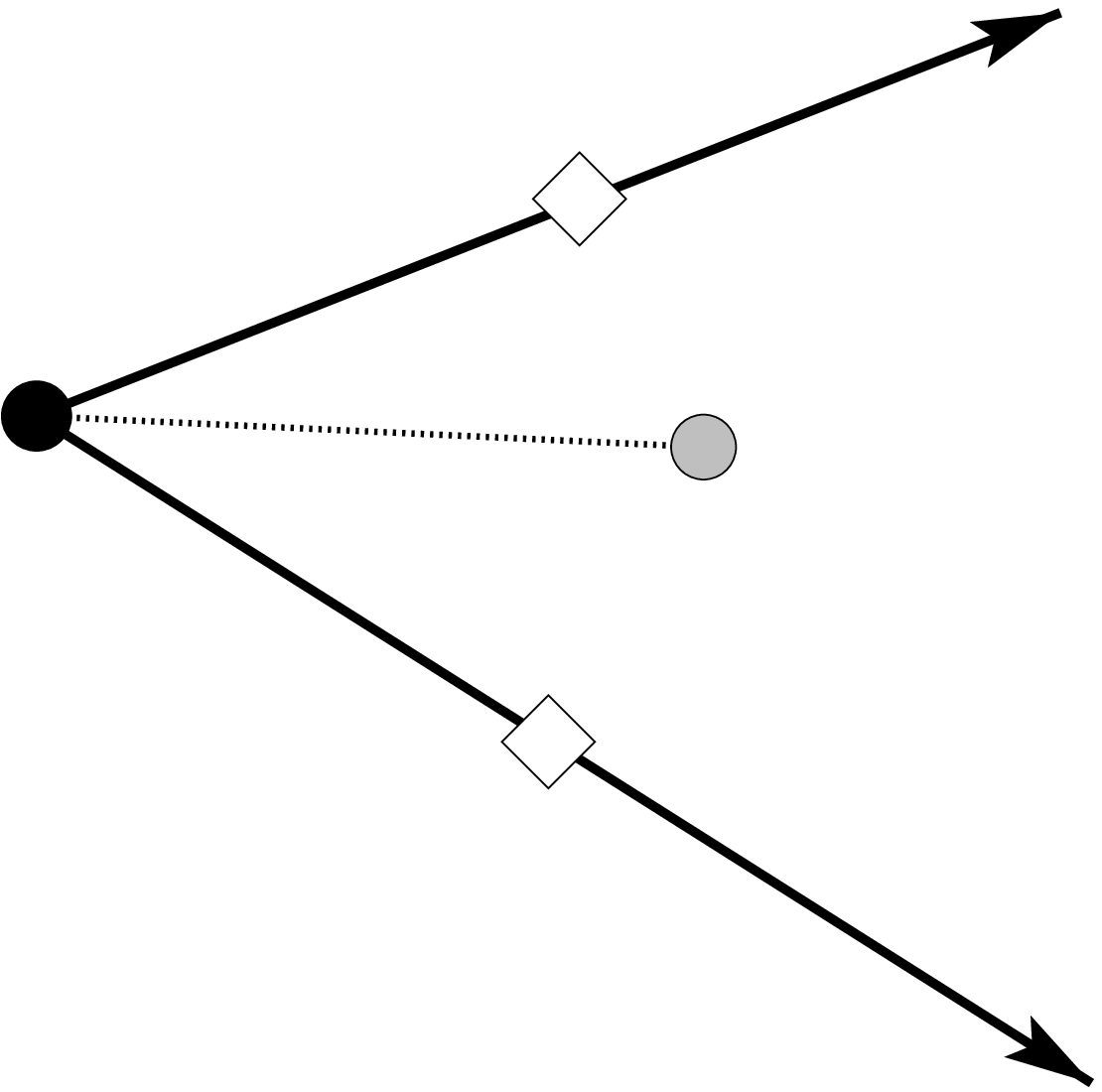}
\end{center}
\end{figure}

\begin{proposition}
\label{prop:F2}
The restriction of $H\in\C^\Lambda$ to any simply connected domain in $\SI$ is well-defined up to an additive constant. Moreover, it satisfies the equation
\[
H(v_2)-H(v_1)=\Im\left(2\cos(\theta_e)D_e\,F(z_e)^2\right)
\]
for any $v_1,v_2\in V(\G)$ (resp. $V(\G^*)$) linked by $e=(v_1,v_2)\in\EE(\G)$ (resp. $\EE(\G^*)$).
\end{proposition}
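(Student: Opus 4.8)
The plan is to read the defining prescription for $H$ as a discrete $1$-form and to prove that this form is closed, after which the formula will drop out of the same computation. Recall that the quadrilateral faces of $D_\G=\G\cup\G^*$ are in bijection with the ``corners'' of $\G\subset\SI$: each such face has two opposite corners $v\in V(\G)$, $w\in V(\G^*)$ and two opposite corners $z_e,z_{e'}\in\diamondsuit$, where $e$ and $e'=R(e)$ are the two edges of $\G$ issuing from $v$ and bordering $w$. The defining relation thus assigns to every face of $D_\G$ a real increment $H(v)-H(w)$, i.e. a $1$-form $\eta$ on the diagonal graph $G'$ with vertex set $\Lambda=V(\G)\cup V(\G^*)$ whose edges join the two $\Lambda$-corners $v,w$ of each face of $D_\G$. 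Since a function with prescribed gradient $\eta$ exists, uniquely up to a constant, on a simply connected domain exactly when $\eta$ is closed there, and since the faces of $G'$ are precisely the quadrilaterals $v_1,w_1,v_2,w_2$ surrounding the crossings $z_e\in\diamondsuit$, the first assertion reduces to checking that the four increments around each $z_e$ telescope to zero.

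First I would make these four increments explicit. Each of the four faces of $D_\G$ around a fixed $z_e$ has $z_e$ as one of its two $\diamondsuit$-corners, so by expressing its increment through $z_e$ (the two expressions for a given increment agree precisely by s-holomorphicity), all four become $2|\Pr(F(z_e);\ell_k)|^2$ with $\ell_k=u_k\cdot\R$ a real line read off from Definition~\ref{def:s}. Writing the unit generators as $u_k$ and using $a_{\bar e}=a_e+\pi$, a direct computation of the four arguments gives $\bar u_4^{\,2}=-\bar u_1^{\,2}$ and $\bar u_3^{\,2}=-\bar u_2^{\,2}$, that is $\ell_1\perp\ell_4$ and $\ell_2\perp\ell_3$. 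Combining this with the elementary identity $|\Pr(F;u\R)|^2=\tfrac12(|F|^2+\Re(\bar u^2F^2))$ valid for $|u|=1$, the signed sum of the four increments around $z_e$ equals
\[
\Re\big((-\bar u_1^{\,2}-\bar u_4^{\,2}+\bar u_2^{\,2}+\bar u_3^{\,2})\,F(z_e)^2\big)=0,
\]
the four $|F(z_e)|^2$ terms cancelling by parity. Geometrically this is just Pythagoras: $|\Pr(F;\ell)|^2+|\Pr(F;\ell^{\perp})|^2=|F|^2$. Hence $\eta$ is closed and $H$ is well defined up to an additive constant on every simply connected domain.

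For the formula I would evaluate $\eta$ along the two half-steps joining the primal endpoints $v_1,v_2$ of $e$ through a common dual neighbour $w$; telescoping gives $H(v_2)-H(v_1)=2|\Pr(F(z_e);\ell_3)|^2-2|\Pr(F(z_e);\ell_1)|^2$. The same projection-square identity turns this into $\Re\big((\bar u_3^{\,2}-\bar u_1^{\,2})F(z_e)^2\big)$, and substituting $\bar u_1^{\,2}=iD_e\exp(i\theta_e)$ and $\bar u_3^{\,2}=-iD_e\exp(-i\theta_e)$ yields $\bar u_3^{\,2}-\bar u_1^{\,2}=-2i\cos(\theta_e)D_e$. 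Since $\Re(-iz)=\Im z$, this is exactly $\Im\big(2\cos(\theta_e)D_e\,F(z_e)^2\big)$, as claimed. The dual case is identical after telescoping through a common primal neighbour; here one invokes the standing orthogonality assumption $D_{e^*}=iD_e$ together with $\theta_{e^*}=\pi/2-\theta_e$ (so $\cos\theta_{e^*}=\sin\theta_e$) and the convention that $\vec{e^*}$ is $\vec e$ turned counterclockwise, which fixes the sign and reproduces $\Im\big(2\cos(\theta_{e^*})D_{e^*}F(z_e)^2\big)$.

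The one genuinely delicate point is the bookkeeping in the first two steps: correctly reading each of the four lines $\ell_k$ off Definition~\ref{def:s} — in particular deciding which face uses $+\theta_e$ and which uses $-\theta_e$, according to whether the relevant oriented edge plays the role of $e$ or of $e'=R(e)$ — and tracking the $+\pi$ shifts coming from reversing orientations and from the counterclockwise duality convention. Once the arguments of the $u_k$ are pinned down, everything collapses to the two trigonometric identities above, and no further input is needed beyond s-holomorphicity (which ensures the two expressions defining each increment coincide) and the orthogonality of $e$ and $e^*$.
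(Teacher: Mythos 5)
Your proposal is correct and takes essentially the same route as the paper: both arguments reduce everything to the four increments around a fixed $z_e$, rewritten through $z_e$ via s-holomorphicity, and to the identity $2|\Pr(F;u\R)|^2=|F|^2+\Re(\bar{u}^2F^2)$ together with $D_{\bar{e}}=-D_e$, which yields the closedness around each quadrilateral face and the stated formula (including the dual case via $D_{e^*}=iD_e$, $\theta_{e^*}=\pi/2-\theta_e$). The only difference is organizational: you establish closedness of the corner $1$-form first, via the orthogonal pairing of the four lines, and then telescope to get the formula, whereas the paper computes $H(v_2)-H(v_1)$ through both dual neighbours $w_1$ and $w_2$ and observes that the two answers coincide --- the same computation in a different order.
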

\begin{proof}
Fix $z=z_e\in\diamondsuit$, and let $\theta=\theta_e$ denote the corresponding angle. Also, let $(v_1,w_1,v_2,w_2)$ denote the four vertices in $\Lambda$ around $z$ in counterclockwise order
with $e=(v_1,v_2)\in\EE(\G)$, so that $e^*=(w_1,w_2)\in\EE(\G^*)$. By definition of $H$ and of $D_e$, we have
\begin{align*}
H(v_2)-&H(v_1)=(H(v_2)-H(w_2))-(H(v_1)-H(w_2))\cr
	&=2\left|\Pr\left(F(z)\,;\,\left[i\exp(i(a_{\bar{e}}-\theta))\right]^{-1/2}\right)\right|^2-2\left|\Pr\left(F(z)\,;\,\left[i\exp(i(a_e+\theta))\right]^{-1/2}\right)\right|^2\cr
	&=2\,\Re\left((i\exp(-i\theta))^{1/2}D_{\bar{e}}^{1/2}F(z)\right)^2-2\,\Re\left((i\exp(i\theta))^{1/2}D_e^{1/2}F(z)\right)^2\cr
	&=\frac{F(z)^2}{2}\left(i\exp(-i\theta)D_{\bar{e}}-i\exp(i\theta)D_e\right)+\frac{\overline{F(z)}^2}{2}\left(-i\exp(i\theta)\overline{D_{\bar{e}}}+i\exp(-i\theta)\overline{D_e}\right)\cr
	&=\cos(\theta)\left(-i\,D_e\,F(z)^2+i\,\overline{D_e}\,\overline{F(z)}^2\right)\cr
	&=\Im\left(2\cos(\theta)D_e\,F(z)^2\right),
\end{align*}
using the fact that $D_{\bar{e}}=-D_e$. A similar computation leads to
\[
(H(v_2)-H(w_1))-(H(v_1)-H(w_1))=\Im\left(2\cos(\theta)D_e\,F(z)^2\right).
\]
By simple connectivity, this shows that $H$ is well-defined on $V(\G)$ up to an additive constant. Similarly, one computes
\[
(H(v_1)-H(w_1))-(H(v_1)-H(w_2))=\Re\left(2\sin(\theta)D_e\,F(z)^2\right)
\]
and
\[
(H(v_2)-H(w_1))-(H(v_2)-H(w_2))=\Re\left(2\sin(\theta)D_e\,F(z)^2\right).
\]
Since $e^*=(w_1,w_2)$, $\theta_{e^*}=\frac{\pi}{2}-\theta$ and $D_{e^*}=iD_e$, we have
\[
H(w_2)-H(w_1)=\Im\left(2\cos(\theta_{e^*})D_{e^*}\,F(z)^2\right).
\]
This completes the proof.
\end{proof}

Let us conclude this section, and this paper, with one last remark and some comments about possible future work.
In the isoradial case, it is showed in~\cite{CS09} that $H$ is discrete superharmonic on $\G$ and subharmonic on $\G^*$.
More precisely, given any vertex $v\in V(\G)$, let us denote by $\theta_1,\dots,\theta_n$ the adjacent half-rhombus angles. Then,
\[
(\Delta_\G H)(v)=-\frac{1}{\mu_v}Q^{(n)}_{\theta_1;\dots;\theta_n}(t_1,\dots,t_n),
\]
where $Q^{(n)}_{\theta_1;\dots;\theta_n}$ is some explicit quadratic form depending on the parameters $\theta_1,\dots,\theta_n$, and $t_1,\dots,t_n$ are real numbers.
For $\theta_1,\dots,\theta_n>0$ with $\theta_1+\dots+\theta_n=\pi$, Chelkak-Smirnov prove that this quadratic form is non-negative, which implies the claim.

It turns out that even in the most general case of an arbitrary weighted surface graph, the equality displayed above still holds, where $\Delta_\G$ is the associated discrete Laplacian
(Definition~\ref{def:Delta}). Moreover, $Q^{(n)}_{\theta_1;\dots;\theta_n}$ is easily seen to remain non-negative when the $\theta_j$'s add up to an odd multiple of $\pi$.
Therefore, $H$ remains discrete superharmonic on $\G$ and subharmonic on $\G^*$ when the weighted graph $(\G,x)$ can be isoradially embedded in a flat surface
(recall Definition~\ref{def:iso}) with the given weights $x$ coinciding with the corresponding critical ones. This could possibly lead to some conformal invariance
results in the critical isoradial case in non-trivial topology.
However, the quadratic form $Q^{(n)}_{\theta_1;\dots;\theta_n}$ is obviously not non-negative for arbitrary (positive) values of the parameters $\theta_1,\dots,\theta_n$.
Therefore, the discrete superharmonicity of $H$ does {\em not\/} extend in full generality. In particular, the extension of the results of Chelkak-Smirnov to arbitrary biperiodic
graphs at criticality will require the use of another (less naive) discretization of the Laplacian.


\bibliographystyle{plain}

\bibliography{Ising}

\end{document}